\title{The Isomorphism Problem of Power Graphs and a Question of Cameron}
\author{Bireswar Das}{IIT Gandhinagar, India}{bireswar@iitgn.ac.in}{}{}
\author{Jinia Ghosh}{IIT Gandhinagar, India}{jiniag@iitgn.ac.in}{}{}
\author{Anant Kumar}{IIT Gandhinagar, India}{kumar_anant@iitgn.ac.in}{}{}
\authorrunning{B. Das, J. Ghosh and A. Kumar} 
\keywords{Graph Isomorphism, 
Graphs defined on Groups, 
Power Graph, 
Enhanced Power Graph, 
Directed Power Graph, 
Nilpotent Groups
} 
\patchcmd{\BR@backref}{\newblock}{\newblock($\uparrow$~}{}{}
\patchcmd{\BR@backref}{\par}{)\par}{}{}
\begin{document}

\maketitle
\begin{abstract}
The isomorphism problem for graphs (GI) and the isomorphism problem for groups (GrISO) have been studied extensively by researchers. The current best algorithms for both these problems run in quasipolynomial time. In this paper, we study the isomorphism problem of graphs that are defined in terms of groups, namely power graphs, directed power graphs,  and enhanced power graphs. It is not enough to check the isomorphism of the underlying groups to solve the isomorphism problem of such graphs as the power graphs (or the directed power graphs or the enhanced power graphs) of two nonisomorphic groups can be isomorphic. Nevertheless, it is interesting to ask if the underlying group structure can be exploited to design better isomorphism algorithms for these graphs. We design polynomial time algorithms for the isomorphism problems for the power graphs, the directed power graphs and the enhanced power graphs arising from finite nilpotent groups. In contrast, no polynomial time algorithm is known for the group isomorphism problem, even for nilpotent groups of class 2. 

We note that our algorithm does not require the underlying groups of the input graphs to be given. The isomorphism problems of power graphs and enhanced power graphs are solved by first computing the directed power graphs from the input graphs. The problem of efficiently computing the directed power graph from a power graph or an enhanced power graph is due to Cameron [IJGT'22]. Recently, Bubboloni and  Pinzauti [Arxiv'22] gave a polynomial time algorithm to reconstruct the directed power graph from a power graph. We give an efficient algorithm to compute the directed power graph from an enhanced power graph. The tools and techniques that we design are general enough to give a different algorithm to compute the directed power graph from a power graph as well.

\end{abstract}
\section{Introduction}

Given two graphs as input, the graph isomorphism problem (GI) is to check if the graphs are isomorphic. Despite extensive research, the complexity status of the graph isomorphism problem is still open. The best-known algorithm for GI is due to Babai, and the runtime of the algorithm is quasipolynomial \cite{babai2016graph}. The graph isomorphism problem is in NP but it is very unlikely to be NP-hard as the problem is also in coAM \cite{boppana1987does}.

Efficient algorithms for the graph isomorphism problem are known for several restricted graph classes, for example, graphs with bounded genus 
\cite{miller1980isomorphism, grohe2019linear}, graphs with bounded degree \cite{luks1982isomorphism}, graphs with bounded eigenvalue multiplicity \cite{babai1982isomorphism}, graphs with bounded tree-width \cite{bodlaender1990polynomial} and with bounded rank-width \cite{grohe2021isomorphism}.

Group theoretic tools have played an important role in the design of efficient algorithms for the graph isomorphism problem. Some of the early works of using the structure of groups non-trivially include the isomorphism algorithm for bounded degree graphs by Luks \cite{luks1982isomorphism} and the graph canonization framework developed by Babai and Luks \cite{babai1983canonical}.
Babai developed sophisticated new  techniques to give a quasipolynomial time isomorphism algorithm \cite{babai2016graph}. Group theoretic machinery has been used to design faster isomorphism algorithms for bounded degree graphs by Grohe, Neuen and Schweitzer \cite{grohe2020faster}; for graphs with bounded tree-width by Grohe, Neuen, Schweitzer, Wiebking \cite{grohe2020improved} and Wiebking \cite{wiebking2020graph};
and for bounded rankwidth graph by Grohe and Schweitzer \cite{grohe2015isomorphism} etc.

 In this paper, we study the isomorphism problem of graphs defined on finite groups. More precisely, we study the class of power graphs, directed power graphs, and enhanced power graphs. For two elements $x$ and $y$ in a finite group $G$, we say that $y$ is a power $x$ if $y=x^i$ for some integer $i$. For a group $G$, the vertex set of the \emph{power graph} $Pow(G)$ of $G$ consists of elements of $G$. Two vertices $x$ and $y$ are adjacent in $Pow(G)$ if $x$ is a power of $y$ or $y$ is a power of $x$. We refer to $G$ as the \emph{underlying group} of $Pow(G)$. The definition of directed power graphs and enhanced power graphs can be found in \Cref{prelim}.

Kelarev and Quinn defined the concept of  directed power graphs of semigroups \cite{kelarev2000combinatorial}. Power graphs were defined by Chakrabarty et al. \cite{chakrabarty2009undirected} again for semigroups. The paper by Cameron \cite{cameron2021graphs} discusses several graph classes defined in terms of groups and surveys many interesting results on these graphs. Kumar et al. \cite{kumar2021recent} gave a survey on the power graphs of finite groups. Questions related to isomorphism of graphs defined on groups have also been studied \cite{arvind2022recognizing,feng2016full}.


Our motivation for studying the isomorphism of graphs defined in terms of groups is to explore if the group structure can be exploited to give efficient algorithms for the isomorphism problems of these graphs. There are two versions of the isomorphism problem for each class of graphs defined on groups. For example, let us consider the case for the class of power graphs. In the first version of the problem, two groups $G_1$ and $G_2$ are given by their Cayley tables, and the task is to check if $Pow(G_1)$ is isomorphic to $Pow(G_2)$. In the second version, two power graphs $\Gamma_1$ and $\Gamma_2$ are given and we need to check if $\Gamma_1$ is isomorphic to $\Gamma_2$. 

In the first version of the problem, it is tempting to use the isomorphism of the underlying groups in the hope that it might yield an easier \footnote{compared to Babai's quasipolynomial time isomorphism algorithm.} quasipolynomial time algorithm because, unlike graphs, the quasipolynomial time algorithm for groups attributed to Tarjan by Miller \cite{miller1978nlog} is much easier. However, we note that it is not enough to check the isomorphism of the underlying groups, as two nonisomorphic groups can have isomorphic power graphs\footnote{ To see this, we can take the elementary abelian group of order 27 and the non-abelian group of order 27 with exponent 3 (\cite{cameron2011power}). In general, consider the power graphs of two nonisomorphic groups of order $p^i$ for any  $i\geq 2$ and exponent $p$ for some prime $p$. One can check that the power graphs are isomorphic while the groups are not.}.

The second version looks more challenging since we do not have the underlying groups. In this paper, we show that the isomorphism problem of power graphs of nilpotent groups can be tested in polynomial time even in the second version of the isomorphism problem (see \Cref{section 6 ISO}). Thus, we do not need the underlying groups to be given. 
In contrast, the group isomorphism problem for nilpotent groups, even for class 2, is still unresolved and is considered a hard instance for the group isomorphism problem \cite{grochow2021p}.

This paper makes contributions in three algebraic and combinatorial techniques, which form the foundation of our algorithms. Firstly, we introduce a simple yet effective concept of certain type of generating sets of a group which we call covering cycle generating sets (CCG-set). In essence, these are defined in terms of the set of maximal cyclic groups (refer to \cref{cyclic cover section} for details). These CCG-sets provide a crucial framework for our work. Secondly, we present a set of findings concerning the structure of closed-twins within specific subgraphs of power graphs. While Cameron previously explored the structure of closed-twins in a power graph   \cite{cameron2010power}, we extend this investigation to focus on the subgraph induced by the closed neighbourhood of a vertex (Section \ref{section 4 twin structure}). These structures form the basis of our algorithm to determine whether a vertex in a power graph can be part of a CCG-set. Lastly, we introduce a series of reduction rules that facilitate the simplification of the structure of a directed power graph while preserving its isomorphism-invariant properties (Section \ref{section 6 ISO}). These reduction rules serve as a valuable tool in streamlining the analysis on directed power graphs.


Our algorithm for solving the isomorphism problem of power graphs works by first computing the directed power graphs of the input power graphs. Next, we use the algorithm for the isomorphism problem of directed power graphs for nilpotent groups that we design in \Cref{section 6 ISO}.

The question of efficiently computing the directed power graph from the power graph (or the enhanced power graph) was asked by Cameron \cite{cameron2021graphs}: ``Question 2: Is there a simple algorithm for constructing the directed power graph or the enhanced power graph from the power graph, or the directed power graph from the enhanced power graph?'' 
Recently, Bubboloni and  Pinzauti \cite{bubboloni2022critical}  gave a polynomial time algorithm to reconstruct the directed power graph from the power graph. We give an efficient algorithm to compute the directed power graph from the enhanced power graph. Using CCG-sets, our results on the structure of closed-twins, and the reduction rules, we also give a different solution to the problem of efficiently computing the directed power graph from the power graph.   
 
Cameron \cite{cameron2010power}, and Zahirovi\'c et al. \cite{zahirovic2020study} proved that for two finite groups, the power graphs are isomorphic if and only if the directed power graphs of the groups are isomorphic if and only if the enhanced power graphs of the groups are isomorphic. The algorithms to solve Cameron's question provide a complete algorithmic proof of this result.

\emph{Related work on Cameron's question}: The algorithm to reconstruct the directed power graph from a power graph by Bubboloni and Pinzauti works by first considering the notion of \emph{plain} and \emph{compound} closed-twin classes in a power graph\footnote{These notions are similar to the closed-twin classes that Cameron calls type (a) and type (b) (Proposition 5, \cite{cameron2010power}).}. The other notion is that of \emph{critical} closed-twin classes. Depending on whether a closed-twin class is critical or non-critical they design efficient tests to determine if the class is plain or compound. Once this is done, they put directions to the edges of the power graph to reconstruct the directed power graph. The details can be found in \cite{bubboloni2022critical}. In contrast, our algorithm identifies a CCG-set in the power graph by using the properties and algorithms associated with the graph reductions that we give in this paper.

\section{Preliminaries}
\label{prelim}

 For a simple graph $X$, $V(X)$ denotes the vertex set of $X$ and $E(X)$ denotes the edge set of $X$. For basic definitions and notations from graph theory, an interested reader can refer to any standard textbook (for example, \cite{west2001introduction}). A \textit{subgraph} of $X$ is a graph $Y$, where $V(Y)\subseteq V(X)$ and $E(Y)\subseteq E(X)$. Let $S \subseteq V(X)$. Then the subgraph with the vertex set $S$ and all such edges in $E(X)$ whose both endpoints are in $S$ is called the \textit{induced subgraph} of $X$ on $S$ and it is denoted by $X[S]$.

The set of vertices adjacent to a vertex $u$ in an undirected graph $X$ is called the open neighborhood of $u$ in $X$ and is denoted by $N_X(u)$. The cardinality of $N_X(u)$ is called the \textit{degree} of $u$ in $X$, denoted by $deg_X(u)$. The \textit{closed neighborhood} of a vertex $u$ in $X$ is denoted by $N_X[u]$ and defined by $N_X[u]=N_X(u)\cup \{u\}$. Two vertices in $X$ are called the \emph{closed-twins} in $X$ if their closed neighborhoods in $X$ are the same. 

For a directed graph $X$ (with no multiple edges), the \textit{out-neighborhood} of a vertex $u$ in $X$ is the set $\{v \in V(X) \ :\ (u,v)\ \in E(X) \}$ and \textit{out-degree} of $u$ in $X$, denoted by $out$-$deg_{X}(u)$, is the size of the out-neighborhood of $u$ in $X$. Similarly, the \textit{in-neighborhood} of a vertex $u$ in $X$ is the set $\{v\in V(X) \ :\ (v,u)\ \in E(X) \}$ and \textit{in-degree} of $u$ in $X$, denoted by $in$-$deg_{X}(u)$, is the size of the in-neighborhood of $u$ in $X$.\footnote{When the graph is clear from the context, we drop the suffixes.} Two vertices in a directed graph $X$ are called the \emph{closed-twins} in $X$ if their closed-out-neighborhoods in $X$ are the same and also the closed-in-neighborhoods in $X$ are the same. An edge of the form $(u,u)$ in a directed graph is called a \textit{self-loop}.

In any graph $X$, the \emph{closed-twin-class} of a vertex $u$ in $X$ is the set of all closed-twins of $u$ in $X$. 

\begin{definition}
    Two graphs $X$ and $Y$ are called \emph{isomorphic} if and only if there exists a bijection $f$ from $V(X)$ to $V(Y)$ such that $\{u,v\}\in E(X)$ if and only if $\{f(u),f(v)\}\in E(Y)$. Moreover, if $X$ and $Y$ are vertex-colored, then an isomorphism $f$ is called a \emph{color preserving isomorphism} if for all $u \in V(X)$, the color of $u$ and the color of $f(u)$ are the same.
\end{definition}
In this paper, if the underlying graphs are colored, then by isomorphism we mean color preserving isomorphism only.

\begin{definition}\label{Definition:strong product} (see for example \cite{imrich2008topics})
Let  $X$ and $Y$ be two directed graphs. The \emph{strong product} ($X\boxtimes Y$) of $X$ and $Y$ is the graph with the vertex set $V(X)\times V(Y)$, where there is an edge from $(u,u')$ to a distinct vertex $(v,v')$ in $X\boxtimes Y$ if and only if one of the following holds:
\begin{enumerate}
    \item $u= v$ and there is an edge from $u'$ to $v'$ in $Y$.
    \item $u'= v'$ and there is an edge from $u$ is to $v$ in $X$.
    \item There is an edge from $u$ to $v$ in $X$ and an edge from $u'$ to $v'$ in $Y$.
\end{enumerate}
\end{definition}

A graph is called \textit{prime graph}, if it cannot be represented as a strong product of two non-trivial graphs.

\begin{definition}(see for example \cite{oxley2006matroid})
  Vertex identification of a pair of vertices $v_1$ and $v_2$ of a graph is the operation that produces a graph in which the two vertices $v_1$ and $v_2$ are replaced with a new vertex $v$ such that $v$ is adjacent to the union of the vertices to which $v_1$ and $v_2$ were originally adjacent. In vertex identification, it doesn't matter whether $v_1$ and $v_2$ are connected by an edge or not. 
\end{definition}
The basic definitions and properties from group theory can be found in any standard book (see, for example, \cite{rotman2012introduction}). All the groups considered in this paper are finite. A subset $H$ of a group $G$ is called a \textit{subgroup} of $G$ if $H$ forms a group under the binary operation of $G$; it is denoted by $H \leq G$.

The number of elements in $G$ is called the \textit{order} of the group and it is denoted by $|G|$. The \textit{order} of an element $g$ in $G$, denoted by $o(g)$, is the smallest positive integer $m$ such that $g^m=e$, where $e$ is the identity element. The set $\{g,g^2,g^3,\dots,g^{m-1},e\}$ is the set of all group elements that are \textit{generated} by $g$, where $m=o(g)$. Moreover, this set forms a subgroup of $G$ and is called the \textit{cyclic subgroup} generated by $g$ and denoted by $\langle g \rangle$. The number of generators of a cyclic subgroup $\langle g \rangle$ is $\phi(o(g))$, where $\phi$ is the Euler's totient function. A group $G$ is called \textit{cyclic} if $G=\langle g \rangle$, for some $g \in G$. In a finite cyclic group $G$, for any factor $m$ of $|G|$, $G$ has a unique subgroup of order $m$. This is known as the converse of Lagrange's theorem for finite cyclic groups.

A group $G$ is called a \textit{$p$-group} if the order of each element is some power of $p$, where $p$ is a prime.
For a prime $p$, if $p^m$ is the highest power of $p$ such that $p^m$ divides $|G|$, then a subgroup $H \leq G$ with the property $|H|=p^m$ is called a \textit{Sylow p-subgroup} of $G$. The \textit{direct product} of two groups $G$ and $H$, denoted by $G\times H$, is the group with elements  $(g,h)$ where  $g \in G$ and $h \in H$. The group operation of $G\times H$ is given by $(g_1,h_1)(g_2,h_2)=(g_1g_2,h_1h_2)$, where the co-ordinate wise operations are the group operations of $G$ and $H$ respectively. A finite group is called a \textit{nilpotent group} if it is a direct product of its Sylow p-subgroups.

We now give the definitions of graphs defined on groups that we discuss in this paper (see \cite{cameron2021graphs}). 
\begin{definition}
The directed power graph of a group $G$, denoted $DPow(G)$, is a directed graph with vertex set $G$, and edge set $E=  \{(x,y): y = x^m$ \textrm{ for some integer} m \}. 
\end{definition}
If $(x,y)$ is an edge in $DPow(G)$, then $o(y)$ divides $o(x)$. Let $\mathscr{DP}ow$ denote the set $\{DPow(G)\ : \ G \textrm{ is a finite group }\}$.


\begin{definition}
The power graph of a group $G$, denoted by $Pow(G)$, is the undirected version $DPow(G)$.
\end{definition}
 If $\{x,y\}$ is an edge in $Pow(G)$, then $o(x)|o(y)$ or $o(y)|o(x)$. Let $\mathscr{P}ow$ denote the set $\{Pow(G)\ : \ G \textrm{ is a finite group }\}$.

\begin{definition}
The enhanced power graph of a group $G$, denoted $EPow(G)$, is a graph with vertex set $G$, in which two vertices $x$ and $y$ are adjacent if and only if they are in the same cyclic subgroup of $G$, i.e., there exists $z$ in $G$ such that $x,y \in \langle z \rangle$. 
\end{definition}
Let $\mathscr{EP}ow$ denote the set $\{EPow(G)\ : \ G \textrm{ is a finite group }\}$.


\section{Cyclic cover of a group and its properties}
\label{cyclic cover section}

In this section we introduce the notion of minimal cyclic cover and  covering cycle generating set. We start with following definition.
\begin{definition}
    We say that a proper cyclic subgroup $C$ of $G$ is a \emph{maximal cyclic subgroup} if for all cyclic subgroups $C'$, $C \leq C'$ implies $C=C'$ or $C'=G$.
\end{definition}
\begin{definition}
    Let $G$ be a finite group. Let $C_1, C_2,\ldots, C_m$ be a set of the cyclic subgroups of $G$. We say that $C_1,C_2,\ldots ,C_m$ is a \emph{minimal cyclic cover} if $G=\cup_{i=1}^m C_i$ and $\cup_{i\neq j} C_i\neq G$ for all $j=1,\ldots ,m$.
\end{definition}
\begin{lemma}\label{unique minimum cyclic cover}
    If $G$ is a cyclic group, then $C=G$ is the only minimal cyclic cover; otherwise, the set of all maximal cyclic subgroups of $G$ forms the unique minimal cyclic cover.
\end{lemma}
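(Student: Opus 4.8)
The plan is to prove the two cases separately, noting that the statement asserts both existence and uniqueness of a minimal cyclic cover.

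\textbf{Case 1: $G$ is cyclic.} If $G = \langle g \rangle$, then $G$ itself is a cyclic subgroup covering $G$, so $\{G\}$ is a cyclic cover, and it is trivially minimal since removing the only member leaves the empty union, which is not $G$. For uniqueness, suppose $C_1, \ldots, C_m$ is any minimal cyclic cover. Since $g$ lies in some $C_i$, and $C_i \leq G = \langle g \rangle$ forces $C_i = G$ (a subgroup containing a generator is the whole group), we get $G = C_i$. Minimality then forces $m = 1$, since any other $C_j$ would be redundant: $\bigcup_{k \neq j} C_k \supseteq C_i = G$. Hence $\{G\}$ is the unique minimal cyclic cover.

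\textbf{Case 2: $G$ is not cyclic.} Let $\mathcal{M}$ be the set of all maximal cyclic subgroups of $G$. First I would show $\mathcal{M}$ is a cyclic cover: every element $x \in G$ generates a cyclic subgroup $\langle x \rangle$, which is proper since $G$ is not cyclic; among all cyclic subgroups containing $\langle x \rangle$, pick one that is maximal with respect to inclusion (this exists as $G$ is finite) — it is a proper cyclic subgroup maximal in the sense of the definition, hence lies in $\mathcal{M}$, and it contains $x$. So $G = \bigcup_{C \in \mathcal{M}} C$. Next, minimality: for any $M \in \mathcal{M}$, pick a generator $g$ of $M$. I claim $g$ belongs to no other member of $\mathcal{M}$. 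Indeed if $g \in M'$ for some cyclic $M' \in \mathcal{M}$, then $M = \langle g \rangle \leq M'$, and since $M$ is maximal and $M' \neq G$, we get $M = M'$. Thus $g \notin \bigcup_{C \in \mathcal{M}, C \neq M} C$, so removing $M$ destroys the covering property; $\mathcal{M}$ is a minimal cyclic cover.

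For uniqueness in Case 2, I would show that every minimal cyclic cover must equal $\mathcal{M}$. Let $\mathcal{C} = \{C_1, \ldots, C_m\}$ be a minimal cyclic cover. First, each $C_i$ is proper: if some $C_i = G$ then $G$ is cyclic, contradicting the hypothesis. I claim each $C_i$ is a maximal cyclic subgroup. By minimality there is an element $x_i \in C_i \setminus \bigcup_{j \neq i} C_j$. Take any maximal cyclic subgroup $M$ containing $C_i$ (exists by finiteness, and $M \neq G$ since $G$ is not cyclic). Since $M \in \mathcal{M}$ and $\mathcal{C}$ covers $G$, $x_i \in M$ means $M$'s generator — wait, more carefully: since $x_i \in M \subseteq G = \bigcup_j C_j$ this gives nothing directly; instead argue that $M$, being covered, satisfies $M = M \cap G = \bigcup_j (M \cap C_j)$, but a cyclic group is not a union of proper subgroups unless one of them is the whole group, so $M \cap C_j = M$ for some $j$, i.e. $M \subseteq C_j$; combined with $C_i \subseteq M$ we get $C_i \subseteq C_j$, and since $x_i \notin C_j$ for $j \neq i$ we must have $j = i$, so $C_i = M \in \mathcal{M}$. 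Conversely, every $M \in \mathcal{M}$ lies in $\mathcal{C}$: the same "cyclic group not a union of proper subgroups" fact applied to $M = \bigcup_j (M \cap C_j)$ gives $M \subseteq C_j$ for some $j$, and since $C_j$ is a maximal cyclic subgroup with $M \leq C_j$ and $M$ maximal, $M = C_j$. Hence $\mathcal{C} = \mathcal{M}$.

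The main obstacle is the lemma that a finite cyclic group cannot be written as a union of proper subgroups (equivalently, has a generator avoiding all proper subgroups): this is what drives both the minimality argument and the uniqueness argument, and I would either cite it or prove it quickly using the unique-subgroup-per-divisor structure of cyclic groups (the generators, of which there are $\phi(n) \geq 1$, lie in no proper subgroup).
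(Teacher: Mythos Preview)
Your proof is correct and follows essentially the same approach as the paper: both establish that the maximal cyclic subgroups cover $G$, use generators to show minimality, and argue uniqueness via the fact that a generator of a cyclic group lies in no proper subgroup. The paper's uniqueness argument is marginally more direct---it observes that a generator of each maximal $C_i$ must lie in some $D_j$, forcing $C_i=D_j$, so $\mathcal{M}\subseteq\{D_1,\ldots,D_k\}$ and minimality finishes---whereas you first show each $C_i$ is maximal using the ``cyclic group is not a union of proper subgroups'' lemma and then prove the reverse inclusion, but the underlying idea is identical.
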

\begin{proof}\label{proof:unique minimum cyclic cover}
    The case when $G$ is a cyclic group is easy. Assume that $G$ is not cyclic. Then the set of all maximal cyclic subgroups $\{C_1,\ldots ,C_m$\} of $G$ is non-empty. Since for all $g$ there is a maximal cyclic subgroup containing $g$, we must have $G=C_1\cup \ldots \cup C_m$. If $G=\cup_{i\neq j} C_i$ for some $j$, then any generator of $C_j$ is in one of the cyclic groups $C_1,C_2,\ldots, C_{j-1},C_{j+1},\ldots, C_m$. However, this is not possible. So, $\{C_1,\ldots, C_m\}$ is a minimal cyclic cover.

    Suppose $\{D_1,\ldots, D_k\}$ is a minimal cyclic cover. Now a generator of a maximal cyclic subgroup $C_i$ is in one of the $D_j's$. This forces $D_j=C_i$. Thus, each $D_j$ is a maximal cyclic subgroup. However, we have seen a proper subset of $\{C_1,\ldots, C_m\}$ cannot cover $G$. Therefore, $\{D_1,\ldots, D_k\}=\{C_1,\ldots, C_m\}$.
\end{proof}

Following the above lemma, we can see that the set of all maximal cyclic subgroups forms the minimum cyclic cover of a non-cyclic group.

\begin{definition}
Let $\{C_1,\ldots, C_m\}$ be the minimum cyclic cover of $G$. A cyclic group $C_i$ in the minimum cyclic cover is called a \emph{covering cycle}. For a cyclic group $C$, let $gen(C)$ be the set of generators of $C$. An element in $\cup_{i=1}^m gen(C_i)$ is called a covering cycle generator or CC-generator. We call a set $\{g_1,g_2,\dots,g_m\}$ a \emph{covering cycle generating set} (CCG-set) if $\{\langle g_1\rangle,\langle g_2\rangle, \dots,\langle g_m\rangle \} = \{ C_1,C_2,\dots,C_m\} $.
\end{definition}
The above definition includes the case when $m=1$, i.e., $G$ is cyclic.

~

\noindent
{\bf Organization of the paper:} With the notion of CCG-set defined above, we are now ready to describe the organization of the paper. For the sake of clarity we give the organization in a somewhat nonlinear manner.

How to identify a CCG-set in a power graph or in an enhanced power graph when the underlying group is not given? We design algorithms in Section \ref{ccg from a graph} to solve this problem. These are iterative algorithms that take one of the potential vertices and decide if that vertex can be safely marked as a member of the CCG-set.

The correctness of the algorithm, in the case of power graphs, crucially depends on the structure  of closed-twins in the subgraph induced by the closed neighbourhood of the potential vertex that the algorithm examines in each iterative step. In Section \ref{section 4 twin structure}, we derive a collection of results that characterizes these structures.

In Section~\ref{section 6 ISO}, we define a set of reduction rules that simplifies the structure of the directed power graph of a group $G$ while retaining all its isomorphism-invariant properties. There are four reductions, Reduction 1, 2, 3, and 4, and they are applied one after the other. The graph obtained after $i$-th reduction is denoted by $R_i(G)$. We also show that these reductions can be efficiently reversed. The process of obtaining $R_3(G)$ from $R_4(G)$ is done using Algorithm \ref{algorithm:R4 to R3}. The proof of correctness of Algorithm~\ref{algorithm:R4 to R3} is technical and has some nuances. This proof is presented in Section~\ref{Proof of Lemma 39}.

We design an isomorphism algorithm for the directed power graphs of nilpotent groups using the structure of $R_3$ in Section~\ref{section 6 ISO}.

In Section~\ref{section 7 Q2}, we show how to obtain the reduced graph $R_4$ from an input power graph (or an enhanced power) graph given along with a CCG-set.

Combining the above, we have an efficient way of going from a power graph (or an enhanced power graph) to $R_4$ to $R_3$ to the directed power graph. This answers  Cameron's question positively.

The isomorphism of the power graphs (or the enhanced power graphs) of nilpotent groups is now straightforward: we just need to compute the directed power graphs and apply the algorithm developed in Section~\ref{section 6 ISO}.

\section{Structure of closed-twins in a power graph}\label{section 4 twin structure}

The structure of closed-twins in a power graph has been studied by Cameron \cite{cameron2010power},  and by Bubboloni and Pinzauti \cite{bubboloni2022critical}.
In this section, we explore the structures of closed-twins in the subgraph of a power graph induced by the closed neighborhood of a vertex.  
We show in \Cref{section: CCG of power graph} that these structures can be used to find a CCG-set of a group from the corresponding power graph, even when the group is not given.

First, we note an easy fact about the closed-twins in any graph.
\begin{lemma}
\label{twins remain twin}
    Let $X$ be a graph and let $v\in V(X)$. Suppose $x$ and $y$ are closed-twins in $X$. If $x \in N[v]$, then $y \in N[v]$. Moreover, $x$ and $y$ are closed-twins in $X[N[v]]$.
\end{lemma}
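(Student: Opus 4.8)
The statement to prove is \Cref{twins remain twin}: if $x$ and $y$ are closed-twins in $X$ and $x \in N[v]$, then $y \in N[v]$, and moreover $x$ and $y$ are closed-twins in $X[N[v]]$.

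\medskip

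\noindent\textbf{Proof plan.} The plan is to unwind the definition of closed-twin, namely that $N_X[x] = N_X[y]$, and then chase memberships. First I would handle the claim $y \in N[v]$. There are two cases for $x \in N[v] = N_X(v) \cup \{v\}$. If $x = v$, then since $x \in N_X[x] = N_X[y]$, we get $v \in N_X[y]$, so either $v = y$ (giving $y \in N[v]$ trivially) or $v$ is adjacent to $y$, i.e. $y \in N_X(v) \subseteq N[v]$. If instead $x \neq v$ and $x$ is adjacent to $v$, then $v \in N_X(x) \subseteq N_X[x] = N_X[y]$; again this means $v = y$ or $v$ is adjacent to $y$, and in both cases $y \in N[v]$. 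So in all cases $y \in N[v]$.

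\medskip

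\noindent For the "moreover" part, I would use the general fact that closed neighbourhoods in an induced subgraph are obtained by intersecting with the vertex set: for any $u \in N[v]$, $N_{X[N[v]]}[u] = N_X[u] \cap N[v]$. (This holds because an edge $\{u,w\}$ with both endpoints in $N[v]$ survives in the induced subgraph, and $u$ itself lies in $N[v]$.) Applying this to $u = x$ and $u = y$ — both of which lie in $N[v]$ by the first part — gives
\[
N_{X[N[v]]}[x] = N_X[x] \cap N[v] = N_X[y] \cap N[v] = N_{X[N[v]]}[y],
\]
where the middle equality is the hypothesis that $x,y$ are closed-twins in $X$. Hence $x$ and $y$ are closed-twins in $X[N[v]]$.

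\medskip

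\noindent\textbf{Main obstacle.} There is essentially no substantive obstacle here; the only thing requiring mild care is the case analysis around whether $x$ (or $y$) equals $v$ versus is merely adjacent to $v$, so that the argument is correct even when $v$ itself is one of the twins, and making sure the induced-subgraph neighbourhood identity is stated for closed (not open) neighbourhoods so that the vertex itself is accounted for. Once those bookkeeping points are handled, the proof is a direct definition chase.
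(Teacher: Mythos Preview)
Your proof is correct. The paper does not actually prove this lemma---it is stated as ``an easy fact about the closed-twins in any graph'' and left without proof---and your definition chase (including the careful handling of the $x=v$ case and the induced-subgraph identity $N_{X[N[v]]}[u] = N_X[u] \cap N[v]$) is exactly the straightforward verification the authors are implicitly invoking.
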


Let $G$ be a group. It is easy to see that an 
 element $x\in G$ and any generator of $\langle x \rangle $ are closed-twins in $\Gamma=Pow(G)$. Therefore applying \Cref{twins remain twin}, we have the following corollary: 

\begin{corollary}
\label{generators and twins}
    Let $v\in V(\Gamma)$. If $x \in N[v],$ then all the generators of $\langle x \rangle $ are in $N[v]$. Moreover, they are closed-twins in $\Gamma_v$, where $\Gamma_v=\Gamma[N[v]]$.
\end{corollary}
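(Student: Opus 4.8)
The statement to prove is \Cref{generators and twins}: for $v \in V(\Gamma)$ with $\Gamma = Pow(G)$, if $x \in N[v]$ then all generators of $\langle x \rangle$ lie in $N[v]$, and moreover they are closed-twins in $\Gamma_v = \Gamma[N[v]]$.

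The plan is to reduce everything to \Cref{twins remain twin}, which is already available. First I would establish the claim made in the paragraph immediately preceding the corollary: that an element $x \in G$ and any generator $g$ of $\langle x \rangle$ are closed-twins in $\Gamma = Pow(G)$. To see this, note $\langle x \rangle = \langle g \rangle$, so for every $z \in G$, $z$ is a power of $x$ iff $z$ is a power of $g$ (both statements say $z \in \langle x \rangle$), and conversely $x$ is a power of $z$ iff $g$ is a power of $z$ (both say $\langle x \rangle \leq \langle z \rangle$, equivalently $x, g \in \langle z \rangle$). Hence $x \sim z$ in $Pow(G)$ iff $g \sim z$ in $Pow(G)$, so $N_\Gamma[x] = N_\Gamma[g]$, i.e., $x$ and $g$ are closed-twins. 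This is the only real content; after that it is a direct application.

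Then, given $x \in N_\Gamma[v]$ and any generator $g$ of $\langle x \rangle$, apply \Cref{twins remain twin} with the closed-twin pair $(x, g)$: since $x \in N_\Gamma[v]$, we get $g \in N_\Gamma[v]$, and $x, g$ are closed-twins in $\Gamma[N[v]] = \Gamma_v$. Since two generators $g, g'$ of $\langle x \rangle$ are each closed-twins with $x$ in $\Gamma_v$, and the closed-twin relation is an equivalence relation, $g$ and $g'$ are closed-twins in $\Gamma_v$ as well; thus all generators of $\langle x \rangle$ are mutually closed-twins in $\Gamma_v$. I do not anticipate any obstacle here — the statement is essentially a corollary as labeled, and the only step requiring a sentence of justification is the closed-twin claim for power graphs, which follows from the elementary fact that generators of a cyclic subgroup generate the same subgroup. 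I would keep the write-up to a few lines.

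\medskip

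Here is the proof:

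\begin{proof}
We first verify the claim that $x$ and any generator $g$ of $\langle x \rangle$ are closed-twins in $\Gamma = Pow(G)$. Since $g$ generates $\langle x \rangle$, we have $\langle g \rangle = \langle x \rangle$. Let $z \in G$ be arbitrary with $z \neq x$ and $z \neq g$. Then $z$ is a power of $x$ if and only if $z \in \langle x \rangle$, if and only if $z \in \langle g \rangle$, if and only if $z$ is a power of $g$. Likewise, $x$ is a power of $z$ if and only if $x \in \langle z \rangle$, if and only if $\langle x \rangle \leq \langle z \rangle$, if and only if $g \in \langle z \rangle$, if and only if $g$ is a power of $z$. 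Hence $\{x,z\} \in E(\Gamma)$ if and only if $\{g,z\} \in E(\Gamma)$, and therefore $N_\Gamma[x] = N_\Gamma[g]$; that is, $x$ and $g$ are closed-twins in $\Gamma$.

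Now let $v \in V(\Gamma)$ and suppose $x \in N[v]$. Let $g$ be any generator of $\langle x \rangle$. By the above, $x$ and $g$ are closed-twins in $\Gamma$, so \Cref{twins remain twin} applies: since $x \in N[v]$, we get $g \in N[v]$, and $x$ and $g$ are closed-twins in $\Gamma_v = \Gamma[N[v]]$. In particular every generator of $\langle x \rangle$ lies in $N[v]$. Finally, if $g$ and $g'$ are two generators of $\langle x \rangle$, then each is a closed-twin of $x$ in $\Gamma_v$; since being closed-twins is an equivalence relation on $V(\Gamma_v)$, $g$ and $g'$ are closed-twins in $\Gamma_v$. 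Thus all the generators of $\langle x \rangle$ are closed-twins in $\Gamma_v$.
\end{proof}
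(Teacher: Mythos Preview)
Your proof is correct and follows exactly the approach the paper intends: the paper states just before the corollary that $x$ and any generator of $\langle x\rangle$ are closed-twins in $\Gamma$, and then invokes \Cref{twins remain twin}. You have simply written out the details of both steps (and the transitivity observation at the end), which the paper leaves implicit.
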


Now consider a vertex $v \in V(\Gamma)$, where $\Gamma \in \mathscr{P}ow$ and the subgraph $\Gamma_v=\Gamma[N[v]]$ induced on the closed neighborhood of $v$. For any vertex $x$ in $\Gamma_v$, $o(x)|o(v)$ or $o(v)|o(x)$. We partition $V(\Gamma_v)$ according to the order 
of the vertices in the following way:
\begin{center}
    
        $U_v=\{x\in V(\Gamma_v): o(x) > o(v)\}$\\
        $E_v=\{x\in V(\Gamma_v): o(x) = o(v) \}$\\
        $L_v=\{x\in V(\Gamma_v): o(x) < o(v) \}$
    
\end{center}
For a vertex $x\in U_v$, we have $o(v)|o(x)$ and for a vertex $x\in L_v$, we have $o(x)|o(v)$.
\begin{definition}
    For a prime $p$, we say that an element in a group is a $p$-power element if $o(x)=p^i$ for some $i\geq0$. We say that $x$ is a nontrivial $p$-power element if $o(x)=p^i$ for some $i>0$.\\
\end{definition}

\begin{lemma}
\label{twin in U has more degree}
    Suppose $v \in V(\Gamma)$ is not a $p$-power element and $x \in U_v$ is a closed-twin of $v$ in $\Gamma_v$. Then, $deg_{\Gamma}(x) > deg_{\Gamma}(v)$.
\end{lemma}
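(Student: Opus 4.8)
The plan is to prove the stronger statement $N_\Gamma[v] \subsetneq N_\Gamma[x]$; since $|N_\Gamma[u]| = deg_\Gamma(u)+1$ for every vertex $u$, this gives $deg_\Gamma(x) > deg_\Gamma(v)$ at once. For the inclusion $N_\Gamma[v] \subseteq N_\Gamma[x]$, I would first note that in $\Gamma_v = \Gamma[N[v]]$ the vertex $v$ is adjacent to every other vertex, so $N_{\Gamma_v}[v] = V(\Gamma_v)$; since $x$ is a closed-twin of $v$ in $\Gamma_v$, also $N_{\Gamma_v}[x] = V(\Gamma_v)$, hence $x$ is adjacent in $\Gamma$ to every vertex of $V(\Gamma_v) = N_\Gamma[v]$. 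I would also record at this point that $v \in \langle x \rangle$ and that $\langle v \rangle$ is a proper subgroup of the cyclic group $\langle x \rangle$: because $o(x) > o(v)$ we have $x \ne v$ and $x \sim v$, and $x \notin \langle v \rangle$ (that would force $o(x) \le o(v)$), so $v$ is a power of $x$; being cyclic, $\langle x \rangle$ has $\langle v \rangle$ as its unique subgroup of order $o(v)$.

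It remains to exhibit a vertex $y \in N_\Gamma[x] \setminus N_\Gamma[v]$. Write $n = o(v)$ and $m = o(x)$, so $n \mid m$ and $n < m$; then some prime $p$ has strictly larger $p$-adic valuation in $m$ than in $n$. Here I would invoke the hypothesis that $v$ is not a $p$-power element, so $n$ has at least two distinct prime divisors, and choose a prime $q \mid n$ with $q \ne p$. Setting $d = pn/q$, a short valuation check shows $d \mid m$, $d \nmid n$, and $n \nmid d$. Let $y$ be the unique element of order $d$ in $\langle x \rangle$ (converse of Lagrange for cyclic groups). Then $y$ is a power of $x$, so $\{x,y\} \in E(\Gamma)$ and $y \in N_\Gamma[x]$; from $d \ne n$ we get $y \ne v$; and if $y \in N_\Gamma[v]$ then, as $y \ne v$, either $y \in \langle v \rangle$ (forcing $d \mid n$) or $v \in \langle y \rangle$ (forcing $n \mid d$), contradicting the choice of $d$ in both cases. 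Hence $y \in N_\Gamma[x]\setminus N_\Gamma[v]$, and the strict inclusion follows.

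The only delicate step is the construction of $d$: one has to verify that $d = pn/q$ divides $m$ while being incomparable to $n$ under divisibility. This is a routine prime-factorization computation, but it is precisely where the hypothesis that $v$ is not a $p$-power element is needed — if $o(v)$ were a prime power no such $d$ would exist, and indeed for $\mathbb{Z}_{p^2}$ all vertices of the power graph have equal degree. All the remaining steps are bookkeeping with closed neighbourhoods and the order-divisibility rule governing edges of power graphs.
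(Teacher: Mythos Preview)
Your proof is correct and follows the same strategy as the paper: establish $N_\Gamma[v]\subsetneq N_\Gamma[x]$ by exhibiting an element of $\langle x\rangle$ whose order is divisibility-incomparable with $o(v)$. The only real difference is the choice of witness. The paper picks a prime power: taking a prime $q$ and $s\geq 1$ with $q^s\mid o(x)$ but $q^s\nmid o(v)$, the element $z=x^{o(x)/q^s}$ has order $q^s$, and since $o(v)$ has at least two distinct prime divisors it cannot divide the prime power $q^s$ either. This avoids the valuation bookkeeping you need for $d=pn/q$, though your construction is equally valid. You also make the inclusion $N_\Gamma[v]\subseteq N_\Gamma[x]$ explicit, which the paper leaves tacit. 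One small slip: there are $\phi(d)$ elements of order $d$ in $\langle x\rangle$, not a unique one, but any of them serves.
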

\begin{proof}
    In this case, there exists prime $q$ and positive integer $s$ such that $q^s | o(x)$ but $q^s \nmid o(v)$. Then, $x$ has a neighbor $z=x^{\frac{o(x)}{q^s}}$ of order $q^s$ (by the converse of Lagrange's theorem in finite cyclic groups). Note that $z$ is not a neighbor of $v$ as $o(z)\nmid o(v)$ and also $o(v)\nmid o(z)$. The latter is true as $o(v)$ is divisible by at least two distinct primes. 
\end{proof}


\begin{lemma}
\label{generators are the only twins}
    Let $v\in V(\Gamma)$ be a  CC-generator such that $o(v)$ is  not a prime power. Let $u \in V(\Gamma_v)$.
If $u = e$ or $u$ is a generator of $\langle v \rangle,$ then the closed-twins of $u$ in $\Gamma_v$ are exactly the generators of $\langle v \rangle $ and $e$; otherwise, the closed-twins of $u$ in $\Gamma_v$ are exactly the generators of $\langle u \rangle $.
  
\end{lemma}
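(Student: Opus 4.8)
The plan is to show first that $\Gamma_v$ is exactly the power graph of the cyclic group $\langle v\rangle$, and then to reduce the statement to a combinatorial fact about the divisor lattice of $n:=o(v)$. Using the partition $V(\Gamma_v)=U_v\cup E_v\cup L_v$, I would first argue that $U_v=\emptyset$: if $x\in U_v$ then $x\in N[v]$ with $o(x)>o(v)$, so $x$ is not a power of $v$, hence $v$ is a power of $x$ and $\langle v\rangle\subsetneq\langle x\rangle$; but $v$ is a CC-generator, so $\langle v\rangle$ is either $G$ itself (when $G$ is cyclic) or a maximal cyclic subgroup of $G$, and in either case the strict containment $\langle v\rangle\subsetneq\langle x\rangle$ is impossible. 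Comparing orders likewise shows $E_v=gen(\langle v\rangle)$ and $L_v=\langle v\rangle\setminus gen(\langle v\rangle)$, so $V(\Gamma_v)=\langle v\rangle$; and since the subgraph of a power graph induced on a subgroup $H$ is $Pow(H)$, we conclude $\Gamma_v=Pow(\langle v\rangle)$, a power graph of a cyclic group whose order $n$ has at least two distinct prime divisors (because $o(v)$ is not a prime power).

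Next, in $Pow(\langle v\rangle)$ two distinct vertices $x,y$ are adjacent if and only if $o(x)\mid o(y)$ or $o(y)\mid o(x)$, since a cyclic group has a unique subgroup of each order. Hence $N_{\Gamma_v}[x]$ depends only on $o(x)$; write $N_d$ for this closed neighbourhood when $o(x)=d$. Then $N_1=N_n=\langle v\rangle$, while for any divisor $d$ with $1<d<n$ the divisor lattice of $n$ (not being a chain, as $n$ is not a prime power) contains a divisor $m$ incomparable to $d$ under divisibility, so no element of order $m$ lies in $N_d$ and $N_d\subsetneq\langle v\rangle$. This gives the first alternative of the lemma at once: the closed-twins of $e$, and likewise of any generator of $\langle v\rangle$, form exactly $\{x\in\langle v\rangle:N_{\Gamma_v}[x]=\langle v\rangle\}=\{e\}\cup gen(\langle v\rangle)$.

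Finally, let $u\in V(\Gamma_v)$ with $d:=o(u)$ and $1<d<n$; this is precisely the ``otherwise'' case, since $u\neq e$ forces $d>1$ and $u\notin gen(\langle v\rangle)$ forces $d<n$. Every element of order $d$ is a closed-twin of $u$, and these are exactly the generators of $\langle u\rangle$, so it suffices to show no vertex of order $d'\neq d$ is a closed-twin of $u$. If $N_d=N_{d'}$ then, since $d\in N_d$, we get $d\mid d'$ or $d'\mid d$; renaming, say $d\mid d'$. The case $d'=n$ contradicts $N_d\subsetneq\langle v\rangle=N_n$, so $1<d<d'<n$. A short case analysis now produces a divisor $m$ of $n$ lying in exactly one of $N_d$ and $N_{d'}$: if $d$ has two distinct prime divisors, choose a prime $p$ occurring to a strictly higher power in $d'$ than in $d$ and let $m$ be $p$ times the exact $p$-part of $d$ (then $m\in N_{d'}\setminus N_d$); if $d=p^a$ is a prime power, choose a prime $q\neq p$ dividing $n$ (possible since $n$ is not a prime power) and take $m=q$ when $q\mid d'$, and $m=p^a q$ otherwise. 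Either way $m$ witnesses $N_d\neq N_{d'}$, so the closed-twins of $u$ in $\Gamma_v$ are exactly the order-$d$ elements, i.e.\ $gen(\langle u\rangle)$. The case analysis in this last step — exhibiting the witness $m$ and checking it is comparable, under divisibility, to exactly one of $d,d'$ while using the hypothesis that $o(v)$ is not a prime power — is the only delicate point; everything else is routine divisibility bookkeeping inside a cyclic group.
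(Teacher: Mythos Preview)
Your proof is correct and follows essentially the same approach as the paper: both arguments reduce to a case analysis on the prime factorizations of $o(u)$ and a candidate twin, exhibiting in each case a witness element (divisor) comparable to one order but not the other. Your version is organized a bit more cleanly---you first make explicit that $U_v=\emptyset$ and hence $\Gamma_v=Pow(\langle v\rangle)\cong Pow(\mathbb{Z}_n)$, so the whole question becomes pure divisor-lattice combinatorics---whereas the paper leaves this identification implicit and works directly with group elements; but the witness constructions in your final paragraph match the paper's cases almost line for line.
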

\begin{proof}
\label{Proof: generators are the only twins }

Let $o(v)=p_1^{r_1}p_2^{r_2}\dots p_k^{r_k},$ where $k \geq 2$. The case when $u=e$ or $u$ is a generator of $\langle v \rangle $ is easy as $N[u]=V(\Gamma_v)$ for any such element. Otherwise, since $v$ is a CC-generator, $\langle u \rangle  \lneq \langle v \rangle $.
    For $u$ and $z$ to be closed-twins, we must have $u \in \langle z \rangle $ or $z \in \langle u \rangle $. We show that for $z$ to be a closed-twin of $u$, its order must be the same as that of $u$. We consider the case when $z \in \langle u \rangle$. The other case can be handled similarly. In this case, we have $o(z)|o(u)$.

    Suppose both $u$ and $z$ are $p$-power elements for some prime $p \in \{p_1,p_2, \dots ,p_k\}$. Moreover, without loss of generality, assume that $o(u)=p_1^{s_1}$ and  $ o(z)=p_1^{s'_1}$ where $s_1 > s'_1$. Note that $r_1 \geq s_1$. In this case, there is an element in $V(\Gamma_v)$ of order $p_1^{s'_1}p_2$ which is adjacent to $z$ but not to $u$. More precisely, this element is an element in $\langle v \rangle $ of order $p_1^{s'_1}p_2$. So, in this case, $u$ and $z$ are not closed-twins in $\Gamma_v$.

Now suppose $o(u)$ is not a prime power. We first take $z$ to be non-identity. Then, let $o(u)=p_1^{s_1}\dots p_k^{s_k}$, where $k\geq 2$. Let $o(z)=p_1^{s'_1}\dots p_k^{s'_k}$, where $s_j \geq s'_j$. Assume without loss of generality that $s_1 > s'_1$. As $o(u)$ is not a prime power order, we can take $s_2\neq 0$. Now if $s'_2=0,$ consider an element $x$ of order $p_2$ in $\Gamma_v$. Then $x$ is a neighbor of $u$, but not of $z$. On the other hand, if $s'_2 \neq 0$, we take an element $y$ of order $p_1^{s_1}$. Again $y$ is a neighbor of $u$ but not of $z$. So, in this case also, $u$ and $z$ are not closed-twins in $\Gamma_v$. 

Now suppose $o(u)$ is not a prime power, i.e., $o(u)=p_1^{s_1}p_2^{s_2}\dots p_k^{s_k}$ where $k\geq 2$ and $z$ is identity. We recall that since $u$ is not a generator $\langle v \rangle$, there exists $i$ such that $r_i>s_i$. We take an element $x$ of order $p_i^{r_i}$ in $\Gamma_v$. One can check that $x$ is adjacent to $z$ but not to $u$. So, here also $u$ and $z$ are not closed-twins in $\Gamma_v$.

Note that if $o(u)=o(z),$ then they are closed-twins in $\Gamma_v$.
\end{proof}

\begin{remark}
\label{phi}
    If $a|b$, $a\neq b$, then (1) $\phi(a)|\phi(b)$, (2) $\phi(a)\leq \phi(b)$ and the equality holds only when $b=2a$ where $a$ is an odd natural number.
\end{remark}


If $v\in V(\Gamma)$ is a CC-generator, it is easy to see that $o(v)=deg(v)+1=|\Gamma_v|$. Let $o(v)$ be not a prime power. Then using \Cref{generators are the only twins}, the set of dominating vertices in $\Gamma_v$ is the set of generators of $\langle v \rangle$ and identity. Thus, the size of the closed-twin-class of $v$ in $\Gamma_v$ is $\phi(o(v))+1$, i.e., $\phi(|\Gamma_v|)+1$. Also, for all divisors, $1 < k < o(v)$, of $o(v)$, there exists a closed-twin-class of size $\phi(o(k))$ in $\Gamma_v$. Therefore, using \Cref{generators are the only twins} and \Cref{phi}, we have the following corollary:

\begin{restatable}{corollary}{corollaryTwinSize}\label{corollary twin size}
     Let $v\in V(\Gamma)$ be a  CC-generator and $o(v)$ be not a prime power. Then the following holds:
 \begin{enumerate}
     \item The size of the closed-twin-class of $v$ in $\Gamma_v$, i.e., the set of dominating vertices in $\Gamma_v$, is $\phi(o(v))+1$.
     \item For each divisor $k$ of $o(v)$, $1<k <o(v)$, there is a closed-twin-class of size $\phi(k)$ in $\Gamma_v$. Moreover, $\phi(k)$ divides $\phi(o(v))$.
     \item There are at most two closed-twin-classes of size greater or equal to $\phi(o(v))$.
 \end{enumerate}   
\end{restatable}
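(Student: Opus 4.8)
The plan is to establish the three items of \Cref{corollary twin size} directly from the twin-structure description in \Cref{generators are the only twins}, together with the arithmetic facts collected in \Cref{phi} and the observation that for a CC-generator $v$ we have $o(v)=deg(v)+1=|V(\Gamma_v)|$. Throughout, write $o(v)=p_1^{r_1}\cdots p_k^{r_k}$ with $k\geq 2$, and recall that since $v$ is a CC-generator, every vertex $u\in V(\Gamma_v)$ satisfies either $u\in\langle v\rangle$ or ($u=e$ or $u$ is a generator of $\langle v\rangle$ in the degenerate sense handled by the lemma); more precisely \Cref{generators are the only twins} tells us that the closed-twin-class of $u$ in $\Gamma_v$ is either the set $gen(\langle v\rangle)\cup\{e\}$ (when $u=e$ or $u\in gen(\langle v\rangle)$) or exactly $gen(\langle u\rangle)$ (otherwise).

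For item (1): the dominating vertices of $\Gamma_v$ are exactly those $u$ with $N_{\Gamma_v}[u]=V(\Gamma_v)$; since $e$ and the generators of $\langle v\rangle$ are adjacent to everything in $N[v]$, and by \Cref{generators are the only twins} the closed-twin-class of $v$ itself is precisely $gen(\langle v\rangle)\cup\{e\}$, the set of dominating vertices has size $|gen(\langle v\rangle)|+1=\phi(o(v))+1$. One should note here that $e\notin gen(\langle v\rangle)$ because $o(v)$ is not a prime power, in particular $o(v)>1$, so the ``$+1$'' is genuine. Since $|V(\Gamma_v)|=o(v)$, this is $\phi(|\Gamma_v|)+1$ as claimed.

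For item (2): fix a divisor $k$ of $o(v)$ with $1<k<o(v)$. By the converse of Lagrange's theorem in the cyclic group $\langle v\rangle$, there is a (unique) subgroup of $\langle v\rangle$ of order $k$, hence an element $u\in\langle v\rangle$ with $o(u)=k$; since $1<k<o(v)$, such a $u$ is neither $e$ nor a generator of $\langle v\rangle$, so by \Cref{generators are the only twins} its closed-twin-class in $\Gamma_v$ is exactly $gen(\langle u\rangle)$, which has size $\phi(k)$. The divisibility $\phi(k)\mid\phi(o(v))$ is immediate from part (1) of \Cref{phi} applied to $k\mid o(v)$, $k\neq o(v)$.

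For item (3): we must show at most two closed-twin-classes have size $\geq\phi(o(v))$. One of them is the dominating class of size $\phi(o(v))+1$ from item (1). Every other closed-twin-class is, by \Cref{generators are the only twins}, of the form $gen(\langle u\rangle)$ with $\langle u\rangle\lneq\langle v\rangle$ and hence has size $\phi(o(u))$ with $o(u)\mid o(v)$, $o(u)\neq o(v)$; moreover distinct values of $o(u)$ give the count, since all elements of a given order in the cyclic group $\langle v\rangle$ lie in one cyclic subgroup. By part (2) of \Cref{phi}, $\phi(o(u))\leq\phi(o(v))$ with equality only when $o(v)=2\,o(u)$ with $o(u)$ odd, which pins down $o(u)$ uniquely (it forces $o(v)$ even, $o(v)/2$ odd, and $o(u)=o(v)/2$); so there is at most one proper divisor $d$ of $o(v)$ with $\phi(d)\geq\phi(o(v))$, giving at most one further class. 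Hence at most two in total. I expect the only mild subtlety to be bookkeeping the degenerate role of $e$ consistently with \Cref{generators are the only twins} (ensuring it is counted once, in the dominating class, and that $o(v)$ not being a prime power is what keeps $e$ out of $gen(\langle v\rangle)$ and keeps the classes disjoint); the rest is routine divisor counting via \Cref{phi}.
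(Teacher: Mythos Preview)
Your proof is correct and follows essentially the same approach as the paper: both arguments derive items (1)--(3) directly from \Cref{generators are the only twins} together with the arithmetic facts in \Cref{phi} and the converse of Lagrange's theorem for cyclic groups. Your version is slightly more explicit on bookkeeping (e.g., noting that $e\notin gen(\langle v\rangle)$ and that the proper divisor achieving $\phi(d)=\phi(o(v))$ is unique), but the underlying ideas are identical.
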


\begin{proof}
        \begin{enumerate}
        \item  
        Using \Cref{generators are the only twins}, the closed-twins of $v$ in $\Gamma_v$ are the generators of $<v>$ and identity. This proves our claim. Also, note that these vertices form the set of dominating vertices in $\Gamma_v$.
        \item By converse of Lagrange's theorem for finite cyclic groups, we know that for each divisor $k$ of $o(v)$, there exists a unique cyclic subgroup of order $k$. Also, for each such $k$ there is a closed-twin-class of size $\phi(k)$, due to \Cref{generators are the only twins}. Moreover, using \Cref{phi} we can conclude $\phi(k)|\phi(o(v))$.
        
        \item From (1) and (2), we know that the size of a closed-twin-class in $\Gamma_v$ is either $\phi(o(v))+1$ or $\phi(k)$, where $k$, $1<k <o(v)$, is a divisor of $o(v)$. Now using \Cref{phi}, we can see that $\phi(o(v))= \phi(k)$ if and only if $o(v)=2\cdot k$ where $k$ is odd. Thus, there can be at most two closed-twin-classes of size greater than or equal to $\phi(o(v))$.

    \end{enumerate}

\end{proof}

The following theorem is a well-known result  \cite{chakrabarty2009undirected}. We give a proof for the sake of completeness in \Cref{Proof of Theorem 18}.

\begin{restatable}{theorem}{powCliquelemma}{\cite{chakrabarty2009undirected}}
\label{complete power graph}
    Let $G$ be a finite group. Then, $\Gamma =Pow(G)$ is complete if and only if $G$ is cyclic of prime power order.
\end{restatable}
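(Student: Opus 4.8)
The plan is to prove both directions separately. For the easy direction, suppose $G$ is cyclic of prime power order, say $|G| = p^n$. Then for any two elements $x, y \in G$, one of the cyclic subgroups $\langle x \rangle$, $\langle y \rangle$ contains the other, since the subgroups of a cyclic $p$-group are linearly ordered by inclusion (each has order a power of $p$, and for each divisor there is a unique subgroup). Hence one of $x, y$ is a power of the other, so $\{x,y\} \in E(Pow(G))$, and $\Gamma$ is complete.

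For the converse, suppose $\Gamma = Pow(G)$ is complete; I want to show $G$ is cyclic of prime power order. First I would argue $G$ cannot have two distinct primes dividing its order: if $p \neq q$ both divide $|G|$, pick elements $a$ of order $p$ and $b$ of order $q$ (which exist by Cauchy's theorem). Since $\{a,b\}$ is an edge, one is a power of the other, forcing $o(a) \mid o(b)$ or $o(b) \mid o(a)$, i.e. $p \mid q$ or $q \mid p$ — contradiction. So $|G| = p^n$ for a single prime $p$. Next I would show $G$ is cyclic: take $g \in G$ of maximal order $p^m$. For any $h \in G$, the edge $\{g,h\}$ forces $h$ to be a power of $g$ or $g$ a power of $h$; in the latter case $o(g) \mid o(h)$, but maximality gives $o(h) = o(g)$ and then $\langle h \rangle = \langle g \rangle$ still contains $g$; in either case $h \in \langle g \rangle$. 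Hence $G = \langle g \rangle$ is cyclic, and being a $p$-group it has prime power order.

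A cleaner packaging of the converse avoids the case split: since every pair of vertices is adjacent, for all $x,y$ either $x \in \langle y \rangle$ or $y \in \langle x \rangle$, so the set of cyclic subgroups $\{\langle x\rangle : x \in G\}$ is totally ordered by inclusion; taking a maximal element $\langle g \rangle$ in this chain shows it equals $G$, so $G$ is cyclic. Then if $|G|$ had two prime divisors $p,q$, the elements of order $p$ and $q$ would give incomparable cyclic subgroups, a contradiction, so $|G|$ is a prime power.

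I do not expect any genuine obstacle here; the result is elementary and the only thing to be careful about is invoking Cauchy's theorem (or the converse of Lagrange for the relevant subgroup structure) and handling the degenerate cases ($G$ trivial, $|G|$ prime) uniformly, which the "totally ordered chain of cyclic subgroups" formulation does automatically.
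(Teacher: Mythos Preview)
Your proof is correct, but the route differs from the paper's. For the forward direction both arguments coincide. For the converse, you use Cauchy's theorem to rule out two prime divisors and then a maximal-order element (equivalently, the total ordering of cyclic subgroups) to force cyclicity; this is the standard elementary argument and is entirely self-contained. The paper instead leverages its own machinery: it first observes that a non-cyclic group has at least two maximal cyclic subgroups whose generators are non-adjacent (using the minimal cyclic cover from Section~\ref{cyclic cover section}), and then, for the case $G=\mathbb{Z}_m$ with $m$ not a prime power, invokes \Cref{generators are the only twins} to produce two non-generator elements that are not closed-twins, hence $\Gamma$ is not complete. Your approach is more portable and avoids any dependence on the closed-twin analysis; the paper's approach has the virtue of exercising the structural lemmas it has just developed, but at the cost of a less direct argument for what is an elementary fact.
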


From the above theorem, the following corollary is immediate.
\begin{corollary}\label{lemma 1a}
    Let $v\in V(\Gamma)$ be a $p$-power element for some prime $p$. Then, $\Gamma[E_v \cup L_v]$ is a complete graph. Moreover, the elements of $E_v \cup L_v$ are closed-twins of $v$ in $\Gamma_v$.
\end{corollary}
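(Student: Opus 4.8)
The plan is to reduce everything to the single cyclic subgroup $\langle v\rangle$: I would first show $E_v\cup L_v\subseteq\langle v\rangle$, then invoke \Cref{complete power graph} to get that $\Gamma$ restricted to $\langle v\rangle$ is complete (this uses that $v$ is a $p$-power element, so $\langle v\rangle$ is cyclic of prime-power order), and finally deduce the twin claim by showing that every vertex of $E_v\cup L_v$ dominates $\Gamma_v$, just as $v$ does by construction.

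\emph{Step 1: $E_v\cup L_v\subseteq\langle v\rangle$.} Let $x\in E_v\cup L_v$. Since $x\in V(\Gamma_v)=N[v]$, either $x$ is a power of $v$ or $v$ is a power of $x$ (the case $x=v$ being trivial). If $x$ is a power of $v$, then $x\in\langle v\rangle$. If $v$ is a power of $x$, then $o(v)\mid o(x)$; but $x\in E_v\cup L_v$ forces $o(x)\mid o(v)$ (for $x\in E_v$ this is immediate; for $x\in L_v$ it holds because $o(x)<o(v)$ excludes $o(v)\mid o(x)$, and $x\sim v$ then leaves only $o(x)\mid o(v)$), hence $o(x)=o(v)$ and so $\langle v\rangle\le\langle x\rangle$ with equal order, i.e. $\langle x\rangle=\langle v\rangle$ and again $x\in\langle v\rangle$.

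\emph{Step 2: completeness.} Because $v$ is a $p$-power element, $\langle v\rangle$ is cyclic of order $p^i$, a prime power, so by \Cref{complete power graph} the graph $Pow(\langle v\rangle)$ is complete. Since the ``is a power of'' relation between two elements of $\langle v\rangle$ is the same whether computed in $G$ or in $\langle v\rangle$, we have $\Gamma[\langle v\rangle]=Pow(\langle v\rangle)$, hence $\Gamma[\langle v\rangle]$ is complete; as $E_v\cup L_v\subseteq\langle v\rangle$ by Step 1, its induced subgraph $\Gamma[E_v\cup L_v]$ is complete as well.

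\emph{Step 3: the twin claim.} By definition of $\Gamma_v=\Gamma[N[v]]$, every vertex of $\Gamma_v$ lies in $N[v]$, so $N_{\Gamma_v}[v]=V(\Gamma_v)$. I would show the same for any $x\in E_v\cup L_v$: pick $w\in V(\Gamma_v)=N[v]$. If $w\in\langle v\rangle$, then $x,w\in\langle v\rangle$ are adjacent or equal by Step 2. Otherwise $v$ is a power of $w$, so $\langle v\rangle\le\langle w\rangle$, whence $x\in\langle v\rangle\le\langle w\rangle$ makes $x$ a power of $w$ and thus $x\sim w$. Either way $x$ is adjacent or equal to $w$, so $N_{\Gamma_v}[x]=V(\Gamma_v)=N_{\Gamma_v}[v]$, i.e. $x$ and $v$ are closed-twins in $\Gamma_v$.

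I do not anticipate a genuine obstacle; the only delicate point is the case analysis in Step 1 — in particular checking that ``$v$ is a proper power of some $x\in E_v$'' still forces $\langle x\rangle=\langle v\rangle$ — together with the small observation that $\Gamma[\langle v\rangle]$ really coincides with $Pow(\langle v\rangle)$ so that \Cref{complete power graph} is applicable.
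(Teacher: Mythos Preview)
Your proof is correct and follows the same route the paper intends: the paper simply declares the corollary ``immediate'' from \Cref{complete power graph}, and what you have written is exactly the natural unpacking of that claim (identify $E_v\cup L_v$ with $\langle v\rangle$, apply \Cref{complete power graph} to the prime-power cyclic group $\langle v\rangle$, then observe every element of $\langle v\rangle$ dominates $\Gamma_v$). One small note: in Step~1 you could even conclude $E_v\cup L_v=\langle v\rangle$ outright, since every power of $v$ lies in $N[v]$ with order dividing $o(v)$.
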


\begin{lemma}
\label{v prime power order}
    Let $v\in V(\Gamma)$ be a nontrivial $p$-power element and not a CC-generator. Suppose for all $u \in U_v$ such that $u$ is a closed-twin of $v$ in $\Gamma_v$, $deg_{\Gamma}(u)$ is at most $deg_{\Gamma}(v)$. Let $y$ be a closed-twin of $v$ in $\Gamma_v$ with maximum order and $S$ denotes the set $\{ x\in V(\Gamma_v): o(y)|o(x) \text{ and $o(x)\neq o(y)$}\}$. Then,
    \begin{enumerate}
        \item The closed-twins of $v$ are exactly the elements in $\langle y \rangle $.
        \item $V(\Gamma_v)=\langle y \rangle \sqcup  S ,$ where $\sqcup$ denotes the disjoint union.
        \item Moreover, if $o(y)=p^j$ where $j\geq 2$, then $p$ divides $\lvert V(\Gamma_v) \rvert$.
    \end{enumerate}
\end{lemma}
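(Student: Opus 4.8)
The plan is to analyze the structure of $\Gamma_v = \Gamma[N[v]]$ when $v$ is a nontrivial $p$-power element that is not a CC-generator, under the stated degree hypothesis. The first observation is that since $v$ is a $p$-power element, \Cref{lemma 1a} tells us $E_v \cup L_v$ consists entirely of closed-twins of $v$ in $\Gamma_v$, and these are exactly the proper-or-equal cyclic-chain of $p$-power elements sitting below $v$. Because $v$ is not a CC-generator, $\langle v \rangle$ is contained in some larger cyclic group, so $U_v$ is nonempty. The key point is to understand which elements of $U_v$ are closed-twins of $v$: if $u \in U_v$ is a closed-twin of $v$, then (since in a power graph $x$ and any generator of $\langle x\rangle$ are closed-twins, and closed-twins in $\Gamma$ restrict to closed-twins in $\Gamma_v$ by \Cref{twins remain twin}) one shows $u$ must itself be a $p$-power element: if some other prime $q$ divided $o(u)$, then $u$ would have a neighbor of order $q$ in $\Gamma_v$ (a power of $u$, hence adjacent to $v$), but such an element cannot be adjacent to $v$ since $v$ is a nontrivial $p$-power element — and more carefully one uses the degree hypothesis to rule out $o(u)$ having a different prime, mirroring the argument of \Cref{twin in U has more degree}. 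So every closed-twin of $v$ in $\Gamma_v$ is a $p$-power element, and the $p$-power closed-twins of $v$ form a chain of cyclic subgroups; $y$, chosen with maximum order among all closed-twins of $v$, is a $p$-power element with $\langle v \rangle \le \langle y \rangle$, and every closed-twin of $v$ lies in $\langle y \rangle$.

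For part (1), I would argue both inclusions. Every element of $\langle y \rangle$ is a $p$-power element whose cyclic group is comparable with $\langle y \rangle$ (in fact contained in it), hence comparable with $\langle v \rangle$; one checks directly from the power-graph adjacency that such an element has the same closed neighborhood in $\Gamma_v$ as $v$ — indeed $N[v] = N[y]$ restricted to $\Gamma_v$ because $v, y$ are closed-twins, and any $w \in \langle y\rangle$ is adjacent to exactly those vertices adjacent to $y$ within $N[y]$. Conversely, by the previous paragraph any closed-twin of $v$ is a $p$-power element lying in $\langle y \rangle$. This gives (1). For part (2), the set $\langle y \rangle$ accounts for all vertices of $\Gamma_v$ with order dividing $o(y)$ that are $p$-power elements — but actually every vertex of $\Gamma_v$ either has order dividing $o(y)$ or is in $S$; I need to verify $\langle y\rangle$ is exactly $\{x \in V(\Gamma_v) : o(x) \mid o(y)\}$. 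For $x \in V(\Gamma_v)$ with $o(x) \mid o(y)$: since $o(x) \mid o(y) = p^j$, $x$ is a $p$-power element, and $x \in N[v] = N[y]$ means $x \in \langle y\rangle$ or $y \in \langle x\rangle$ (power-graph adjacency of $x,y$), but $o(x)\mid o(y)$ forces $x \in \langle y\rangle$. So $V(\Gamma_v) = \langle y\rangle \sqcup S$ with the union disjoint because $S$ excludes order $o(y)$ and everything in $S$ has order a proper multiple of $o(y)$, while $\langle y\rangle$ has all orders dividing $o(y)$.

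For part (3), suppose $o(y) = p^j$ with $j \ge 2$. I want to show $p \mid |V(\Gamma_v)| = |\langle y\rangle| + |S|$. Since $|\langle y\rangle| = p^j$ is divisible by $p$, it suffices to show $p \mid |S|$. Here $S = \{x \in V(\Gamma_v) : p^j \mid o(x), o(x) \neq p^j\}$; each such $x$ lies in $U_v$ and is a power of... no, rather $x$ has $v$ as a power (since $o(v) \mid p^j \mid o(x)$ and they're adjacent, $v$ is a power of $x$). For each $x \in S$, consider its generators, or rather the generators of the unique subgroup of $\langle x\rangle$ of order $p^{j+1}$: the relevant combinatorial device is that $S$ breaks into closed-twin classes, each of size $\phi(o(x))$ for the relevant orders, and $\phi(o(x))$ is divisible by $p$ whenever $p^j \mid o(x)$ with $j \ge 2$, since then $p^{j-1} \mid \phi(o(x))$ and $j - 1 \ge 1$. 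Summing over the twin-classes comprising $S$ gives $p \mid |S|$. The main obstacle I anticipate is this last step: pinning down precisely that $S$ is a union of full closed-twin classes of $\Gamma_v$ (so that each contributes a multiple of $p$) rather than being cut arbitrarily — this requires showing that if $x \in S$ then every generator of $\langle x\rangle$ is in $\Gamma_v$ (which follows from \Cref{generators and twins}) and that no element outside $S$ is a closed-twin of an element of $S$ within $\Gamma_v$, which needs a short argument comparing orders and using that elements of $\langle y\rangle \setminus\{\text{twins of }v\}$ cannot be twins of high-order elements. Once the twin-class decomposition of $S$ is in hand, the divisibility is just the elementary fact about $\phi$ stated implicitly in \Cref{phi}.
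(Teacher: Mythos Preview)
Your proposal is correct and tracks the paper's argument closely for parts (1) and (2). The only substantive divergence is in part (3): you partition $S$ into closed-twin classes of $\Gamma_v$, whereas the paper uses the coarser equivalence $x_1 \equiv x_2 \iff \langle x_1 \rangle = \langle x_2 \rangle$. The paper's relation immediately yields classes of size exactly $\phi(o(x))$, and \Cref{generators and twins} already guarantees each such class lies entirely in $S$; since $p^j \mid o(x)$ with $j \ge 2$ forces $p \mid \phi(o(x))$, the divisibility $p \mid |S|$ (hence $p \mid |V(\Gamma_v)|$) follows in one line. Your closed-twin classes, by contrast, could in principle merge several $\langle \cdot \rangle$-classes (so ``each of size $\phi(o(x))$'' is not literally justified), which is precisely the ``main obstacle'' you anticipate---having to argue that no element of $\langle y \rangle$ is a closed-twin in $\Gamma_v$ of an element of $S$. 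That obstacle can be handled (it follows from part (1), since any closed-twin of a dominating vertex is dominating), but it is entirely avoidable: switching to the $\langle \cdot \rangle$-equivalence removes the need for any further closed-neighbourhood analysis.
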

\begin{proof}

Suppose $o(v)=p^i$. From \Cref{lemma 1a}, we know that the elements in $E_v$ and $L_v$ are closed-twins of $v$. Observe that these elements have order $p^r$ for some $r\leq i$. Next, we show that all closed-twins of $v$ in $U_v$ have orders of the form $p^l$ for some $l>i$. Suppose not, then let $u$ be a closed-twin of $v$ in $U_v$. As $u \in U_v, \ p^i $ divides $ o(u)$. Then $o(u)=k.p^i$, where $k>1$ and $gcd(k,p)=1$. Since $u$ is a closed-twin of $v$, $|\Gamma_v|-1=deg_{\Gamma_v}(u)=deg_{\Gamma_v}(v)=deg_{\Gamma}(v)$. Now, $\langle u \rangle$ has an element of order $k$ and this element cannot be a neighbor of $v$. So, $deg_{\Gamma}(u) > deg_{\Gamma_v}(u)= deg_{\Gamma}(v)$. Therefore, $deg_{\Gamma}(u) > deg_{\Gamma}(v)$. It is a contradiction. Hence, $o(u)=p^l, \ \text{for some } l>i$.

Given that $y$ is the closed-twin of $v$ in $\Gamma_v$ with  maximum order, say $p^j$. Suppose $z\in \langle y \rangle $. If $y\in E_v\cup L_v$, then clearly $\langle y \rangle =\langle v \rangle $ (because $y$ cannot be in $L_v$). If $y\in U_v$, then noting that $deg_{\Gamma}(y)\leq deg_{\Gamma}(v)$ and $y$ is a closed-twin of $v$ in $\Gamma_v$, we can say that $z$ is in $\Gamma_v$. In both the cases $\langle y \rangle \subseteq V(\Gamma_v)$. We show that every vertex $w\in V(\Gamma_v)$ is adjacent to $z$. Since $w\in V(\Gamma_v)$ and $y$ is a closed-twin of $v$, there is an edge between $w$ and $y$. So, $o(y)|o(w)$ or $o(w)|o(y)$. In the first case, $z\in \langle y \rangle \subseteq \langle w \rangle $. On the other hand, if $o(w)|o(y)$, then $w\in \langle y \rangle $. So either $z\in \langle w \rangle $ or $w\in \langle z \rangle $ as $\langle y \rangle $ is a cyclic group of prime power order. In any case, $\{w,z\}$ is an edge. So, any element $z$ in $\langle y \rangle $ is a closed-twin of $v$.
 
Let $z$ be a closed-twin of $v$. If $z\in \langle v \rangle $ then $z\in \langle y \rangle $. On the other hand, if $z\notin \langle v \rangle $, then $z\in U_v$. Therefore, $o(z)$ is a power of $p$. As $y$ is a closed-twin of $v$, there is an edge between $y$ and $z$. Therefore, $z\in \langle y \rangle $ or $y\in \langle z \rangle $. If $y\in \langle z \rangle $, we must have $\langle y \rangle =\langle z \rangle $ as $o(z)\leq o(y)$ (as both are $p$-power order closed-twins of $v$ and $y$ is with maximum order). This forces $\langle z \rangle =\langle y \rangle $. Thus, $z\in \langle y \rangle $ in both cases. This completes the proof of part (1). 

Now, we prove part (2). Let $x\in V(\Gamma_v)\setminus \langle y \rangle $. In this case, $x\notin E_v \cup L_v$ by \Cref{lemma 1a}. Since $y$ is a closed-twin of $v$, $ \{ x,y \}$ is an edge. Therefore, $x\in \langle y \rangle $ or $y\in \langle x \rangle $. However, by assumption, $x\notin \langle y \rangle $. So, $y\in \langle x \rangle $. Therefore, $o(y)|o(x)$. So, $o(x)=p^j\cdot k$ for some $k>1$.

Therefore, $ V(\Gamma_v)=\langle y \rangle  \sqcup \ \{x\in V(\Gamma_v) : p^j|o(x)\text{ and } o(x)\neq p^j\}$, i.e., $V(\Gamma_v)=\langle y \rangle \sqcup S$. 
 
To prove part (3), we define an equivalence relation $\equiv$ on $S$ as follows: $x_1 \equiv x_2,$ if and only if $\langle x_1 \rangle =\langle x_2 \rangle $. Note that the generators of $\langle x_1 \rangle $ are in $S$ by \Cref{generators and twins}. Therefore, the equivalence class of any vertex $x \in S$ is of size $\phi(o(x))$. Recall that, $o(x)=o(y)\cdot k=p^j\cdot k$. Now, as $p^j \geq p^2$, so $p$ divides $ \phi(o(x))$. Therefore, $p$ divides $ |V(\Gamma_v)|$, as claimed.
\end{proof}

\section{Finding a CCG-set of a group from its power graph and enhanced power graph}\label{ccg from a graph}

Given a directed power graph, the set of vertices corresponding to a CCG-set $\{g_1,\dots,g_m\}$ of the underlying group $G$ can be readily found in the graph. The scenario changes when the input graph is a power graph or an enhanced power graph and the underlying group is not given directly. Then, it is not possible to recognise these vertices exactly in the input graph as we can not distinguish two closed-twins $g_i$ and $g_i'$ in $Pow(G)$ (or $EPow(G)$). For example, if we take $\mathbb{Z}_{p^m}$ for some prime $p$ and integer $m$, then $Pow(\mathbb{Z}_{p^m})$ is a clique [\Cref{complete power graph}]. If the vertices of $Pow(\mathbb{Z}_{p^m})$ are named arbitrarily, then it is not possible to distinguish a generator of $\mathbb{Z}_{p^m}$ from any other vertex. Fortunately, the fact that the underlying group is $\mathbb{Z}_{p^m}$ can be concluded just from the graph by \Cref{complete power graph}. 

Therefore, we aim to do the following: Given a power graph (or an enhanced power graph) $\Gamma$, mark a set $\{g_1,g_2,\dots,g_m\}$ of vertices such that (1) each $g_i$ is a CC-generator or $g_i$ is a closed-twin of a CC-generator $g'_i$ in the graph $\Gamma$, and (2) $\{h_1,h_2,\dots,h_m\}$ is a CCG-set where $h_i=g_i$, if $g_i$ is a CC-generator; otherwise, $h_i=g_i'$. 


In \Cref{section: CCG of power graph}, we find the vertices corresponding to a CCG-set of the underlying group of the power graph.
The process of finding a CCG-set for the underlying group of an enhanced power graph is discussed in \Cref{ccg to epow}.

\subsection{Finding a CCG-set of a group from its power graph}\label{section: CCG of power graph}
\begin{theorem}\label{ccg from pow}
    There is an efficient polynomial time algorithm that, on input a power graph \footnote{Recall that the underlying group is not given.} $\Gamma \in \mathscr{P}ow$, outputs a set $\{g_1,g_2,\dots,g_m\}$ where $g_i$ is a CC-generator or $g_i$ is a closed-twin of a CC-generator $g'_i$ in the graph $\Gamma$ such that $\{h_1,h_2,\dots,h_m\}$ is a CCG-set where $h_i=g_i$, if $g_i$ is a CC-generator, otherwise, $h_i=g_i'$.
\end{theorem}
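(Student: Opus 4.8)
The plan is to give an iterative marking procedure. I would maintain a set $M$ of already-marked vertices (initially empty) together with the subgroup information they certify, and process the remaining vertices in a carefully chosen order — namely by non-increasing order of $|N_\Gamma[v]|$, i.e., essentially by the order $o(v)$ — since a covering cycle generator $v$ satisfies $|\Gamma_v| = o(v) = \deg(v)+1$ and the maximal cyclic subgroups are precisely the inclusion-maximal ones. For each candidate vertex $v$ not yet ``covered'' (not a closed-twin of, and not a power of, any already-marked CC-generator), I would decide whether $v$ can be part of a CCG-set, i.e., whether $v$ is a CC-generator or a closed-twin of one.

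The decision for a given $v$ splits into the two cases dictated by Section~\ref{section 4 twin structure}. If $o(v)$ is not a prime power: by \Cref{generators are the only twins} and \Cref{corollary twin size}, when $v$ really is a CC-generator the graph $\Gamma_v = \Gamma[N[v]]$ has a very rigid signature — the dominating vertices number exactly $\phi(|\Gamma_v|)+1$, and for every divisor $1<k<|\Gamma_v|$ there is a closed-twin-class of size exactly $\phi(k)$, with at most two classes of size $\geq \phi(|\Gamma_v|)$. I would check this signature against $\Gamma_v$; if it fails, $v$ is not a CC-generator and we reject it (and, crucially, argue it cannot be a closed-twin of one either, or defer it to be picked up later via an actual CC-generator in its twin class). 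If $v$ is a nontrivial $p$-power element: here $\Gamma[E_v\cup L_v]$ is already a clique by \Cref{lemma 1a}, and the subtlety is the vertices of $U_v$. Using \Cref{twin in U has more degree} we first note $v$ can only be a $p$-power-element CC-generator candidate worth considering when every closed-twin $u\in U_v$ of $v$ satisfies $\deg_\Gamma(u)\le\deg_\Gamma(v)$; under that hypothesis \Cref{v prime power order} gives $V(\Gamma_v) = \langle y\rangle \sqcup S$ where $y$ is a maximal-order closed-twin of $v$, and part (1) tells us the closed-twin class of $v$ is exactly $\langle y\rangle$. So the test becomes: compute $y$, verify the decomposition, and conclude $\langle y\rangle$ is (contained in) a covering cycle; the trivial-order and already-covered cases are handled separately. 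In all accepted cases we add $v$ (as $g_i$) to the output, record the certified cyclic subgroup $\langle h_i\rangle$, and mark every element of that subgroup present in $\Gamma$ as covered.

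After the loop, correctness has two halves. \emph{Soundness}: every marked $g_i$ is, by the case analysis above, a CC-generator or a closed-twin of one, and the recorded $\langle h_i\rangle$ is a genuine covering cycle (an inclusion-maximal cyclic subgroup, by the ordering and because any cyclic subgroup strictly containing it would have been processed earlier and would have covered $v$). \emph{Completeness}: every covering cycle $C_j$ of $G$ gets represented exactly once — when the first vertex of $\mathrm{gen}(C_j)$ (equivalently, of the twin class of a generator) is processed, it passes the test (since the structural signatures of \Cref{generators are the only twins}, \Cref{corollary twin size}, \Cref{lemma 1a}, \Cref{v prime power order} are \emph{exactly} met by true CC-generators), gets marked, and subsequently covers all of $C_j$; conversely nothing outside $\cup_j \mathrm{gen}(C_j)$ (up to closed-twins) ever survives the test, because a non-CC-generator $x$ has $\langle x\rangle$ strictly inside some $C_j$ whose generator is processed first and marks $x$ as covered, and the structural tests detect the mismatch even when orders collide. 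Polynomial time is clear: $O(|V(\Gamma)|)$ iterations, each dominated by computing $\Gamma_v$, its closed-twin classes, degrees, and dominating set — all polynomial.

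The main obstacle is the completeness/soundness \emph{boundary case} where a non-CC-generator vertex $x$ has a neighbourhood graph $\Gamma_x$ that accidentally mimics the signature of a true CC-generator; this is precisely why the $p$-power-element analysis in \Cref{v prime power order} needs the degree hypothesis via \Cref{twin in U has more degree}, and why \Cref{generators are the only twins} pins down the twin classes so tightly in the non-prime-power case. Making the argument fully rigorous — that the structural signatures characterize CC-generators \emph{up to closed-twins} and that processing in the right order guarantees each covering cycle is hit exactly once — is the delicate part, and I expect it to require the detailed twin-structure lemmas of Section~\ref{section 4 twin structure} essentially in full.
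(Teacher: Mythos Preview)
Your overall strategy --- process vertices in decreasing degree order, test each candidate against the closed-twin-class ``signature'' of a genuine CC-generator, mark and cover --- is exactly the paper's. But there is a genuine gap in how you set up the case split. You branch on whether $o(v)$ is a prime power; however $o(v)$ is not readable from the graph --- only $\deg_\Gamma(v)+1$ is, and for a non-CC-generator the two need not coincide. The paper's algorithm branches on $\deg(v)+1$: if it \emph{is} a prime power the test is simply whether $\Gamma_v$ is a clique (Rules~1a/1b), and if it is not, the test is the closed-twin-partition match against $Pow(\mathbb{Z}_{\deg(v)+1})$ together with an ``already NC-marked closed-twin'' short-circuit (Rules~2a/2b) that your proposal omits.

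The prime-power branch is where the confusion is most visible. You invoke \Cref{twin in U has more degree} for a $p$-power element $v$, but that lemma explicitly assumes $v$ is \emph{not} a $p$-power element. Likewise \Cref{v prime power order} assumes $v$ is \emph{not} a CC-generator, so it cannot serve as a positive test for being one. In the paper these two lemmas appear only inside the \emph{correctness proof} of Rule~2b, to rule out false positives in the delicate subcase where $\deg(v)+1$ is not a prime power but $o(v)$ nevertheless is: combined with the degree ordering they force a contradiction with the twin-partition profile of $Pow(\mathbb{Z}_{\deg(v)+1})$. They are not executed by the algorithm itself. The algorithmic test in the prime-power-degree branch is the much simpler ``is $\Gamma_v$ complete?''. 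Once you separate the algorithm's branching criterion ($\deg(v)+1$) from the case analysis inside the soundness proof ($o(v)$), the lemmas of Section~\ref{section 4 twin structure} slot into place essentially as you anticipated in your final paragraph.
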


Hence, without loss of generality, we call the set $\{g_1,g_2,\ldots, g_m\}$ as CCG-set and $g_i$'s as CC-generators. 

Before we give the proof of the above theorem, we need the following definition that is required in the algorithm.

\begin{definition}
 Let $d$ be a positive integer.  Let $\Gamma \in \mathscr{P}ow$ and $v$ be a vertex in  $\Gamma$.
    Let $T_1,T_2,\ldots,T_r$ be the closed-twin partition
    of vertices in $\Gamma_v$. Let $S_1,S_2,\ldots, S_{r'}$ be the closed-twin partition of $Pow(\mathbb{Z}_d)$. We say that $\Gamma_v$ \emph{closed-twin-partition-wise matches} with $Pow(\mathbb{Z}_d)$ if (1) the closed-twin class containing the dominating vertices of both the graphs have the same size, and (2) $r=r'$ and there is some permutation $\pi\in Sym(r)$ such that $|T_i|=|S_{\pi(i)}|$.
\end{definition}

If $v$ is a CC-generator and $o(v)=d$ is not a prime power. Then, $\Gamma_v$ twin-partition-wise-matches with $Pow(\mathbb{Z}_d)$, by \Cref{corollary twin size}.

It is not hard to see that testing if $\Gamma_v$ closed-twin-partition-wise matches with $Pow(\mathbb{Z}_d)$ can be done in polynomial time. Also, when $d$ is not prime power, the size of the closed-twin class containing $v$ has size $\phi(d)$+1.

\begin{proof}\label{Correctness of algorithm of to find cyclic cover in given power graph}
(Proof of \Cref{ccg from pow})    The process of finding a CCG-set of the underlying group of a given power graph is described in \Cref{algo to find ccg from a power graph}.
 
\begin{algorithm}[ht]
\caption{Algorithm to mark a CCG-set in a finite power graph}\label{algo to find ccg from a power graph}
\hspace{0.5cm}\textbf{Input:} $\Gamma \in \mathscr{P}ow$
\vspace{0.2cm}
\begin{itemize}
    \item First, isolate the case when the power graph $\Gamma$ is a clique using \Cref{complete power graph}. Then return a singleton set, consisting of any vertex, as the CCG-set.
    \item If $\Gamma$ is not a clique, then mark any of the universal vertices as the identity.
    \item Next, all vertices except the identity are stored in a list $L$ in decreasing order of their degrees.
    \item During the algorithm, we use the labels: U (undecided), CC (a CC-generator) and NC (not a CC-generator). 
    \item To start with, mark all the vertices U in the list. Note that identity is not marked with any label.
    \item The algorithm marks the vertices further in phases. In each phase, pick the first U marked vertex, say $v$, in the list $L$ and do the following:

\textbf{[Rule 1a]} \textit{If} $deg(v)+1$ is a prime power and $\Gamma_v=\Gamma[N[v]]$ is complete, then mark $v$ as CC and mark all its neighbors NC. 

\textbf{[Rule 1b]} \textit{Else if} $deg(v)+1$ is a prime power and $\Gamma_v$ is not complete, then mark $v$ as NC. 
\textbf{[Rule 2a]}
\textit{Else if} $deg(v)+1$ is not a prime power and if $v$ has a closed-twin $w$ in $\Gamma_v$ such that $w$ has been marked NC, then mark $v$ as NC. 

\textbf{[Rule 2b]}
\textit{Else} (i.e., $deg(v)+1$ is not a prime power and $v$ does not have a NC marked closed-twin in $\Gamma_v$)

\hspace{1.5cm} \textit{If} $\Gamma_v$ closed-twin-partition-wise matches with $Pow(\mathbb{Z}_d)$, where $d=deg(v)+1$

\hspace{2cm} Mark $v$ as CC and all its neighbors NC.

\hspace{1.5cm}\textit{Else}

\hspace{2cm}Mark $v$ as NC.
\item Return the set of vertices marked CC.
\end{itemize}
    
\end{algorithm}

\Cref{algo to find ccg from a power graph} terminates after finitely many steps since vertex picked at each phase is either marked CC or NC thus reducing the number of vertices marked U in each phase.

Now we prove that the above algorithm is correct. The proof is by induction on the number of phases. In any phase, one of the four rules is applied to relabel a set of vertices. Our goal is to prove that this labelling is done correctly. In phase $i$, we assume that up to phase $(i-1)$, all the labellings were done correctly. For the base case, this means that all the vertices are still labelled U.

\textit{If Rule 1a is applied:} 
If $v$ is not a CC-generator, then $v$ is contained in at least one covering cycle. If $v$ is contained in two covering cycles, say $\langle g_1 \rangle$ and $\langle g_2 \rangle$, then $\Gamma_v$ is not complete, as the CC-generators $g_1$ and $g_2$ are not adjacent to each other. Now consider the case when $v$ is contained in exactly one covering cycle, say $\langle x \rangle$. Then $N_{\Gamma_v}(v)\subseteq N_{\Gamma_v}(x)$. So, if $deg_{\Gamma}(x)> deg_{\Gamma}(v)$, then $x$ or one of its closed-twins has already been marked as CC in some previous phase, and then $v$ would have been marked as NC. Now if $deg_{\Gamma}(x) = deg_{\Gamma}(v)$, then $v$ and $x$ are closed-twins and therefore $v$ is also a CC-generator. This is a contradiction.

\textit{If Rule 1b is applied:} If $v$ is a CC-generator, then $\Gamma_v$ is a complete graph. Thus, this step works correctly.

\textit{If Rule 2a is applied:} If $v$ is a CC-generator, then by \Cref{generators are the only twins} its closed-twins in $\Gamma_v$ are exactly $e$ (identity) and generators of $\langle v \rangle$. So, if any of the closed-twins is marked NC, it must have been because some other closed-twin is already marked CC in some previous phase $t \leq i-1$ of the algorithm. In phase $t$, the algorithm would have also marked $v$ as NC.  

\textit{If Rule 2b is applied:} If $v$ is a CC-generator, then $\Gamma_v$ closed-twin-partition-wise matches with $Pow(\mathbb{Z}_d)$. Now if none of $v$'s closed-twins in $\Gamma_v$ are already marked CC, then $v$ can be marked CC.

On the other hand, suppose that $v$ is not a CC-generator. We first consider the case when $v$ is contained in only one covering cycle, say generated by $x$. 

\begin{claim}
    $deg_\Gamma(x) > deg_{\Gamma}(v)$. 
\end{claim}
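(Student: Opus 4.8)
I would prove the Claim by first establishing the closed‑neighbourhood inclusion $N_\Gamma[v]\subseteq N_\Gamma[x]$, which already gives $deg_{\Gamma}(x)\ge deg_{\Gamma}(v)$, and then ruling out equality: $deg_{\Gamma}(x)=deg_{\Gamma}(v)$ would force $v$ and $x$ to be closed‑twins in $\Gamma$, contradicting \Cref{generators are the only twins}.

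First I would note that $v$ is not a generator of $\langle x\rangle$: otherwise $\langle v\rangle=\langle x\rangle$ is a covering cycle and $v$ is itself a CC-generator, contrary to assumption. Hence $\langle v\rangle\lneq\langle x\rangle$; in particular $v\ne x$, and $\{v,x\}\in E(\Gamma)$ since $v\in\langle x\rangle$. Next I would prove $N_\Gamma[v]\subseteq N_\Gamma[x]$. Let $w\in N_\Gamma[v]$, so $w\in\langle v\rangle$ or $v\in\langle w\rangle$. If $w\in\langle v\rangle$, then $w\in\langle v\rangle\subseteq\langle x\rangle\subseteq N_\Gamma[x]$. If instead $v\in\langle w\rangle$, then $\langle w\rangle$ is contained in some covering cycle $C$; since $v\in\langle w\rangle\subseteq C$ and $v$ lies in the \emph{unique} covering cycle $\langle x\rangle$, we must have $C=\langle x\rangle$, so again $w\in\langle x\rangle\subseteq N_\Gamma[x]$. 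Together with $v,x\in N_\Gamma[x]$ this yields $deg_{\Gamma}(x)\ge deg_{\Gamma}(v)$.

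It then remains to exclude $deg_{\Gamma}(x)=deg_{\Gamma}(v)$. If that held, then $N_\Gamma[v]$ and $N_\Gamma[x]$ would be finite sets of equal size with $N_\Gamma[v]\subseteq N_\Gamma[x]$, hence $N_\Gamma[v]=N_\Gamma[x]$; that is, $v$ and $x$ are closed‑twins in $\Gamma$, and therefore also closed‑twins in $\Gamma_x=\Gamma[N[x]]$ by \Cref{twins remain twin}. Since $x$ is a CC-generator, $o(x)=deg_{\Gamma}(x)+1=deg_{\Gamma}(v)+1=d$, which is not a prime power because Rule 2b is being applied. So \Cref{generators are the only twins}, invoked for the CC-generator $x$ (whose order is not a prime power) and the generator $x$ of $\langle x\rangle$, tells us that the closed‑twins of $x$ in $\Gamma_x$ are precisely the generators of $\langle x\rangle$ together with the identity. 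But then $v$ would be a generator of $\langle x\rangle$ or $v=e$, and both are impossible: $v$ is not a generator of $\langle x\rangle$ by the first paragraph, and $v\ne e$ since the identity is not placed in the list $L$. This contradiction proves $deg_{\Gamma}(x)>deg_{\Gamma}(v)$.

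The only step needing a little care is the second case of the inclusion $N_\Gamma[v]\subseteq N_\Gamma[x]$, where one uses that every cyclic subgroup of $G$ is contained in a covering cycle together with the standing hypothesis that $v$ belongs to only one covering cycle; the rest is a routine combination of \Cref{twins remain twin}, \Cref{generators are the only twins}, and the elementary identity $o(c)=deg_{\Gamma}(c)+1$ valid for any CC-generator $c$.
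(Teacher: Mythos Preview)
Your proof is correct and follows essentially the same approach as the paper's: establish $N_\Gamma[v]\subseteq N_\Gamma[x]$ from the uniqueness of the covering cycle containing $v$, then rule out equality by observing it would make $v$ and $x$ closed-twins in $\Gamma_x$, contradicting \Cref{generators are the only twins}. The only cosmetic difference is that the paper disposes of the possibility that $o(x)$ is a prime power as a separate preliminary case, whereas you derive $o(x)=deg_\Gamma(v)+1$ directly under the equality assumption and invoke the Rule~2b hypothesis---both routes are valid and amount to the same argument.
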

\begin{proof}
    As $v$ is contained in only one covering cycle, we have $N_{\Gamma}(v)\subseteq N_{\Gamma}(x)$. This implies $deg_{\Gamma}(x)\geq deg_{\Gamma}(v)$. Moreover in $\Gamma_v$, the vertices $x$ and $v$ are closed-twins. If $o(x)=p^i,$ then $deg(x)+1=p^i$. The graph $\Gamma_x = \Gamma[N[x]]$ is complete. So, $deg(v)+1=p^i$. Therefore, this case cannot arise. On the other hand, if $o(x)$ is not a prime power, we can apply \footnote{Here $x$ and $v$ are to be treated as the variables $v$ and $u$ in  \Cref{generators are the only twins}.} \Cref{generators are the only twins} and since $v\neq e$ and $v$ is not a CC-generator, we can see that $v$ and $x$ are not closed-twins in $\Gamma_x$. Hence, we have $deg_{\Gamma}(x)>deg_{\Gamma}(v)$.
\end{proof}
By the above claim, the algorithm considers $x$ and other generators of $\langle x \rangle$ before $v$. Then, by the induction hypothesis, one of these generators would be marked CC, and $v$ would not be labelled U.

Now we consider the case when $v$ is contained in at least two covering cycles, say $\langle g_1 \rangle$ and $\langle g_2 \rangle$. We prove that if $v$ is not a CC-generator, then $\Gamma_v$ cannot closed-twin-partition-wise match with $Pow(\mathbb{Z}_d)$. This case is divided into two subcases.

In the $1^{st}$ subcase, we assume that $o(v)$ is not a prime power. Now we count the closed-twins of $v$ in 
$\Gamma_v$ present in each of the sets $U_v,\ E_v$ and $L_v$.

If $x\in U_v$ is a closed-twin of $v$ in $\Gamma_v$, then by \Cref{twin in U has more degree}, $deg_{\Gamma}(x) > deg_{\Gamma}(v)$. So, $x$ must have been considered by the algorithm before $v$. At that phase, the algorithm either marked $x$ as NC or CC. If $x$ was marked as NC, $v$ would not satisfy the condition of Rule 2b (i.e., no closed-twin of $v$ in $\Gamma_v$ is marked NC). On the other hand, if $x$ was marked CC, the algorithm would have marked $v$ as NC. So, there are no closed-twins of $v$ in $U_v$.

Number of closed-twins of $v$ in $\Gamma_v$ which are present in $E_v$ is $\phi(o(v))$. By noting that $\Gamma_v[E_v\sqcup L_v]=Pow(\langle v \rangle)$ and using \Cref{generators are the only twins} on $Pow(\langle v \rangle)$, we see that the only closed-twin of $v$ in $L_v$ is the identity. Therefore, the total number of closed-twins of $v$ in $\Gamma_v$ is $\phi(o(v))+1$.

Now CC-generators $g_1$ and $g_2$ have distinct \footnote{$g_1$ and $g_2$ are not adjacent.} closed-twin-classes of size at least $\phi(o(g_1))$ and $\phi(o(g_2))$. But, $\phi(o(g_i))\geq \phi(o(v))$ for $i=1,2$ by \Cref{phi} . This is a contradiction since $Pow(\mathbb{Z}_d)$ can have at most two closed-twin-classes of size greater than or equal to $\phi(o(v))$, by (3) of \Cref{corollary twin size}. 

In the $2^{nd}$ subcase, we assume that $o(v)$ is a prime power, say $o(v)=p^i$ for some prime $p$ and some integer $i>0$. Consider $y$ and $S$ as in \Cref{v prime power order}. Note that $deg_{\Gamma}(y) \leq deg_{\Gamma}(v)$. Since otherwise, the algorithm would have marked $y$ as NC or CC. In both cases, the algorithm would not satisfy the conditions of Rule 2b.

\textit{Subsubcase 1:} $o(y)=p^j, j\geq 2$. 
In this case, by (3) of \Cref{v prime power order}, $p$ divides $ \lvert V(\Gamma_v) \rvert$. Therefore, $\Gamma_v$ must have a closed-twin-class of size $p-1$ for it to closed-twin-partition-wise match with $Pow(\mathbb{Z}_d)$ (because of (2) of \Cref{corollary twin size}). By (1) of \Cref{v prime power order}, if $x \in \langle y \rangle$, then the number of closed-twins of $x$ in $\Gamma_v$ is $|\langle y \rangle|=p^j > p-1$. 
Also, if $x\in S$, then number of closed-twins of $x$ in $\Gamma_v$ $\geq \phi(o(x)) \geq \phi(o(y)) \geq \phi(p^2) > p-1$. Therefore, there is no closed-twin-class of size $p-1$. 

\textit{Subsubcase 2:} $o(y)=p$. 
Recall that $\langle g_1 \rangle$ and $\langle g_2 \rangle$ are covering cycles containing $v$. Since $y$ and $v$ are closed-twins in $\Gamma_v$, we can see that $y \in \langle g_1 \rangle \cap \langle g_2 \rangle$. Now by (1) of \Cref{v prime power order}, the size of the closed-twin-class of $v$ is $p$. Since $o(y) \big| o(g_1)$ and $o(y) \big| o(g_2)$, the size of the closed-twin-class of both $g_1$ and $g_2$ is at least $p-1$. This is not possible by (3) of \Cref{corollary twin size}.
\end{proof}


\subsection{Finding a CCG-set of a group from its enhanced power graph}\label{ccg to epow} 
\begin{lemma}
\label{neighborhood in enhanced power graph}
    If $v$ is a CC-generator of a group $G$, then $N_{EPow(G)}[v] \subseteq N_{EPow(G)}[u]$ for all $u \in \langle v \rangle $.
\end{lemma}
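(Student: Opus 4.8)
The plan is to prove the slightly stronger statement that $N_{EPow(G)}[v] = \langle v \rangle$ whenever $v$ is a CC-generator; the lemma then follows at once, since any two elements of the cyclic subgroup $\langle v \rangle$ are adjacent in $EPow(G)$, so $\langle v \rangle \subseteq N_{EPow(G)}[u]$ for every $u \in \langle v \rangle$.

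First I would handle the degenerate case where $G$ is cyclic. Then $\langle v \rangle = G$ is the unique covering cycle (the definition of covering cycle explicitly allows $m = 1$), every element of $G$ lies in the cyclic subgroup $G$, so $EPow(G)$ is complete and $N_{EPow(G)}[v] = G = \langle v \rangle$; the required containment is immediate for every $u$. So assume henceforth that $G$ is not cyclic, whence $\langle v \rangle$ is a proper maximal cyclic subgroup of $G$.

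For the inclusion $N_{EPow(G)}[v] \subseteq \langle v \rangle$, take $w \in N_{EPow(G)}[v]$; we may assume $w \neq v$, so $w$ is adjacent to $v$ and there is a cyclic subgroup $C \leq G$ with $v, w \in C$. Since $C$ is cyclic and $v \in C$, we have $\langle v \rangle \leq C$. Invoking the maximality of $\langle v \rangle$ in the precise form given (namely $\langle v \rangle \leq C$ forces $C = \langle v \rangle$ or $C = G$), and using that $C = G$ is impossible as $G$ is not cyclic, we conclude $C = \langle v \rangle$ and hence $w \in \langle v \rangle$. The reverse inclusion $\langle v \rangle \subseteq N_{EPow(G)}[v]$ is clear, because all elements of $\langle v \rangle$ lie in the common cyclic subgroup $\langle v \rangle$ and are therefore adjacent to (or equal to) $v$. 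Thus $N_{EPow(G)}[v] = \langle v \rangle$, and then for any $u \in \langle v \rangle$ and any $w \in N_{EPow(G)}[v] = \langle v \rangle$ the pair $u, w$ lies in the cyclic subgroup $\langle v \rangle$, giving $w \in N_{EPow(G)}[u]$; this yields $N_{EPow(G)}[v] \subseteq N_{EPow(G)}[u]$.

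There is essentially no computational obstacle here; the only points requiring care are treating the cyclic case separately (so that the maximality argument is not applied vacuously) and invoking the exact form of the maximal-cyclic-subgroup definition, which permits the escape $C = G$ that we then rule out by non-cyclicity of $G$.
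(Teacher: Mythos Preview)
Your proof is correct and follows essentially the same approach as the paper: both establish that $N_{EPow(G)}[v] = \langle v\rangle$ for a CC-generator $v$, then observe that any two elements of $\langle v\rangle$ are adjacent in $EPow(G)$. The paper's proof simply asserts the equality $N_{EPow(G)}[v] = \langle v\rangle$ without justification, whereas you supply the maximality argument and treat the cyclic case separately, so your version is more complete.
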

\begin{proof}\label{Proof:neighborhood in enhanced power graph}

    Since $v$ is a $CC$-generator, $N_{EPow(G)}[v]\text{ corresponds to }\langle v \rangle$. Let $x \in N_{EPow(G)}[v]$. Then, $x \in \langle v \rangle$. Now if $u \in \langle v \rangle$, then by the definition of enhanced power graph, $x$ and $u$ are adjacent in $EPow(G)$. Hence, $x \in N_{EPow(G)}[u]$.
\end{proof}
\begin{theorem}
     There is an efficient polynomial time algorithm that on input an enhanced power graph\footnote{Recall that the underlying group is not given.} $\Gamma \in \mathscr{EP}ow$ outputs a set $\{g_1,g_2,\dots,g_m\}$ where $g_i$ is a CC-generator or $g_i$ is a closed-twin of a CC-generator $g'_i$ in the graph $\Gamma$ such that $\{h_1,h_2,\dots,h_m\}$ is a CCG-set where $h_i=g_i$ if $g_i$ is a CC-generator, otherwise, $h_i=g_i'$. 
\end{theorem}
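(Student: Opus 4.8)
The plan is to mirror the structure of the power-graph case (\Cref{ccg from pow}) but adapted to the simpler combinatorics of enhanced power graphs, using \Cref{neighborhood in enhanced power graph} as the key structural fact. First I would recall that in $EPow(G)$ the closed neighborhood of a CC-generator $v$ is exactly $\langle v\rangle$, so $|N_\Gamma[v]| = o(v)$, and moreover $\Gamma[N[v]]$ is a clique (all elements of a cyclic subgroup are pairwise adjacent). This gives a clean necessary condition. The algorithm processes vertices in decreasing order of degree, maintaining labels U/CC/NC just as in \Cref{algo to find ccg from a power graph}. When a U-marked vertex $v$ is picked, one checks whether $\Gamma[N[v]]$ is complete and whether $v$ has an already-NC-marked closed-twin in $\Gamma$; if $v$ has an NC closed-twin we mark $v$ as NC, otherwise if $\Gamma[N[v]]$ is a clique we mark $v$ as CC and all of $N(v)$ as NC, and if $\Gamma[N[v]]$ is not a clique we mark $v$ as NC. (Identity, a universal vertex, is handled separately, and the clique case $G=\mathbb{Z}_{p^m}$ or more generally the whole-graph-is-complete case is isolated first, returning a singleton.)

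The correctness proof goes by induction on phases, exactly as in the proof of \Cref{ccg from pow}. The substantive points: (i) if $v$ is a CC-generator then $\Gamma[N[v]]=\Gamma[\langle v\rangle]$ is complete, so the "not complete $\Rightarrow$ NC" rule never misfires on a true CC-generator; (ii) if $v$ is a CC-generator and a closed-twin $w$ of $v$ in $\Gamma$ has already been marked NC, that can only have happened because an earlier phase marked some CC-generator adjacent-class representative CC and swept its neighbors — and since the closed-twins of $v$ in $EPow(G)$ are precisely the other generators of $\langle v\rangle$ together with (only) the identity when $o(v)$ is prime (more carefully: closed-twins of $v$ are elements $u$ with $N_{EPow(G)}[u]=\langle v\rangle$), that earlier-marked CC-generator generates $\langle v\rangle$ too, so $v$ would itself have been swept NC in that phase — contradiction; (iii) conversely, if $v$ is \emph{not} a CC-generator but survives to be marked CC, we derive a contradiction. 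For (iii) the key is \Cref{neighborhood in enhanced power graph}: if $v$ lies in a covering cycle $\langle x\rangle$ with $v\notin gen(\langle x\rangle)$ then $N_\Gamma[v]\subsetneq N_\Gamma[x]$ (strict, since $x$'s neighborhood additionally contains the extra generators/elements reaching up to $o(x)$), so $deg_\Gamma(x)>deg_\Gamma(v)$, hence $x$ (or a closed-twin generating $\langle x\rangle$) was processed earlier and marked CC, which would have marked $v$ as NC. If instead $v$ lies in two distinct covering cycles $\langle g_1\rangle,\langle g_2\rangle$, then $N_\Gamma[v]\subseteq N_\Gamma[g_1]\cap N_\Gamma[g_2]$, and since $g_1,g_2$ are non-adjacent in $EPow(G)$ (distinct maximal cyclic subgroups share no generator, and a CC-generator's neighborhood is its own cyclic subgroup) each strictly dominates $v$, so again both were processed earlier, at least one marked CC, and $v$ marked NC. Either way $v$ would not still be U, contradiction. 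Finally, completeness of the output set: every covering cycle $C_i$ has its generators processed at some phase, and the first such generator not already swept NC gets marked CC — one checks this generator is indeed never pre-swept NC because a neighbor being NC would force its neighborhood to strictly contain $C_i$, impossible for a maximal cyclic subgroup — so the marked set contains exactly one representative per covering cycle, i.e. a CCG-set (up to replacing each marked $g_i$ by the CC-generator $g_i'$ it is a closed-twin of).

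I expect the main obstacle to be pinning down the closed-twin structure in $EPow(G)$ precisely enough to run the induction cleanly — specifically, arguing that when a CC-generator $v$ is reached, none of its closed-twins in the \emph{whole} graph $\Gamma$ (not just in $\Gamma[N[v]]$) has been prematurely mislabelled NC, and conversely that a genuine non-CC-generator always has a strictly-higher-degree witness already processed. Here \Cref{neighborhood in enhanced power graph} does almost all the work (it gives the nesting $N[v]\subseteq N[u]$ for $u\in\langle v\rangle$, hence the degree inequalities driving the processing order), but one still needs the converse Lagrange fact and the observation that two distinct maximal cyclic subgroups have no common generator to rule out the two-covering-cycle case. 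A secondary, purely bookkeeping issue is the degenerate case where $o(v)$ is a prime power so that $\Gamma[N[v]]$ is a clique for trivial reasons — but unlike the power-graph situation this does not cause ambiguity, because in $EPow(G)$ even a prime-power covering cycle $\langle v\rangle$ has $N_\Gamma[v]=\langle v\rangle$ of size exactly $o(v)$, and any vertex $x$ with $\langle v\rangle\subsetneq\langle x\rangle$ strictly dominates $v$, so the degree-ordering still forces the correct vertex to be chosen; I would spell this out as a short remark rather than a separate subcase analysis. Thus the proof is essentially a simplification of the proof of \Cref{ccg from pow}, with \Cref{neighborhood in enhanced power graph} replacing the more delicate twin-counting of \Cref{corollary twin size} and \Cref{v prime power order}.
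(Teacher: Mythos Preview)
Your argument in step (iii) has the containment from \Cref{neighborhood in enhanced power graph} backwards, and this inverts the entire degree-ordering logic. The lemma states that when $x$ is a CC-generator, $N_{EPow(G)}[x]\subseteq N_{EPow(G)}[u]$ for every $u\in\langle x\rangle$; in particular $N_\Gamma[x]=\langle x\rangle$, and any $v\in\langle x\rangle$ has $N_\Gamma[v]\supseteq\langle x\rangle=N_\Gamma[x]$. So a CC-generator has the \emph{smallest} closed neighbourhood among the elements of its covering cycle, not the largest. Your assertion ``$N_\Gamma[v]\subsetneq N_\Gamma[x]$, so $deg_\Gamma(x)>deg_\Gamma(v)$'' is therefore false, and likewise in the two-covering-cycle case one has $N_\Gamma[v]\supseteq N_\Gamma[g_1]\cup N_\Gamma[g_2]$, not $N_\Gamma[v]\subseteq N_\Gamma[g_1]\cap N_\Gamma[g_2]$. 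Processing in decreasing degree order does not visit CC-generators first; it visits them last.

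The paper exploits exactly this observation by running the sweep in \emph{increasing} degree order, with no completeness test and no NC-closed-twin test at all: pick the minimum-degree unmarked vertex, mark it CC, mark its neighbours NC, repeat. The correctness argument is then a one-line induction using \Cref{neighborhood in enhanced power graph}: a minimum-degree unmarked vertex $v$ must lie in a single covering cycle $\langle g\rangle$ (otherwise $deg(g)<deg(v)$ for any covering-cycle generator $g$ containing $v$, and $g$ would be unmarked too), and then $N[v]=\langle g\rangle=N[g]$, so $v$ is a closed-twin of $g$. Your more elaborate algorithm may in fact still be correct --- the test ``$\Gamma[N[v]]$ is a clique'' is equivalent to ``$v$ lies in exactly one covering cycle,'' which in $EPow(G)$ is equivalent to ``$v$ is a closed-twin of a CC-generator'' --- but the proof you give does not establish it, since every degree inequality you invoke points the wrong way.
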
\label{CCG EPOW}

As before, we call the set $\{g_1,g_2,\ldots,g_m\}$ as CCG-set and $g_i$'s as CC-generators. 

\begin{proof}
We describe an algorithm for the task. 
\begin{algorithm}
    \caption{Algorithm to mark a CCG-set of $G$ in a finite enhanced power graph}
    \label{$EPow$ to CCG}
    \hspace{0.2cm}\textbf{Input:} $\Gamma \in \mathscr{EP}ow$\\
    \begin{enumerate}
        \item Sort the vertices of $\Gamma$ by their degree in increasing order. Let the sorted array be $A=\{v_1,v_2,\dots,v_m\}$.
        \item Pick the first unmarked element $x$ of $A$ and mark it CC.\\
        Mark all the elements of $N[x]$ as NC.
        \item Pick the next unmarked element in $A$ and repeat Step 2 till all elements of $A$ are marked.
    \end{enumerate}
\end{algorithm}


The proof of correctness of \Cref{$EPow$ to CCG} is by induction on the number of iterations.
In any iteration, the first unmarked vertex is marked as CC and its neighbors in the graph are marked as NC. Our goal is to prove that this marking process is correct. 

For the base case, $x=v_1$. By \Cref{neighborhood in enhanced power graph}, $v_1$ is either a CC-generator or $v_1 \in \langle g_1 \rangle $, where $g_1$ is a CC-generator and $v_1$ is a closed-twin of $g_1$ in $\Gamma$. Since $N[v_1]$ \text{ corresponds to } $\langle g_1 \rangle$ by \Cref{neighborhood in enhanced power graph}, we can safely mark the vertices adjacent to $v_1$ as NC.

In phase $i$, we assume that up to iteration $(i-1)$, all the markings were done correctly. Let us pick the first unmarked vertex, say $x$, in $A$. It is easy to see that $x$ does not belong to any covering cycle marked till the $(i-1)^{th}$ iteration, i.e., $x$ does not belong to the neighborhood of any CC marked vertex till the $(i-1)^{th}$ iteration. So, again using the same argument given in the base case, it can be seen that the markings done in the $i^{th}$ iteration are correct.
\end{proof}


\label{Correctness of the algorithm ccg of epow}
\section{Isomorphism of directed power graphs}\label{section 6 ISO}


The isomorphism problems of power graphs, directed power graphs, and enhanced power graphs are equivalent (see \cite{cameron2021graphs,cameron2010power,zahirovic2020study}). Thus, an algorithm for the isomorphism problem of directed power graphs automatically gives an isomorphism algorithm for power graphs (or enhanced power graphs), provided we can obtain the directed power graph from the power graph (respectively, the enhanced power graph). This is done in \Cref{section 7 Q2}. In the currect section, we focus on the isomorphism problem of directed power graphs. In the last part of this section, we discuss a necessary result that is used in \Cref{section 7 Q2} for obtaining the directed power graph of an input power graph (or an enhanced power graph). 

We perform several reductions on a directed power graph that are isomorphism invariant.
 The out-degree of a vertex in $DPow(G)$ is the order of the element in the group $G$, i.e., for a vertex $u$, $out$-$deg(u)=o(u)$.  Therefore we can color the vertices by their out-degrees. We call the colored graph $CDPow(G)$. We emphasise that here the colors are numbers, and hence we can perform arithmetic operations on these colors and use the natural ordering of integers inherited by these colors. 
We recall that by isomorphism we mean color preserving isomorphism when the graphs  are be colored.

Two vertices $u$ and $v$ are closed-twins in $CDPow(G)$ (in $DPow(G)$ also) if and only if $\langle u \rangle =\langle v \rangle $ in $G$, i.e., $u$ and $v$ are two generators of the same cyclic subgroup in $G$. There are $\phi(o(u))$ generators of $\langle u \rangle $ in $G$. So, for each vertex $u \in CDPow(G),$ there are exactly $\phi(col(u))$ closed-twins in $CDPow(G)$.  By the converse of Lagrange's theorem, in each cyclic subgroup of order $n$, for each divisor $k$ of $n$, there are exactly $\phi(k)$ generators. So, for each $k$ in the color set of $CDPow(G)$, there are $\phi(k)$ closed-twins in the graph. Observe that $u$ and $v$ are closed-twins in $CDPow(G),$ if and only if $(u,v) \in E(CDPow(G))$ and $col(u)=col(v)$.

\textbf{\emph{Reduction rule 1:}} \textbf{Closed-twin Reduction}: If there are two closed-twins $u$ and $v$ in $CDPow(G)$, then do a vertex identification of $u$ and $v$ and color the identified vertex with $col(u)=col(v)$. Let $R_1(G)$ denotes the reduced graph after applying Reduction rule 1 to $CDPow(G)$.

 From the discussion above, the next lemma follows easily.

\begin{lemma}
\label{twin reduction is safe}
$CDPow(G) \cong CDPow(H)$ if and only if $R_1(G) \cong R_1(H)$. 
\end{lemma}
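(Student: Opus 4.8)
The plan is to show that Reduction rule 1 is an isomorphism-invariant operation by characterizing the reduced graph $R_1(G)$ purely in terms of the closed-twin structure of $CDPow(G)$, and then invoking the fact that closed-twin identification is functorial with respect to isomorphism. First I would observe, using the discussion preceding the lemma, that the closed-twin relation in $CDPow(G)$ is an equivalence relation (closed-twins always form an equivalence relation in any colored graph): the closed-twin class of a vertex $u$ is exactly $gen(\langle u\rangle)$, the set of generators of the cyclic subgroup $\langle u\rangle$, and it has size $\phi(o(u)) = \phi(col(u))$. Hence $R_1(G)$ is obtained from $CDPow(G)$ by identifying each entire closed-twin class to a single vertex, carrying along the common color; the order in which the individual pairwise identifications are performed does not matter, so $R_1(G)$ is well-defined up to isomorphism.

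Next I would argue the forward direction: if $\varphi : CDPow(G)\to CDPow(H)$ is a color-preserving isomorphism, then $\varphi$ maps closed-twin classes of $CDPow(G)$ bijectively onto closed-twin classes of $CDPow(H)$ (being a closed-twin is preserved by any isomorphism, since it is defined entirely in terms of adjacency and colors). Therefore $\varphi$ descends to a well-defined bijection $\bar\varphi$ on the sets of closed-twin classes, i.e. on $V(R_1(G))\to V(R_1(H))$. One then checks that $\bar\varphi$ preserves edges and colors: there is an edge between two distinct classes $[u],[v]$ in $R_1(G)$ if and only if $(u',v')\in E(CDPow(G))$ for some (equivalently, every) representatives — here one uses that in $CDPow(G)$, whether $(x,y)$ is an edge depends only on $\langle x\rangle$ and $\langle y\rangle$, so the edge relation on classes is well-defined — and this is preserved under $\varphi$; the color of $[u]$ is $col(u)$, which $\varphi$ preserves. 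Thus $R_1(G)\cong R_1(H)$.

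For the converse direction, I would exhibit how to lift an isomorphism $\psi : R_1(G)\to R_1(H)$ back to an isomorphism $CDPow(G)\to CDPow(H)$. The key point is that $R_1$ is reversible: each vertex $w$ of $R_1(G)$ of color $c$ corresponds to a closed-twin class of size exactly $\phi(c)$ in $CDPow(G)$ (and similarly in $CDPow(H)$), so if $\psi(w)=w'$ then $col(w)=col(w')$ forces the two classes to have the same cardinality $\phi(col(w))$. We then build $\varphi$ by, for each vertex $w$ of $R_1(G)$, choosing an arbitrary bijection between the class over $w$ and the class over $\psi(w)$; the union of these bijections is a color-preserving bijection $V(CDPow(G))\to V(CDPow(H))$, and it preserves edges because, as noted, adjacency in $CDPow$ between two vertices depends only on which closed-twin classes they lie in (including the self-loop / within-class edges, which are present inside every class of size $\geq 1$ in exactly the same pattern on both sides since the colors agree). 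Hence $CDPow(G)\cong CDPow(H)$, completing the proof.

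The main obstacle — really the only subtlety — is making precise the claim that adjacency in $CDPow(G)$ is a function of the closed-twin classes alone, so that both the quotient edge relation on $R_1(G)$ is well-defined and the lifted map preserves edges; this rests on the earlier observation that $(x,y)\in E(DPow(G))$ iff $o(y)\mid o(x)$ and $y\in\langle x\rangle$, together with the fact that $\langle x\rangle$, $o(x)$, and $col(x)$ are constant on each closed-twin class. Care is also needed with the edges internal to a class and with self-loops, but since all vertices in a class have the same color $c$ and the class realizes the complete pattern of a cyclic group of generator-count $\phi(c)$, matching classes of equal color automatically matches these internal patterns. Everything else is routine bookkeeping about quotients of graphs by an equivalence relation on vertices.
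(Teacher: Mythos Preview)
Your proposal is correct and is exactly the argument the paper has in mind: the paper's own proof is simply the sentence ``From the discussion above, the next lemma follows easily,'' and your write-up spells out precisely that discussion --- closed-twin classes in $CDPow(G)$ are the sets $gen(\langle u\rangle)$ of size $\phi(col(u))$, the quotient by this equivalence is $R_1(G)$, isomorphisms descend to the quotient, and conversely can be lifted because the color of a class determines its size. There is no genuine difference in approach; you have merely supplied the routine details the paper omits.
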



    \begin{remark}
    It is easy to see that, we can get back an isomorphic copy of $CDPow(G)$ from $R_1(G)$, by adding $\phi(col(u))$ closed-twins at each vertex $u$ in $R_1(G)$.
\end{remark}

    Since we know that each vertex has a self-loop, for the purpose of isomorphism we can delete these self-loops. It is also easy to check that $R_1(G)$ is a transitively closed directed graph. 
    

\textbf{\emph{Reduction rule 2: }}\textbf{Edge-deletion:}
Let us consider $R_1(G)$. Do the following steps:\\
(1) Delete all self-loops.\\
(2) For all $a,b,c$, if $(a,b)$ and $ (b,c) $ are edges, then mark $(a,c)$ as a transitive edge. Then, delete all edges that are marked as transitive edges.
Let $R_2(G)$ denotes the resulting graph.

Since $R_1(G)$ is the reflexive and transitive closure of $R_2(G)$, we have the following lemma: 
\begin{lemma}
\label{edge deletion safe}
$R_1(G) \cong R_1(H)$ if and only if $R_2(G) \cong R_2(H)$.
\end{lemma}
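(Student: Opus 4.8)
The statement to prove is \Cref{edge deletion safe}: that $R_1(G) \cong R_1(H)$ if and only if $R_2(G) \cong R_2(H)$, where $R_2$ is obtained from $R_1$ by deleting self-loops and all transitive edges.

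\medskip

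The plan is to exploit the fact that $R_2(G)$ is uniquely recoverable from $R_1(G)$ and vice versa, via operations that are purely graph-theoretic (using no information about the underlying group) and hence commute with any isomorphism. First I would verify the parenthetical remark made in the text just before the lemma: that $R_1(G)$ is the reflexive–transitive closure of $R_2(G)$. The reflexive part is immediate since $R_1(G)$ has a self-loop at every vertex (inherited from $CDPow(G)$, where every element is a power of itself) and $R_2$ simply strips these off, so re-adding a loop at every vertex is well-defined. For the transitive part I would argue that $R_1(G)$ is already transitively closed (stated in the text: $(x,y),(y,z)$ edges in $DPow(G)$ force $(x,z)$ since $y=x^i, z=y^j$ gives $z=x^{ij}$, and this survives the closed-twin identification of Reduction 1), and that Reduction 2 deletes exactly those edges $(a,c)$ that are implied by transitivity from a two-step path $a\to b\to c$ with $b\neq a,c$. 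Hence no edge that is "needed" to generate the closure is deleted, and every deleted edge is regenerated when we take the transitive closure of $R_2(G)$. One must be slightly careful that deleting all transitive edges simultaneously does not destroy a path that was certifying some other edge as transitive; the cleanest way is to observe that $R_1(G)$, being a reflexive transitive relation (a preorder), and moreover the quotient by closed-twins makes it a partial order on the vertex set of $R_1(G)$ — so $R_2(G)$ is precisely the Hasse-diagram-like "cover relation plus nothing" — wait, not quite, because Reduction 2 deletes every transitive edge, not just non-cover edges; but in a finite poset, an edge $(a,c)$ is transitive (lies on a length-$\geq 2$ directed path) if and only if it is not a covering pair, so $R_2(G)$ is exactly the covering relation (Hasse diagram) of the partial order $R_1(G)$. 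This is the key structural observation, and it is a standard fact that a finite poset is uniquely determined by its Hasse diagram (the reflexive-transitive closure of the cover relation recovers the order). Actually I should double-check whether $R_1(G)$ is genuinely antisymmetric after the closed-twin identification: if $(u,v)$ and $(v,u)$ are both edges in $R_1(G)$ with $u\neq v$, then in $CDPow(G)$ we'd have $\langle u\rangle = \langle v\rangle$, so $u,v$ were closed-twins and would have been identified — contradiction. So yes, $R_1(G)$ is a partial order.

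\medskip

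With that structural picture in place, the proof splits into the two directions. For the forward direction, suppose $\varphi: R_1(G) \to R_1(H)$ is a color-preserving isomorphism. Since the poset relation is determined by the edge relation and colors, $\varphi$ is also an isomorphism of posets; isomorphisms of posets send covering pairs to covering pairs (as $(a,c)$ is a cover iff there is no $b\neq a,c$ with $a< b< c$, a condition preserved by $\varphi$), and send self-loops to self-loops trivially. Therefore $\varphi$ restricts to a bijection on the edge set of $R_2(G) = $ Hasse diagram of $R_1(G)$, onto the edge set of $R_2(H)$, and is color-preserving, so $\varphi: R_2(G)\to R_2(H)$ is an isomorphism. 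For the converse, suppose $\psi: R_2(G)\to R_2(H)$ is a color-preserving isomorphism. Taking reflexive-transitive closures of both graphs recovers $R_1(G)$ and $R_1(H)$ respectively (by the structural claim above), and since $\psi$ is a bijection on vertices preserving the edge relation of $R_2$, it preserves reachability, hence the reflexive-transitive closure relation, hence $\psi$ is an isomorphism $R_1(G)\to R_1(H)$; it remains color-preserving since the vertex set and colors are untouched by both reductions.

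\medskip

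The main obstacle I anticipate is not in either direction per se but in nailing down the structural lemma cleanly — specifically the claim that "Reduction 2 deletes exactly the non-covering edges, so $R_2$ is the Hasse diagram and its transitive-reflexive closure is $R_1$." The subtlety is the simultaneous deletion: one should confirm that if $(a,c)$ is marked transitive because of a path $a\to b\to c$, then after deleting $(a,c)$ (and possibly $(a,b)$ or $(b,c)$ if those too happen to be transitive) we can still reconstruct $(a,c)$ from the surviving cover edges by transitivity. This follows because the covering relation of a finite poset always generates the poset, and every transitive edge $(a,c)$ decomposes into a chain of cover edges none of which is itself transitive (a maximal chain from $a$ to $c$); those cover edges are never deleted. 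So the closure of $R_2(G)$ is exactly $R_1(G)$. Once this is stated carefully, both directions are short. I would present the proof by (i) recording that $R_1(G)$ is a finite partial order, (ii) identifying $R_2(G)$ with its Hasse diagram, (iii) invoking the standard poset fact, and (iv) checking that a color-preserving graph isomorphism translates between the two representations in both directions.
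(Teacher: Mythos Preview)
Your proposal is correct and follows essentially the same approach as the paper: the paper simply remarks that $R_1(G)$ is the reflexive and transitive closure of $R_2(G)$ and states the lemma as an immediate consequence, without further proof. Your write-up supplies the details behind that one-line justification (that $R_1(G)$ is a finite partial order after closed-twin identification, that $R_2(G)$ is its Hasse diagram, and that isomorphisms commute with taking/undoing the reflexive--transitive closure), so it is a faithful expansion of the paper's argument rather than a different route.
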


\begin{lemma}   
\label{structure of R_2}
The reduced graph $R_2(G)$ satisfies the following properties:
\begin{enumerate}
    \item Vertices with in-degree zero in $R_2(G)$ form a CCG-set of $G$.
    \item If $(u,v)$ is an edge in $R_2(G)$, then $col(u) > col(v)$. Moreover, $col(u)=col(v)\cdot p$ for some prime $p$.
    \item $R_2(G)$ is a directed acyclic graph.
\end{enumerate}

\end{lemma}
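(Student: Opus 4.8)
The plan is to verify the three properties directly from the structure of $R_2(G)$, which is obtained from $CDPow(G)$ by identifying closed-twins (giving $R_1(G)$), then deleting self-loops and transitive edges. The key observation throughout is that vertices of $R_1(G)$ (and hence $R_2(G)$) are in bijection with the cyclic subgroups of $G$, a vertex $u$ corresponding to a cyclic subgroup we will write $\langle u \rangle$ with $col(u) = |\langle u \rangle| = o(u)$, and that in $R_1(G)$ there is an edge $(u,v)$ precisely when $\langle v \rangle \leq \langle u \rangle$ (this is the transitive closure coming from the directed power graph, together with the closed-twin identification).

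For part (2): if $(u,v)$ is an edge in $R_2(G)$ then $(u,v)$ is a non-transitive, non-loop edge of $R_1(G)$, so $\langle v \rangle \lneq \langle u \rangle$, giving $col(v) \mid col(u)$ and $col(v) < col(u)$. If $col(u)/col(v)$ were not prime, pick a prime $p \mid col(u)/col(v)$; by the converse of Lagrange's theorem for finite cyclic groups there is a subgroup $\langle w \rangle$ with $\langle v \rangle \lneq \langle w \rangle \lneq \langle u \rangle$, so $(u,w)$ and $(w,v)$ are edges of $R_1(G)$, which makes $(u,v)$ a transitive edge and hence deleted in Reduction rule 2 --- a contradiction. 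Part (3) is then immediate: along any directed path in $R_2(G)$ the color strictly decreases by part (2), so $R_2(G)$ has no directed cycle.

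For part (1): a vertex $u$ has in-degree zero in $R_2(G)$ iff $\langle u \rangle$ is not properly contained in any cyclic subgroup other than $G$ itself, which is exactly the definition of $\langle u \rangle$ being a maximal cyclic subgroup of $G$ (in the non-cyclic case), or $u$ being the unique vertex when $G$ is cyclic. Here one must be a little careful about the edge-deletion: an edge $(w,u)$ of $R_1(G)$ with $\langle u\rangle \lneq \langle w\rangle$ could in principle be deleted as transitive, but if $\langle u \rangle$ is not maximal there is some cyclic $\langle w \rangle$ with $\langle u \rangle \lneq \langle w \rangle \subsetneq G$, and by part (2)'s argument we may take $\langle w\rangle$ to cover $\langle u\rangle$, so $(w,u)$ is a non-transitive edge of $R_1(G)$ and survives into $R_2(G)$, giving $u$ positive in-degree; conversely if $\langle u\rangle$ is maximal, any edge into $u$ in $R_1(G)$ would have to come from $G$ itself, but if $G$ is non-cyclic there is no vertex corresponding to $G$, and the self-loop at $u$ is deleted, so $u$ has in-degree zero. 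Thus the in-degree-zero vertices are exactly (one representative of each closed-twin class of) the generators of the maximal cyclic subgroups, which by \Cref{unique minimum cyclic cover} and the definition of CCG-set is a CCG-set of $G$; the cyclic case is handled separately and trivially since $R_2(G)$ is then a single vertex.

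The main obstacle is the bookkeeping in part (1) around which edges of $R_1(G)$ survive the transitive-edge deletion: one needs that an edge $(w,u)$ with $\langle w\rangle$ covering $\langle u\rangle$ is genuinely non-transitive in $R_1(G)$, i.e.\ that there is no intermediate cyclic subgroup, which is precisely the covering relation and follows from the lattice of cyclic subgroups, but it must be stated carefully. Everything else is a routine translation between the combinatorics of $R_2(G)$ and the subgroup lattice, using the converse of Lagrange's theorem for cyclic groups as the only nontrivial group-theoretic input.
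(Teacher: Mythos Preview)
Your proof takes essentially the same approach as the paper's: part (2) via an intermediate cyclic subgroup (you invoke the converse of Lagrange directly, the paper passes to the quotient $\langle u\rangle/\langle v\rangle$, but these amount to the same observation), part (3) by the strictly-decreasing-color argument, and part (1) by translating in-degree zero into maximality of $\langle u\rangle$ in the poset of cyclic subgroups. Your treatment of part (1) is in fact more explicit than the paper's about which edges survive the transitive-edge deletion and about the converse direction (that every maximal cyclic subgroup yields an in-degree-zero vertex).

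One slip to correct: $R_2(G)$ is \emph{not} a single vertex when $G$ is cyclic --- it has one vertex for each divisor of $|G|$ (indeed it is the Hasse diagram $HD[|G|]$). Relatedly, your initial ``iff'' in part (1) fails in the cyclic case: for instance in $G=\mathbb{Z}_6$ the order-$2$ subgroup is not properly contained in any cyclic subgroup other than $G$, yet the corresponding vertex has in-degree one (from the vertex for $G$). The repair is immediate and does not affect your main argument: when $G$ is cyclic the unique in-degree-zero vertex is the one corresponding to $G$ itself, and since $\{G\}$ is the minimum cyclic cover, a generator of $G$ is the CCG-set.
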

\begin{proof}
\begin{enumerate}
    \item \label{Proof: generator_in-degree}
Let us assume that $u\in V(R_2(G))$ is a vertex of in-degree zero, i.e., there is no incoming edge $(v,u)$ to $u$ for any $v\in V(R_2(G))$. This implies that in $DPow(G)$, the only incoming edges to $u$ are from its closed-twins. So, $u \notin \langle v \rangle $ for any $v$. Thus, $\langle u \rangle $ is a maximal covering cycle. Hence $u$ is a covering cycle generator (CC-generator).

 \item \label{Proof:degree_claim}
Since $(u,v)$ is an edge in $R_2(G)$, $v$ is generated by $u$ in $G$. So, $col(v)\mid col(u)$ and hence $ col(u)\geq col(v)$. Now, if $ col(u) = col(v)$, then $ \langle u \rangle =\langle v \rangle $. This means that $u$ and $v$ are closed-twins, which must have taken part in the vertex identification process in Reduction rule 1 itself. So, we can discard this case and the only possibility we are left with is $col(u)>col(v)$.

Since $\langle v \rangle \subsetneq \langle u \rangle$, we have $\langle u \rangle /\langle v \rangle \cong \mathbb{Z}_m$ for some $m$. If $m$ is not a prime, then $\mathbb{Z}_m$ has a proper nontrivial subgroup $H$ in it. So, $\langle u \rangle/ \langle v \rangle$ also has a subgroup $\langle w \rangle/ \langle v \rangle$ which is isomorphic to $H$. Therefore, $\langle v \rangle \subsetneq \langle w \rangle \subsetneq \langle u \rangle$. Now, in $R_2(G)$ there is a closed-twin of $w$, say $w'$. This means $(u,w')$ and $(w',v)$ are two edges in $R_2(G)$. But in that case, we would have marked $(u,v)$ as a transitive edge during the reduction from $R_1(G)$ to $R_2(G)$ and hence $(u,v) \notin E(R_2(G))$. This is a contradiction. So, $m$ is some prime $p$ and $|\langle u \rangle |/|\langle v \rangle|=p$. Hence, $col(u)=col(v)\cdot p$.

\item \label{Proof:DAG}
We prove this by contradiction. Let us assume the graph has a cycle, say $a_1a_2...a_ka_1$. Then, by (2) of ~\Cref{structure of R_2}, we have $col(a_1)>col(a_2)>\dots >col(a_k)>col(a_1)$, which is not possible. So, our assumption is wrong. That means $R_2(G)$ is acyclic.\end{enumerate}\end{proof}

Note that using (1) of \Cref{structure of R_2}, we can easily find a set of vertices, say $\{g_1,g_2,\ldots,g_m\}$, that form a covering cycle generating set (CCG-set) of $G$. 

\textbf{\emph{Reduction rule 3:}} \textbf{Removing the direction}: Remove the direction of the edges in $R_2(G)$ to obtain an undirected colored graph $R_3(G)$.

 Note that the CCG-set of $G$ can still be identified easily in $R_3(G)$: A vertex $g$ is a CC-generator if and only if all its neighbours have smaller orders (or colors).
 
 The following result is an easy consequence of (2) of \Cref{structure of R_2}. 

\begin{lemma}
\label{removing direction is safe}
 $R_2(G) \cong R_2(H)$ if and only if $R_3(G) \cong R_3(H)$.
\end{lemma}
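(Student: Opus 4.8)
The statement to prove is \Cref{removing direction is safe}: that $R_2(G) \cong R_2(H)$ if and only if $R_3(G) \cong R_3(H)$.

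The plan is to show that the orientation of $R_2(G)$ is completely determined by the vertex colors, so that dropping the directions loses no information. First I would observe the forward direction is trivial: any color-preserving isomorphism $R_2(G) \to R_2(H)$ is in particular a map of vertex sets that preserves adjacency (ignoring direction) and colors, hence is an isomorphism $R_3(G) \to R_3(H)$.

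For the nontrivial direction, suppose $f : R_3(G) \to R_3(H)$ is a color-preserving isomorphism of the undirected colored graphs. I claim $f$ is automatically an isomorphism of the \emph{directed} graphs $R_2(G) \to R_2(H)$. The key point is part (2) of \Cref{structure of R_2}: if $(u,v) \in E(R_2(G))$ then $col(u) > col(v)$. Therefore, for any undirected edge $\{u,v\}$ of $R_3(G)$, exactly one of the two colors is strictly larger (they cannot be equal, again by part (2), since $col(u) = col(v) \cdot p$), and the directed edge in $R_2(G)$ points from the larger-colored endpoint to the smaller-colored one. The same holds in $R_2(H)$. So: given an edge $\{u,v\}$ of $R_3(G)$ with, say, $col(u) > col(v)$, we have $(u,v) \in E(R_2(G))$; since $f$ preserves colors, $col(f(u)) = col(u) > col(v) = col(f(v))$, and since $f$ preserves undirected adjacency, $\{f(u), f(v)\}$ is an edge of $R_3(H)$, hence $(f(u), f(v)) \in E(R_2(H))$. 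This shows $f$ maps directed edges to directed edges; applying the same argument to $f^{-1}$ (also a color-preserving isomorphism of undirected graphs) gives the converse, so $f$ is a directed isomorphism. Hence $R_2(G) \cong R_2(H)$.

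I do not anticipate a genuine obstacle here; the only thing to be careful about is making explicit that in $R_3(G)$ no edge joins two equal-colored vertices (so "larger color" always picks out a unique endpoint), which is exactly the "$col(u) = col(v)\cdot p$" clause of part (2) of \Cref{structure of R_2} — equal colors would force $p = 1$, impossible. With that in hand the proof is just the two short implications above.
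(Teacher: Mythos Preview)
Your proof is correct and is exactly the argument the paper has in mind: the paper simply states that the lemma is an easy consequence of part (2) of \Cref{structure of R_2}, and your write-up spells out precisely that consequence, recovering the orientation of each edge from the strict color inequality.
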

\begin{definition}
    A path $u_1u_2\ldots u_l$ in $R_3(G)$ is said to be a \emph{descendant path} if $col(u_i) > col(u_{i+1})$. The vertices in the graph reachable from $u$ using descendant path are called \textit{descendant reachable} vertices from $u$. We denote the set of descendant reachable vertices from $u$ in $R_3(G)$ by $Des(u)$.
\end{definition}
 Observe that $Des(u)$ in $R_3(G)$ is same as the closed out-neighborhood of $u$ in $R_1(G)$. The colors of the vertices of $Des(u)$ in $R_3(G)$ form the set of all divisors of $col(u)$. Also, no two vertices of $Des(u)$ in $R_3$ have the same color.

\begin{theorem}\label{R_3 of p-group is tree}
 If $G$ is a finite p-group, then $R_3(G)$ is a colored tree.
\end{theorem}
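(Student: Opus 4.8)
The plan is to identify $R_3(G)$ concretely and then show it is connected with exactly $|V(R_3(G))|-1$ edges, which forces it to be a tree; the colouring is inherited for free. Recall that after Reduction rule 1 the vertices of $R_1(G)$, and hence of $R_2(G)$ and $R_3(G)$, are in bijection with the cyclic subgroups of $G$: the closed-twin class of $u$ collapses to a single vertex which we identify with $\langle u\rangle$, coloured by $o(u)=|\langle u\rangle|$. By the construction of $R_2(G)$ as a transitive reduction, $\{u,v\}$ is an edge of $R_3(G)$ exactly when one of $\langle u\rangle,\langle v\rangle$ is contained in the other with no cyclic subgroup strictly in between; that is, $R_3(G)$ is the underlying graph of the Hasse diagram of the poset of cyclic subgroups of $G$ ordered by inclusion, and $\langle e\rangle$ is its unique vertex of colour $1$.

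Next I would prove the crucial claim: \emph{every vertex $u\neq e$ of $R_3(G)$ has exactly one neighbour of strictly smaller colour.} Since $G$ is a $p$-group, $o(u)=p^k$ with $k\geq 1$. If $v$ is any neighbour of $u$ with $col(v)<col(u)$, then by (2) of \Cref{structure of R_2} we have $\langle v\rangle\subsetneq\langle u\rangle$ and $col(v)=p^{k-1}$, so $\langle v\rangle$ is a subgroup of $\langle u\rangle$ of order $p^{k-1}$; by the converse of Lagrange's theorem for finite cyclic groups, $\langle u\rangle$ has a unique such subgroup. Conversely that subgroup $C$ is genuinely covered by $\langle u\rangle$ in the poset of cyclic subgroups of $G$, since any cyclic subgroup strictly between would have order strictly between $p^{k-1}$ and $p^k$, which is impossible; hence $\{u,C\}$ is indeed an edge. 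So $u$ has precisely one neighbour of smaller colour, the unique cyclic subgroup of order $o(u)/p$ inside $\langle u\rangle$, which I will call the \emph{parent} of $u$. Iterating $u\mapsto\mathrm{parent}(u)$ strictly decreases the colour through $p^k,p^{k-1},\dots,p^0$, so every vertex is joined to $\langle e\rangle$ by a descendant path and $R_3(G)$ is connected.

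Finally I would close in either of two equivalent ways. Counting: every edge of $R_3(G)$ has endpoints of different colours by (2) of \Cref{structure of R_2}, so writing an edge as $\{a,b\}$ with $col(a)>col(b)$, the claim gives $b=\mathrm{parent}(a)$ and $a\neq e$; thus $u\mapsto\{u,\mathrm{parent}(u)\}$ is a bijection from $V(R_3(G))\setminus\{e\}$ onto $E(R_3(G))$, so $|E(R_3(G))|=|V(R_3(G))|-1$, and a connected graph with this many edges is a tree. Alternatively, for acyclicity directly: if $R_3(G)$ had a cycle, pick a vertex $u$ on it of maximum colour; its two distinct cycle-neighbours both have smaller colour (colours differ along edges), contradicting the claim. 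Either way $R_3(G)$ is a tree, coloured by construction.

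The entire weight of the argument, and the only place the hypothesis enters, is the uniqueness of the index-$p$ subgroup of a cyclic $p$-group — precisely what makes "each vertex has a unique parent." This is immediate from the converse of Lagrange for cyclic groups, so I expect no real obstacle here; the only care needed is the routine bookkeeping of translating the three reductions back into the language of cyclic subgroups so that the covering relation is correctly identified.
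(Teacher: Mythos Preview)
Your proof is correct and follows essentially the same approach as the paper: the paper assumes a cycle, takes a vertex $u_0$ of maximum colour on it, and observes that its two cycle-neighbours both lie in $Des(u_0)$ with the same colour $p^{t-1}$, contradicting the uniqueness of colours in $Des(u_0)$ --- which is exactly your ``unique parent'' claim, and your alternative acyclicity argument is the paper's argument almost verbatim. You are more explicit about connectivity (the paper leaves it implicit from the construction) and you add the edge-counting closure as an alternative, but the core idea is the same.
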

\begin{proof}

Let $|G|=p^\alpha$. Any edge in $R_3(G)$ is of the form $\{u,v\}$ where by using (2) of \Cref{structure of R_2} we can assume without loss of generality that $col(u)=p^t$ and $col(v)= p^{t-1},$ for some $t \in \{1,2,\dots,\alpha\}$. Suppose the graph contains a cycle $ u_0 u_1 u_2 \dots u_n u_0 $. By (2) of \Cref{structure of R_2}, we can assume without loss of generality that the colors of the vertices form the following sequence: $p^tp^{t-1}p^{t-2} \dots p^{t-(i-1)}p^{t-i}p^{t-(i-1)} \dots p^{t-1}p^{t}$ for some $i$. Now, $u_1, u_n \in Des(u_0)$ such that $col(u_1)=col(u_n)=p^{t-1}$. This is a contradiction since no two vertices of $Des(u)$ for any vertex $u$ have the same color. Hence, our assumption is wrong and $R_3(G)$ has no cycle. 
\end{proof}

Since the isomorphism of trees can be tested in linear time (see, for example, \cite{aho1974design}), the isomorphism of the directed power graphs arising from $p$-groups can also be tested in linear time. 

 
Now we extend our algorithm to check the isomorphism of directed power graphs of finite nilpotent groups. 
For that, we use the following two results.

\begin{lemma}\cite{mukherjee2019hamiltonian}
\label{directed power graph of direct product is strong product}
    Let $G_1$ and $G_2$ be two finite groups such that $|G_1|$ and $|G_2|$ are co-prime to each other. Then, $DPow(G_1\times G_2)=DPow(G_1)\boxtimes DPow(G_2)$, where $\boxtimes$ denotes the strong product of two graphs.
\end{lemma}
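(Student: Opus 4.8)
The plan is to prove Lemma~\ref{directed power graph of direct product is strong product}, which states that $DPow(G_1\times G_2)=DPow(G_1)\boxtimes DPow(G_2)$ when $\gcd(|G_1|,|G_2|)=1$. Since the excerpt attributes this to \cite{mukherjee2019hamiltonian}, I expect the proof to be short and self-contained; the strategy is to directly verify that the edge relations coincide on the common vertex set $G_1\times G_2$.

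\textbf{Setup.} Both graphs have the same vertex set, namely $V(DPow(G_1\times G_2)) = G_1\times G_2 = V(DPow(G_1))\times V(DPow(G_2)) = V(DPow(G_1)\boxtimes DPow(G_2))$, so it suffices to show the two edge sets agree. First I would recall the key number-theoretic fact enabled by coprimality: for $(x,y)\in G_1\times G_2$ we have $o((x,y)) = \mathrm{lcm}(o(x),o(y)) = o(x)\cdot o(y)$, and more importantly, the cyclic subgroup satisfies $\langle (x,y)\rangle = \langle x\rangle \times \langle y\rangle$. This is because $|\langle x\rangle \times \langle y\rangle| = o(x)o(y) = o((x,y)) = |\langle(x,y)\rangle|$ and $\langle(x,y)\rangle \subseteq \langle x\rangle\times\langle y\rangle$.

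\textbf{Forward direction.} Suppose $((u,u'),(v,v'))$ is an edge in $DPow(G_1\times G_2)$ with $(u,u')\neq(v,v')$, i.e., $(v,v') = (u,u')^m = (u^m, (u')^m)$ for some integer $m$. Then $v = u^m$ and $v' = (u')^m$, so $v\in\langle u\rangle$ and $v'\in\langle u'\rangle$. I would then split into cases matching Definition~\ref{Definition:strong product}: if $u=v$ and $u'\neq v'$, then $(u',v')\in E(DPow(G_2))$, giving case~1; if $u'=v'$ and $u\neq v$, case~2; if both $u\neq v$ and $u'\neq v'$, then $(u,v)\in E(DPow(G_1))$ and $(u',v')\in E(DPow(G_2))$, giving case~3. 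In each case the pair is an edge of the strong product.

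\textbf{Reverse direction and the main obstacle.} Conversely, suppose $((u,u'),(v,v'))$ is an edge of $DPow(G_1)\boxtimes DPow(G_2)$. In cases~1 and~2 one coordinate is fixed and the other is a power, so $(v,v')$ is trivially a power of $(u,u')$ (using that a fixed element is its own first power). The delicate case is case~3: we know $v = u^a$ for some $a$ and $v' = (u')^b$ for some $b$, with possibly $a\neq b$, and we must produce a \emph{single} exponent $m$ with $v = u^m$ and $v' = (u')^m$ simultaneously. This is exactly where coprimality is essential: by the Chinese Remainder Theorem, since $\gcd(o(u),o(u'))=1$, there is an integer $m$ with $m\equiv a \pmod{o(u)}$ and $m\equiv b\pmod{o(u')}$, whence $u^m = u^a = v$ and $(u')^m = (u')^b = v'$, so $(v,v') = (u,u')^m$ is an edge in $DPow(G_1\times G_2)$. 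The only real care needed is handling the degenerate subcases (whether $u=v$ or $u'=v'$ can still occur within "case 3" bookkeeping) and confirming the self-loop/distinctness conventions line up between the two definitions; these are routine. I would conclude that the two edge sets are equal, hence the graphs are equal.
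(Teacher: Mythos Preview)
The paper does not supply its own proof of this lemma; it is quoted from \cite{mukherjee2019hamiltonian} and used as a black box. Your argument is the standard one and is correct.

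One small remark: your treatment of cases~1 and~2 in the reverse direction is not actually ``trivial'' in the way you phrase it. Knowing $u=v$ and $v'=(u')^b$ does not by itself produce a single exponent $m$ with $(u,u')^m=(v,v')$, since $u^b$ need not equal $u$. This gap is of course filled either by the same CRT step you spell out for case~3, or---more cleanly---by the fact you already recorded in the setup, namely $\langle(u,u')\rangle=\langle u\rangle\times\langle u'\rangle$. Once you know $v\in\langle u\rangle$ and $v'\in\langle u'\rangle$, it follows immediately that $(v,v')\in\langle(u,u')\rangle$ in all three cases simultaneously, and the explicit CRT computation becomes unnecessary. Your observation about the self-loop convention is well taken; the paper's strong-product definition speaks only of edges to \emph{distinct} vertices, so strictly the equality holds up to self-loops, which is harmless for the applications.
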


\begin{lemma}\cite{mckenzie1971cardinal}
\label{unique prime factorisation of strong product of digraphs}
There exists a unique prime factor decomposition of a  simple connected \footnote{Here \textit{connected directed graph} means that the underlying undirected graph is connected.} directed graph with respect to strong product and the uniqueness is up
to isomorphism and ordering   of the factors.
\end{lemma}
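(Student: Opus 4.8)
The statement is the classical unique prime factorization theorem for the strong product of (di)graphs, essentially due to McKenzie \cite{mckenzie1971cardinal}; the plan is to reconstruct its proof by transferring the problem to the much better understood Cartesian product via the \emph{Cartesian skeleton}. The first step is a reduction to \emph{thin} digraphs, i.e.\ digraphs in which no two distinct vertices share both their closed out-neighbourhood and their closed in-neighbourhood. Using the identity $N[(a,b)] = N[a]\times N[b]$ for closed neighbourhoods in a strong product, one checks that a pair of closed twins in $G\boxtimes H$ is exactly a pair whose coordinates are closed twins in $G$ and in $H$; hence $G\boxtimes H$ is thin if and only if both $G$ and $H$ are thin, and the closed-twin classes of $G\boxtimes H$ are products of those of $G$ and of $H$. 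One then argues that the $\boxtimes$-prime factorization of an arbitrary connected digraph is recovered from that of an associated thin digraph together with the twin-multiplicity data — this lifting step has to be handled with some care, since the complete-graph identity $K_{mn}=K_m\boxtimes K_n$ shows that multiplicities genuinely affect primality.

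For connected \emph{thin} digraphs, the second step introduces the Cartesian skeleton $\mathbb{S}(G)$: the spanning subgraph of (the underlying graph of) $G$ whose edges are the ``Cartesian'' ones, singled out by a purely Boolean condition on closed neighbourhoods (those edges that cannot be obtained as a ``diagonal'' combination of other edges). The three properties to establish are: (i) $\mathbb{S}(G)$ depends only on the isomorphism type of $G$; (ii) $\mathbb{S}(G\boxtimes H)=\mathbb{S}(G)\,\square\,\mathbb{S}(H)$, where $\square$ is the Cartesian product; and (iii) if $G$ is connected then $\mathbb{S}(G)$ is connected. Granting these, any $\boxtimes$-factorization $G=G_1\boxtimes\cdots\boxtimes G_k$ induces the Cartesian factorization $\mathbb{S}(G)=\mathbb{S}(G_1)\,\square\,\cdots\,\square\,\mathbb{S}(G_k)$ of a connected graph; by the Sabidussi--Vizing theorem the Cartesian prime factorization of a connected graph is unique up to isomorphism and order of the factors, and one shows that the $\boxtimes$-prime factors of $G$ are recovered canonically by a suitable grouping of the Cartesian coordinates of $\mathbb{S}(G)$, which forces the $\boxtimes$-factorization of $G$ to be unique as well.

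An alternative route, adequate for the use made of the lemma in this paper, is available because the digraphs to which we apply it are reflexive (each $DPow(G)$ has a self-loop at every vertex): on reflexive relational structures the strong product coincides with McKenzie's cardinal product, and his refinement and cancellation theorems for connected structures carrying a reflexive relation yield the unique factorization directly \cite{mckenzie1971cardinal}. I expect the main obstacle in the skeleton-based proof to be the verification of the multiplicativity $\mathbb{S}(G\boxtimes H)=\mathbb{S}(G)\,\square\,\mathbb{S}(H)$ and of the connectedness of $\mathbb{S}(G)$ in the directed setting — these are precisely the points where the interaction of closed neighbourhoods with the strong product must be controlled — together with the somewhat delicate bookkeeping needed to pass between an arbitrary connected digraph and its thin reduction.
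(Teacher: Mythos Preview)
The paper does not prove this lemma; it is stated as a known result and attributed to McKenzie~\cite{mckenzie1971cardinal} with no accompanying argument. In fact, the paper immediately remarks that \Cref{isomorphism of strong product} (the only place where this lemma would be used) can be established directly without invoking unique factorization at all, and that direct proof is what the paper actually supplies in the appendix. So there is no ``paper's own proof'' to compare your proposal against.

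Your sketch --- reduction to thin digraphs via closed-twin classes, construction of the Cartesian skeleton, transfer to Sabidussi--Vizing uniqueness for the Cartesian product, and the alternative route through McKenzie's refinement theorems for reflexive structures --- is a reasonable outline of the standard modern proof (as in, e.g., the Hammack--Imrich--Klav\v{z}ar treatment). For the purposes of this paper, however, none of that machinery is needed: the relevant application is \Cref{isomorphism of strong product}, and its self-contained proof only uses that out-degrees in $DPow(G_1\times G_2)$ factor along the direct product when the orders are coprime, so an isomorphism of directed power graphs restricts to isomorphisms on the subsets of vertices of $p$-power out-degree for each prime $p$.
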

One can easily verify that the following lemma holds from the above two lemmas. However, we can prove  \Cref{isomorphism of strong product} without using \Cref{unique prime factorisation of strong product of digraphs} and the  proof is given in \Cref{proof of lemma 35}.



\begin{restatable}{lemma}{isoStrongProduct}
\label{isomorphism of strong product}
Let  $G=G_1\times G_2 \times \dots \times G_k$ and   $H=H_1\times H_2 \times \dots \times H_k$ where $|G_i|= |H_i|$ 
 for all $1 \leq i \leq k$. Suppose that $gcd(|G_i|,|G_j|)=gcd(|H_i|,|H_j|)=1$, for all $1\leq i < j \leq k$. Then, $DPow(G)\cong DPow(H)$ if and only if $DPow(G_i)\cong DPow(H_i)$, for all $1 \leq i \leq k$.
    
\end{restatable}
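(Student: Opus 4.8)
\textbf{Proof proposal for \Cref{isomorphism of strong product}.}

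The plan is to induct on $k$, so the real content is the case $k=2$: if $|G_1|=|H_1|$, $|G_2|=|H_2|$, and $\gcd(|G_1|,|G_2|)=\gcd(|H_1|,|H_2|)=1$, then $DPow(G_1\times G_2)\cong DPow(H_1\times H_2)$ iff $DPow(G_1)\cong DPow(H_1)$ and $DPow(G_2)\cong DPow(H_2)$. The "if" direction is immediate from \Cref{directed power graph of direct product is strong product}: componentwise isomorphisms of the factors induce an isomorphism of the strong products $DPow(G_1)\boxtimes DPow(G_2)$ and $DPow(H_1)\boxtimes DPow(H_2)$. For the "only if" direction, the promised approach avoids invoking the general unique-prime-factorization theorem for strong products (\Cref{unique prime factorisation of strong product of digraphs}) and instead exploits the arithmetic structure specifically available in directed power graphs.

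The key observation I would use is that out-degrees are orders of group elements (as noted in the paper), and in $DPow(G_1\times G_2)$ the vertex $(x_1,x_2)$ has out-degree $o(x_1)\cdot o(x_2)$ because the orders are coprime; moreover $o(x_1)$ and $o(x_2)$ are recoverable from $o((x_1,x_2))$ as its $\pi_1$-part and $\pi_2$-part, where $\pi_i$ is the set of primes dividing $|G_i|$. So from the (uncolored) graph $DPow(G)$ one can reconstruct a vertex-coloring by the pair $(o(x_1),o(x_2))$ once the partition of the relevant primes into $\pi_1\sqcup\pi_2$ is known — and this partition is itself determined by $|G_1|,|G_2|$, which equal $|H_1|,|H_2|$. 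The plan is then: (i) show that the sets $V_1=\{(x_1,e):x_1\in G_1\}$ and $V_2=\{(e,x_2):x_2\in G_2\}$ are identifiable from $DPow(G)$ together with the prime partition — e.g. $V_i$ consists of the identity together with all vertices whose out-degree is a $\pi_i$-number, and the induced subgraph on $V_i$ is exactly $DPow(G_i)$; (ii) observe that an arbitrary isomorphism $f:DPow(G)\to DPow(H)$ must preserve out-degrees (hence the associated $(o(x_1),o(x_2))$-coloring, since the prime partition is common to both sides), so $f$ carries $V_i$ to the analogous set $W_i$ in $DPow(H)$; (iii) conclude $f|_{V_i}$ is an isomorphism $DPow(G_i)\cong DPow(H_i)$. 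Finally, induct: write $G=G_1\times(G_2\times\cdots\times G_k)$, apply the $k=2$ case to split off $DPow(G_1)\cong DPow(H_1)$ and $DPow(G_2\times\cdots\times G_k)\cong DPow(H_2\times\cdots\times H_k)$, and recurse.

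The main obstacle I anticipate is step (i)–(ii): justifying that the subgraph $DPow(G_i)$ sits inside $DPow(G)$ as a \emph{combinatorially recognizable} subgraph, robustly enough that \emph{any} abstract isomorphism (not one known in advance to respect the product structure) must send it to the corresponding subgraph on the $H$-side. The clean way is to argue purely via out-degrees: an isomorphism preserves out-degree; the vertex set $V_i\setminus\{(e,e)\}$ is precisely $\{v : \mathrm{out\text{-}deg}(v)\text{ is a }\pi_i\text{-number}>1\}$, which is an isomorphism-invariant set because $\pi_1,\pi_2$ are determined by the (common) orders $|G_1|=|H_1|$, $|G_2|=|H_2|$; adding back the unique out-degree-$1$ vertex gives $V_i$, and $DPow(G)[V_i]=DPow(G_i)$ by \Cref{directed power graph of direct product is strong product} (since $(x_1,e)\mapsto (y_1,e)$ iff $x_1\mapsto y_1$, as $o(e)=1$ is coprime to everything). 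One subtlety to handle carefully is the degenerate situation where some $G_i$ is trivial or where $\pi_i=\varnothing$, and the fact that $DPow(G)$ is connected (every element is a power-graph-neighbor of $e$), which is needed so the uniqueness discussion is not vacuous; these are easily dispatched but should be mentioned. Once the recognizability is pinned down, the rest is bookkeeping.
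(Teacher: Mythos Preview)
Your proposal is correct and follows essentially the same route as the paper: reduce to $k=2$, get the ``if'' direction from \Cref{directed power graph of direct product is strong product} by taking the product of componentwise isomorphisms, and for the ``only if'' direction use that any isomorphism preserves out-degrees, so it must carry the set of vertices whose out-degree divides $|G_i|$ (your $\pi_i$-number condition) to the corresponding set on the $H$-side, after which restriction gives the factor isomorphisms. The paper's write-up is slightly terser and omits the explicit induction and the degenerate-case remarks you mention, but the argument is the same.
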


We are now ready to present one of the main results of the paper. Namely, we show that the isomorphism of the directed power graphs of nilpotent groups can be tested in polynomial time. Let $\mathcal{DP}ow_{nil}=\{DPow(G)\ : \ G \textrm{ is a finite nilpotent group}\}$. 

Theorem \ref{R_3 of p-group is tree} and  Lemma \ref{isomorphism of strong product} suggest that obtaining the directed power graphs corresponding to the factor groups might be useful. One approach would be to decompose an input directed power graph into prime factors with respect to strong product in polynomial time using the algorithm by Hellmuth et al. \cite{hellmuth2015cartesian}. Note that, in a general setting, the prime graphs in the strong product decomposition may not correspond to the directed power graphs of the direct factors of the underlying group. We are also not sure if the Sylow-$p$ subgroups of a nilpotent group generate prime graphs. If not, then just applying the algorithm of Hellmuth et al. is not enough and we need to regroup the prime factors properly to apply Theorem \ref{R_3 of p-group is tree}. Fortunately, all these complications can be easily avoided as shown in the next theorem. 
\begin{theorem}\label{Iso nilpotent}
    There is an efficient polynomial time algorithm that on inputs $\Gamma_1,\Gamma_2 \in \mathcal{DP}ow_{nil}$ checks if $\Gamma_1$ and $\Gamma_2$ are isomorphic.
\end{theorem}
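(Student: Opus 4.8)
The plan is to reduce the isomorphism test for directed power graphs of nilpotent groups to the $p$-group case handled by \Cref{R_3 of p-group is tree}, via the structural results already established. First I would take the inputs $\Gamma_1 = DPow(G_1)$ and $\Gamma_2 = DPow(G_2)$ and construct the colored graphs $CDPow(G_i)$ by coloring each vertex with its out-degree (which equals the order of the corresponding group element, as noted before \Cref{twin reduction is safe}); this step is purely local and polynomial time. Next I would apply Reduction rules 1, 2, and 3 in turn to obtain $R_3(G_1)$ and $R_3(G_2)$; by \Cref{twin reduction is safe}, \Cref{edge deletion safe}, and \Cref{removing direction is safe} these reductions are isomorphism-invariant and reversible, so $\Gamma_1 \cong \Gamma_2$ iff $R_3(G_1) \cong R_3(G_2)$ as colored graphs. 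Each reduction is clearly computable in polynomial time from the previous graph.

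The key structural observation is that a finite nilpotent group $G$ is the direct product $G = P_1 \times \cdots \times P_k$ of its Sylow subgroups, with the orders $|P_i|$ pairwise coprime, so by \Cref{isomorphism of strong product} we have $DPow(G_1) \cong DPow(G_2)$ iff the multiset of directed power graphs of the Sylow subgroups of $G_1$ matches (up to isomorphism) that of $G_2$ — once we can identify those Sylow factors from the graph. The crucial point is that we never need the groups themselves: in $R_3(G)$ we can read off the prime-power structure directly. Indeed, the colors appearing in $R_3(G)$ are exactly the element orders of $G$; for each prime $p$ dividing $|G|$, the set of vertices whose color is a power of $p$, together with the induced colored subgraph, is precisely $R_3(P_p)$ for the Sylow $p$-subgroup $P_p$ (a $p$-power element lies in the unique Sylow $p$-subgroup, and the reduction rules respect this partition since every edge of $R_2(G)$ changes the color by a single prime factor, by (2) of \Cref{structure of R_2}). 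Thus from $R_3(G_i)$ I would split off, for each prime $p$, the subgraph $R_3((P_p)_i)$ on $p$-power-colored vertices, which by \Cref{R_3 of p-group is tree} is a colored tree.

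So the algorithm is: compute $R_3(G_1), R_3(G_2)$; collect the set of primes appearing as color bases; for each such prime $p$ extract the colored tree $R_3((P_p)_1)$ and $R_3((P_p)_2)$; test colored-tree isomorphism for each $p$ in linear time using the standard AHU algorithm \cite{aho1974design}; and answer ``isomorphic'' iff the primes and their sizes match up and every corresponding pair of colored trees is isomorphic. Correctness follows by combining \Cref{removing direction is safe} (and the earlier reduction lemmas), the identification of the $p$-power-colored subgraph of $R_3(G)$ with $R_3$ of the Sylow $p$-subgroup, \Cref{R_3 of p-group is tree}, and \Cref{isomorphism of strong product} applied to the Sylow decomposition. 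The main obstacle is the middle step: one must argue carefully that the reductions commute with the Sylow decomposition, i.e.\ that $R_3$ of a direct product of coprime-order groups restricts, on the vertices colored by powers of a fixed prime, exactly to $R_3$ of the corresponding factor — this uses that closed-twins, transitive edges, and single-prime color jumps are all ``internal'' to a Sylow component. Everything else is routine bookkeeping and invocations of the cited results.
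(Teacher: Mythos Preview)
Your proposal is correct and uses the same core ingredients as the paper: the Sylow decomposition of a nilpotent group, \Cref{isomorphism of strong product}, \Cref{R_3 of p-group is tree}, and tree isomorphism. The one difference worth noting is the order of operations. You apply all three reductions first and then extract the $p$-power-colored subgraph of $R_3(G)$, which forces you to verify that the reductions commute with the Sylow restriction (the ``obstacle'' you flag). The paper instead extracts the Sylow factors directly from the input directed power graph: the induced subgraph of $DPow(G)$ on the vertices of $p$-power out-degree is trivially $DPow(P_p)$, since in a nilpotent group the $p$-power elements are exactly the elements of the Sylow $p$-subgroup. Only afterwards does the paper apply the reductions to each $DPow(P_p)$ separately to reach the colored tree. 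This ordering sidesteps your commutation check entirely, so the paper's route is slightly cleaner, but the two arguments are otherwise the same.
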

\begin{proof}

We know that a finite nilpotent group is the direct product of its Sylow subgroups.  Since the orders of the Sylow subgroups are coprime with each other, by \Cref{isomorphism of strong product}, $\Gamma_1$ and $\Gamma_2$ are isomorphic if and only if for each prime $p$ dividing $|V(\Gamma_1)|$ (which is same as the order of the underlying group), the directed power graphs of the Sylow $p$-subgroups of the underlying groups of $\Gamma_1$ and $\Gamma_2$ are isomorphic.

Therefore, if we can find the directed power graphs of the Sylow subgroups associated with each prime divisor, we can test the isomorphism of  $\Gamma_1$ and $\Gamma_2$.

Note that, the underlying groups are not given as input. However, we can still compute the directed power graph of a Sylow $p$-subgroup of an input graph by finding the set $V_p$ of all vertices whose order in the underlying group is $p^i$ for some $i\geq 0$. More precisely, the subgraph induced by the set $V_p$ is the directed power graph associated with the Sylow $p$-subgroup. Note that the order of a vertex (which is also an element in the underlying group) is just the out-degree of the vertex in the directed power graph.   
\end{proof}

     
        

    We show that all the isomorphism invariant information of $R_3(G)$ is captured by a) the CCG-set of $G$ in $R_3(G)$ along with their colors, and b) elements corresponding to their pairwise common neighborhood along with their colors. For this, we do a further reduction. The results in the rest of this section is required in \Cref{section 7 Q2}.
    
We define a new simple undirected colored graph $HD[n]=(V,E)$ for any natural number $n$, where $V=\{ d : d|n\}$. The name of each vertex is treated as its color, i.e., here $col(v)=v$.
The edge set is $E=\{ \{u,v\}: v=u\cdot p \text{ or } u=v\cdot p \text{ for some prime $p$}\}$.
One can see that $HD[n]$ is the Hasse diagram of the POSET defined over the set of all divisors of $n$ with respect to the divisibility relation. Moreover, $HD[n]$ is also isomorphic to $R_3(\mathbb{Z}_n)$ (as a consequence of (2) of \Cref{structure of R_2}).

\begin{remark}
\label{Des(g_i)=HD[g_i]}
(1) It is easy to see that $R_3(G)[Des(g_i)]$ is isomorphic to $HD[col(g_i)]$ for all $1\leq i \leq m$. We can see that the isomorphism is unique as in each of these graphs, there is only one vertex with a particular color.\\
(2) Note that $\{y,y'\}\in E(R_3(G))$ if and only if (a) $y$, $y' \in Des(g_i)$ for some $1\leq i\leq m$ and (b) $col(y)=p\cdot col(y')$ or $col(y')=p\cdot col(y)$ for some prime $p$.
\end{remark}

Let $\Bar{I}(i,j)$ denotes the vertex in $R_3(G)$ that is of maximum color among the common descendant reachable vertices from both $g_i$ and $g_j$. It is not hard to see that in the group $G$, $col(\Bar{I}(i,j))=|\langle g_i \rangle \cap \langle g_j \rangle |$. Note that for two distinct pairs $(i,j)$ and $(i',j')$, $\Bar{I}(i,j)$ and $\Bar{I}(i',j')$ can be the same vertex in $R_3(G)$.

  \begin{claim}
 In $R_3(G)$, $gcd(col(\Bar{I}(i,j)),col(\Bar{I}(s,j)))$ divides $col(\Bar{I}(i,s))$.
 \label{intersection}
\end{claim}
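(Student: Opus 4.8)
\textbf{Proof plan for Claim~\ref{intersection}.}
The plan is to translate the entire statement back into the language of the underlying group $G$, where the quantities involved have clean group-theoretic meaning, prove a divisibility statement about the orders of intersections of cyclic subgroups, and then transport the conclusion back to $R_3(G)$. Recall from the discussion preceding the claim that $col(\bar I(i,j)) = |\langle g_i\rangle \cap \langle g_j\rangle|$, and similarly $col(\bar I(s,j)) = |\langle g_s\rangle \cap \langle g_j\rangle|$ and $col(\bar I(i,s)) = |\langle g_i\rangle \cap \langle g_s\rangle|$. So it suffices to show that
\[
\gcd\bigl(|\langle g_i\rangle \cap \langle g_j\rangle|,\ |\langle g_s\rangle \cap \langle g_j\rangle|\bigr)\ \Big|\ |\langle g_i\rangle \cap \langle g_s\rangle|.
\]

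First I would reduce to a statement about cyclic subgroups of a fixed cyclic group. Let $A = \langle g_i\rangle \cap \langle g_j\rangle$ and $B = \langle g_s\rangle \cap \langle g_j\rangle$. Both $A$ and $B$ are subgroups of the cyclic group $\langle g_j\rangle$, hence $A$ and $B$ are themselves cyclic, and their product $AB$ is again a cyclic subgroup of $\langle g_j\rangle$ (inside a cyclic group the collection of subgroups is totally ordered by divisibility of orders, so in fact one of $A,B$ contains the other, but I will not even need that). The key arithmetic input is that in a cyclic group the subgroup lattice is isomorphic to the divisor lattice of the order: intersection corresponds to $\gcd$ and the subgroup generated by two subgroups corresponds to $\operatorname{lcm}$. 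In particular $|A\cap B| = \gcd(|A|,|B|)$. Now observe that $A\cap B = \langle g_i\rangle \cap \langle g_j\rangle \cap \langle g_s\rangle \subseteq \langle g_i\rangle \cap \langle g_s\rangle$, so $|A \cap B|$ divides $|\langle g_i\rangle \cap \langle g_s\rangle|$ by Lagrange's theorem. Combining, $\gcd(|A|,|B|) = |A\cap B|$ divides $|\langle g_i\rangle \cap \langle g_s\rangle| = col(\bar I(i,s))$, which is exactly what we want.

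To make the translation rigorous I would spell out why $col(\bar I(i,j)) = |\langle g_i\rangle \cap \langle g_j\rangle|$ in the first place, using \Cref{Des(g_i)=HD[g_i]}: the set $Des(g_i)$ in $R_3(G)$ is the closed out-neighbourhood of $g_i$ in $R_1(G)$, whose vertices carry exactly the colours that are divisors of $col(g_i)$ (one vertex per divisor), and these colours are precisely the orders of the cyclic subgroups of $\langle g_i\rangle$. Hence the common descendant-reachable vertices of $g_i$ and $g_j$ carry exactly the colours that are common divisors of $col(g_i)$ and $col(g_j)$, i.e.\ the orders of subgroups lying in both $\langle g_i\rangle$ and $\langle g_j\rangle$; the maximum such colour is the order of the (unique) largest such subgroup, namely $\langle g_i\rangle\cap\langle g_j\rangle$. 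Then the $\gcd$ appearing in the claim is, via this dictionary, nothing but $\gcd$ of two divisors of $col(g_j)$, which equals the colour of the intersection subgroup $A\cap B$ inside $\langle g_j\rangle$.

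The only mild subtlety — and hence the main obstacle, though it is not a deep one — is being careful that the identifications $col(\bar I(\cdot,\cdot)) = |\langle \cdot\rangle \cap \langle\cdot\rangle|$ hold as stated and that the $\gcd$ taken among \emph{colours in the graph} really does coincide with the group-theoretic $\gcd$ of subgroup orders; this needs the observation that all three intersections $A$, $B$, $A\cap B$ live inside the single cyclic group $\langle g_j\rangle$, so that the divisor-lattice/subgroup-lattice correspondence applies uniformly. Once that bookkeeping is in place, the divisibility is immediate from Lagrange applied to $A\cap B \le \langle g_i\rangle \cap \langle g_s\rangle$. No estimates or case analysis are required.
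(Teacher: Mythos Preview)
Your proof is correct and follows the same underlying idea as the paper's, just expressed in a different language. The paper argues entirely inside the graph $R_3(G)$: it picks vertices $y_1\in Des(g_i)\cap Des(g_j)$ and $y_2\in Des(g_s)\cap Des(g_j)$ of colour $d=\gcd(col(\bar I(i,j)),col(\bar I(s,j)))$, observes that both lie in $Des(g_j)$ with the same colour and hence coincide, and concludes that this common vertex lies in $Des(g_i)\cap Des(g_s)$, so $d\mid col(\bar I(i,s))$. You instead invoke the dictionary $col(\bar I(i,j))=|\langle g_i\rangle\cap\langle g_j\rangle|$ up front and run the argument in $G$, using that $|A\cap B|=\gcd(|A|,|B|)$ for subgroups $A,B$ of a cyclic group and that $A\cap B\subseteq\langle g_i\rangle\cap\langle g_s\rangle$. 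The two arguments are mirror images: ``same colour in $Des(g_j)$ implies same vertex'' is exactly the graph translation of ``unique subgroup of each order in a cyclic group''. Neither approach gains anything substantial over the other; the paper's is slightly more self-contained since it never leaves $R_3(G)$, while yours makes the group-theoretic content more transparent.

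One small correction: your parenthetical remark that ``inside a cyclic group the collection of subgroups is totally ordered by divisibility of orders, so in fact one of $A,B$ contains the other'' is false in general (e.g.\ the subgroups of orders $2$ and $3$ in $\mathbb{Z}_6$). This only holds in cyclic $p$-groups. As you note, you do not use this claim, so it does not affect the argument, but it should be removed.
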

\begin{proof}

Let $d=gcd(col(\Bar{I}(i,j)),col(\Bar{I}(s,j))$. Since $d$ is a factor of $col(\Bar{I}(i,j))$, there exists $y_1\in Des(\Bar{I}(i,j))=Des(g_i)\cap Des(g_j)$ such that $col(y_1)=d$. Similarly, there exists $y_2 \in Des(\Bar{I}(s,j))= Des(g_s)\cap Des(g_j)$ such that $col(y_2)=d$.

Since both $y_1$ and $y_2$ are descendants of $g_j$ with the same color, they must be the same element. Therefore, we can argue that $y_1$ is a common descendant of $g_i$ and $g_s$, i.e., $y_1\in Des(g_i)\cap Des(g_j)= Des(\Bar{I}(i,j))$. Hence, $col(y_1)\mid col(\Bar{I}(i,s))$.
\end{proof}

\textbf{\emph{Reduction rule 4}:}
Consider $R_3(G)$. Recall that, in $R_3(G)$ a CCG-set  $\{g_1, g_2,..., g_m\}$ of $G$ can be readily found. We make a new graph $R_4(G)$ as follows:\\
(1) Introduce the vertices $g_1,g_2,\dots,g_m$ with their colors.\\
(2) For each pair $(i,j),\ 1\leq i < j \leq m$, do the following:\\
     Find the vertex $\Bar{I}(i,j)$ that is of maximum color among the descendant reachable vertices from both $g_i$ and $g_j$. We add a vertex $I(i,j)$ in $R_4(G)$ and color it with $col(\Bar{I}(i,j))$. Add edges $\{g_i, I(i,j)\}$ and $\{g_j, I(i,j) \}$.


We can see that $R_4(G)$ is a bipartite graph where one part is a  CCG-set and another part contains vertices marked as $I(i,j)$ for all $(i,j)$. In $R_4(G)$, for distinct pairs $(i,j)$ and $(i',j')$, $I(i,j)$ and $I(i',j')$ are distinct vertices, while in $R_3(G)$, $\Bar{I}(i,j)$ and $\Bar{I}(i',j')$ may be the same vertex. In other words, $R_4(G)$ may have several copies of vertex $\Bar{I}(i,j)$.
We now present an algorithm to get back an isomorphic copy of $R_3(G)$ from $R_4(G)$.

 \textbf{Idea of the algorithm: }
    In $R_4(G)$, we have a set of colored CC-generators. Also, there exist vertices $I(i,j)$ corresponding to each pairwise intersection of maximal cyclic subgroups $\langle g_i \rangle $ and $\langle g_j \rangle $ in $G$. $I(i,j)$ is the only common neighbor of $g_i$ and $g_j$ in $R_4(G)$. Using this information, we construct $R_3(G)$ in an iterative manner. First, we describe a sketch of the idea behind the process. There are $m$ iterations in the process. In the $1^{st}$ iteration, we introduce $HD[col(g_1)]$. One can easily verify that $R_3(G)[Des(g_1)]$ is isomorphic to $HD[col(g_1)]$ (by (2) of \Cref{Des(g_i)=HD[g_i]}). In the $2^{nd}$ iteration, we introduce $HD[col(g_2)]$. As we know the color of $I(1,2)$, we have information about the set of vertices common to both $HD[col(g_1)]$ and $HD[col(g_2)]$. Let $u$ and $v$ be the vertices with color $col(I(1,2))$ in $HD[col(g_1)]$ and $HD[col(g_2)]$ respectively. We identify (via vertex-identification) the vertices with the same colors in $Des(u)$ (which is in $HD[col(g_1)]$) and $Des(v)$ (which is in $HD[col(g_2)]$)\footnote{Since $HD[col(g_i)]$ is isomorphic to $R_3(G)[Des(g_i)]$, we can use the concept of $Des$ in the graph $HD[col(g_i)]$ for all $i$.}. One can see that the resulting graph is isomorphic to the induced subgraph of $R_3(G)$ on $Des(g_1)\cup Des(g_2)$. 
  Inductively the algorithm introduces $Des(g_1)\cup Des(g_2) \cup \ldots \cup Des(g_{j-1})$ at the end of the $(j-1)^{th}$ iteration. In the $j^{th}$ iteration, we introduce $HD[col(g_j)]$. It is easy to note that the set of vertices in $HD[col(g_j)]$ that are contained in $Des(g_j)\cap Des(g_s)$ for all $s\leq j-1$ has already been introduced. So, we need to identify the vertices introduced by the algorithm earlier with the corresponding subset of vertices in $HD[col(g_j)]$.
    We get the information of such vertices using the color of $I(s,j)$ for $s\leq j-1$. The details of the algorithm is given below (\Cref{algorithm:R4 to R3}).


\begin{algorithm}[hbt!]
\caption{To construct an isomorphic copy of $R_3(G)$ from $R_4(G)$}\label{algorithm:R4 to R3}
\hspace*{\algorithmicindent} \textbf{Input:} $R_4(G)$ 

\begin{algorithmic}[1]

\State $X_1 \gets HD[col(g_1)]$
\State $j \gets 2$
\While{$j \leq m$}
    \State  Introduce $Y_j=HD[col(g_j)]$ 
    
    \State $s \gets 1$
    \State $h_{j,0} \gets \emptyset$ \Comment{Mapping for vertex identification}
    \While{$s \leq j-1$}
        \State
        Consider 
        $I(s,j)$.
       
        \State 
        \parbox[t]{\dimexpr\linewidth-\algorithmicindent}{$h_{j,s} \gets h_{j,s-1}\cup \{(u,v) : col(u)=col(v) \textrm{ where } u\in Des(g_s) \subseteq V(X_{j-1}) \textrm{ s.t } col(u)\big| col(I(s,j))\textrm{ and }$ $v \in V(Y_j)\}$} 
        \State $s \gets s+1$
    \EndWhile
    \State 
    \parbox[t]{\dimexpr\linewidth-\algorithmicindent} 
    {For all $(u,v) \in h_{j,j-1}$ do vertex identification of $u$ and $v$ and color the new vertex with $col(u)$.}
    \State $X_j \gets $ The new graph obtained after the above vertex identification process of $X_{j-1}$ and $Y_j$.
    
    \State $j \gets j+1$
\EndWhile
\State Return $X_m$ 
\end{algorithmic}
\vspace{0.5cm}

\end{algorithm}

As indicated above and in Line 12 of \Cref{algorithm:R4 to R3}, vertices in the old graph and $HD[col(g_j)]$ are identified. In \Cref{identification is conflict-free}, we show that these vertices can be identified without conflict.

\begin{restatable}{lemma}{goingbacklemma}\label{r4 to r3}
The graph $X_m$ returned by \Cref{algorithm:R4 to R3} is isomorphic to $R_3(G)$.
\end{restatable}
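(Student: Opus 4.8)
\textbf{Proof plan for \Cref{r4 to r3}.}

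The plan is to prove by induction on $j$ that the graph $X_j$ produced at the end of the $j$-th iteration of \Cref{algorithm:R4 to R3} is isomorphic (as a colored graph) to the induced subgraph $R_3(G)\bigl[Des(g_1)\cup Des(g_2)\cup\dots\cup Des(g_j)\bigr]$. Since $\{g_1,\dots,g_m\}$ is a CCG-set, the union $Des(g_1)\cup\dots\cup Des(g_m)$ is exactly $\langle g_1\rangle\cup\dots\cup\langle g_m\rangle=G=V(R_3(G))$, so the case $j=m$ gives the desired conclusion. The base case $j=1$ is \Cref{Des(g_i)=HD[g_i]}(1): $X_1=HD[col(g_1)]\cong R_3(G)[Des(g_1)]$, and this isomorphism is unique because within $Des(g_i)$ each color occurs exactly once (as noted after the definition of descendant reachable vertices).

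For the inductive step, assume $X_{j-1}\cong R_3(G)\bigl[\bigcup_{t<j}Des(g_t)\bigr]$ via a (unique, color-respecting) isomorphism $\psi_{j-1}$, and that $Y_j=HD[col(g_j)]\cong R_3(G)[Des(g_j)]$ via the unique isomorphism $\chi_j$. The algorithm glues $X_{j-1}$ and $Y_j$ along the pairs in $h_{j,j-1}$. The key observations are: first, a vertex $u\in V(X_{j-1})$ with $u\in Des(g_s)$ and $col(u)\mid col(I(s,j))$ corresponds (under $\psi_{j-1}$ and then $\chi_j^{-1}$) to a vertex of $Des(g_j)$ of the same color, because $col(I(s,j))=|\langle g_s\rangle\cap\langle g_j\rangle|$ and the colors appearing in $Des(g_s)\cap Des(g_j)$ are exactly the divisors of $col(I(s,j))$; second, such a common vertex is uniquely determined by its color within $Des(g_j)$, so the identification target in $Y_j$ is well-defined. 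Thus the identified vertices of $X_{j-1}$ are precisely the images of $Des(g_j)\cap\bigl(\bigcup_{t<j}Des(g_t)\bigr)$, and $Y_j$ supplies the ``new'' part $Des(g_j)\setminus\bigl(\bigcup_{t<j}Des(g_t)\bigr)$. Once the vertex sets are matched up correctly, the edge sets agree by \Cref{Des(g_i)=HD[g_i]}(2): an edge of $R_3(G)$ lies inside some single $Des(g_i)$, and both $X_{j-1}$ and $Y_j$ already carry all such edges within the pieces they represent.

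The main obstacle is showing that the vertex identification in Line 12 is \emph{conflict-free} --- i.e. that the relation $h_{j,j-1}$ (built as a union over $s=1,\dots,j-1$) is the graph of a partial injection and, crucially, that if $u$ and $u'$ are two vertices of $X_{j-1}$ (say $u\in Des(g_s)$, $u'\in Des(g_{s'})$) that get mapped to the same vertex $v\in V(Y_j)$, then $u$ and $u'$ must already be the same vertex of $X_{j-1}$. This is exactly the content of \Cref{identification is conflict-free}, and its proof should use \Cref{intersection}: $\gcd(col(I(s,j)),col(I(s',j)))$ divides $col(I(s,s'))$, so a color $d$ simultaneously dividing $col(I(s,j))$ and $col(I(s',j))$ also divides $col(I(s,s'))$, forcing the corresponding vertices of $Des(g_s)$ and $Des(g_{s'})$ to coincide in $X_{j-1}$ (again by uniqueness of colors within each $Des$, which is preserved by the inductive hypothesis). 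The remaining bookkeeping --- that colors are preserved throughout, that no spurious edges are created by the identification, and that $X_{j-1}$ and $Y_j$ share no vertices other than those identified --- is routine given the inductive hypothesis, and I would state it briefly rather than belabor it. Finally, I would remark that the resulting isomorphism is again unique for the same color-multiplicity reason, which keeps the induction clean.
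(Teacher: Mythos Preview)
Your plan is essentially the paper's proof: induction on $j$ showing $X_j\cong R_3(G)[Des(g_1)\cup\dots\cup Des(g_j)]$, with the base case from \Cref{Des(g_i)=HD[g_i]}, the conflict-freeness of $h_{j,j-1}$ proved via \Cref{intersection} (this is exactly \Cref{identification is conflict-free}), and the edge check via the observation that every edge of $R_3(G)$ lies in a single $Des(g_i)$.

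Two small remarks. First, your claim that the isomorphism $\psi_j$ is \emph{unique} is false in general: once $j\ge 2$ the graph $R_3(j)$ can contain several vertices of the same color (coming from distinct $Des(g_i)$'s), so color alone does not pin down the map. This does not damage your argument, because you already carry $\psi_{j-1}$ explicitly and extend it; just drop the uniqueness assertion and say instead that $\psi_j$ is constructed as an extension of $\psi_{j-1}$, which is what the paper does with its maps $f_j$. Second, the paper singles out one edge case you fold into ``routine bookkeeping'': two vertices $x,x'\in V(X_{j-1})$ that were \emph{not} adjacent in $X_{j-1}$ can become adjacent in $X_j$ after both are identified with adjacent vertices of $Y_j$, and one must check this new edge is matched by an edge of $R_3(j)\setminus R_3(j-1)$ (the paper's \Cref{newly added edge}). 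Your sentence ``both $X_{j-1}$ and $Y_j$ already carry all such edges within the pieces they represent'' covers this, but when you write it up, make the point explicit so the reader sees that identification can legitimately create edges between old vertices.
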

\begin{proof}\label{proof:r4 to r3}
The proof is technical and consists of some nuances. The details of the proof can be found in \Cref{Proof of Lemma 39}.



\end{proof}

\section{Reconstruction Algorithms}\label{section 7 Q2}
Cameron asked  the following question: ``Question 2  \cite{cameron2021graphs}: Is there a simple algorithm for constructing the directed power graph or enhanced power graph from the power graph, or the directed power graph from the enhanced power graph?'' Bubboloni and Pinzauti \cite{bubboloni2022critical} gave an algorithm to reconstruct the directed power graph from the power graph. In this section, we show that with the tools we have developed, we can readily design algorithms to reconstruct the directed power graph from both the enhanced power graph and the power graph. 

Suppose we are given a power graph (or an enhanced power graph) of some finite group $G$ as input, i.e., $\Gamma= Pow(G)$ (or, $\Gamma=EPow(G)$). However, the group $G$ is not given. As discussed in \Cref{ccg from a graph}, we can find a CCG-set for $G$ from the input graph. Next, we describe how to obtain a graph isomorphic to $R_4(G)$ from the CCG-set. From the vertices corresponding to a CCG-set of $G$, say $\{g_1,g_2,\ldots,g_m\}$, we get the information about their degree in $\Gamma$ and the pairwise common neighborhood of $g_i$ and $g_j$ in the respective graph. This immediately gives us $R_4(G)$. From $R_4(G)$, we know how to get back an isomorphic copy of $DPow(G)$ using the results in \Cref{section 6 ISO}. All the steps in the process can be performed in polynomial time. 

For any two vertices $u$ and $v$ we can easily decide when to put an edge between them in the enhanced power graph by looking into the corresponding directed power graph: there is an edge $\{u,v\}$ in the enhanced power graph, if and only if both $u$ and $v$ belong to the closed-out-neighbourhood of some
vertex in the directed power graph. In this way, we can construct the enhanced power graph from an input directed power graph. Therefore, we get a complete solution to Cameron's question.
\section{Proof of Lemma \ref{r4 to r3}}\label{Proof of Lemma 39}
In this section we prove \Cref{r4 to r3}.
\goingbacklemma*
\begin{proof}
    We show by induction on $j$ that the constructed graph up to the $j^{th}$ step is isomorphic to the subgraph of $R_3(G)$ induced on $Des(g_1)\cup Des(g_2) \cup\ldots \cup Des(g_j)$. This shows that after the $m^{th}$ iteration, we can get an isomorphic copy of $R_3(G)$.

\begin{claim}\label{X_j iso R3(j)}
  $X_j \cong R_3(G)[Des(g_1) \cup Des(g_2) \cup \dots \cup Des(g_j)], \ \forall 1\leq j \leq m $ .
\end{claim}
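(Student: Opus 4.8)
The plan is to argue by induction on $j$, exactly as the claim is stated, that $X_j \cong R_3(G)[Des(g_1)\cup\dots\cup Des(g_j)]$. For the base case $j=1$, the graph $X_1 = HD[col(g_1)]$ is isomorphic to $R_3(G)[Des(g_1)]$ by (1) of \Cref{Des(g_i)=HD[g_i]}, and this isomorphism is unique because each color occurs at most once in either graph. For the inductive step, assume $X_{j-1}\cong R_3(G)[Des(g_1)\cup\dots\cup Des(g_{j-1})]$ via a (necessarily unique on each $Des(g_s)$-block, color-preserving) isomorphism $\psi_{j-1}$. The algorithm forms $X_j$ by taking the disjoint union of $X_{j-1}$ with $Y_j = HD[col(g_j)]$ and then identifying, for each $s\le j-1$, every vertex $u\in Des(g_s)\subseteq V(X_{j-1})$ with $col(u)\mid col(I(s,j))$ with the unique vertex $v\in V(Y_j)$ of the same color. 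I would first observe that $Y_j \cong R_3(G)[Des(g_j)]$ again by \Cref{Des(g_i)=HD[g_i]}, and that in $R_3(G)$ we have $Des(g_s)\cap Des(g_j) = Des(\bar I(s,j)) = \{\,w : col(w)\mid col(I(s,j)),\ w\in Des(g_j)\,\}$, since the colors appearing in $Des(\bar I(s,j))$ are exactly the divisors of $col(\bar I(s,j)) = col(I(s,j))$ and colors are unique along descendant-reachable sets. Thus the set of identifications performed by the algorithm is precisely the set of coincidences $Des(g_s)\cap Des(g_j)$ forced inside $R_3(G)$, so the natural candidate isomorphism $\psi_j$ sends the $X_{j-1}$-part via $\psi_{j-1}$ and the $Y_j$-part via the color-matching to $Des(g_j)$, and these agree on all identified vertices.

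The two things that genuinely need to be checked are: (a) the identifications are \emph{conflict-free} — i.e. no vertex of $X_{j-1}$ gets identified with two different vertices of $Y_j$, and no two vertices of $X_{j-1}$ that were distinct get merged — which is exactly the content of \Cref{identification is conflict-free}; the point is that within a fixed $Des(g_s)$ the color determines the vertex, and \Cref{intersection} (the gcd claim) guarantees that when a vertex lies in $Des(g_s)\cap Des(g_t)$ for two indices $s,t<j$, the vertex of $Y_j$ it is matched to is the same regardless of whether we route through $s$ or through $t$, because $\gcd(col(I(s,j)),col(I(t,j)))\mid col(I(s,t))$ ensures that overlapping-color vertices were already identified inside $X_{j-1}$. (b) The edge set is correct: after identification, $\{y,y'\}$ is an edge of $X_j$ iff $col(y)=p\cdot col(y')$ or $col(y')=p\cdot col(y)$ and $y,y'$ lie in a common $Des(g_i)$ with $i\le j$ — this is immediate from the edge rule of $HD[\cdot]$ together with (2) of \Cref{Des(g_i)=HD[g_i]}, and no spurious edges are created because vertex identification only adds an edge between merged vertices' old neighborhoods, all of which already respect the $HD$ edge rule. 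Putting (a) and (b) together, $\psi_j$ is a well-defined color-preserving graph isomorphism $X_j \to R_3(G)[Des(g_1)\cup\dots\cup Des(g_j)]$.

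I expect the main obstacle to be the conflict-freeness argument, part (a): one must carefully track, when a color $c$ divides both $col(I(s,j))$ and $col(I(t,j))$ for distinct $s,t<j$, that the two vertices $u_s\in Des(g_s)$ and $u_t\in Des(g_t)$ of color $c$ were already the same vertex of $X_{j-1}$ — this is where \Cref{intersection} does the real work, since it forces $c\mid col(I(s,t))$, and then the inductive hypothesis (applied to the earlier identification step that merged $Des(g_s)\cap Des(g_t)$) gives $u_s = u_t$ in $X_{j-1}$. I would isolate this as the separate lemma \Cref{identification is conflict-free} and prove it first, then feed it into the induction above; the remaining bookkeeping about colors being unique on descendant sets and about edges is routine given \Cref{Des(g_i)=HD[g_i]}.
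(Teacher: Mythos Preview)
Your proposal is correct and follows essentially the same approach as the paper: induction on $j$ with the base case via \Cref{Des(g_i)=HD[g_i]}, conflict-freeness of the identifications established as the separate \Cref{identification is conflict-free} using \Cref{intersection}, and edge preservation checked via the characterization in (2) of \Cref{Des(g_i)=HD[g_i]}. The paper carries out the edge-preservation step by an explicit case analysis (including a sub-claim handling the situation where two previously non-adjacent vertices of $X_{j-1}$ become adjacent after identification), whereas you compress this into the single observation that the edge rule in each $HD[\cdot]$ block matches the edge rule of $R_3(G)$; both arguments are sound and yours correctly anticipates where the real work lies.
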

\textbf{Proof of claim:}
  For simplicity of writing, we denote $R_3(G)[Des(g_1) \cup Des(g_2) \cup \dots \cup Des(g_j)]$ by $R_3(j)$ in the remaining part of the proof. With this, $R_3(1)$ denotes $R_3(G)[Des(g_1)]$. 

  By \Cref{Des(g_i)=HD[g_i]}, $X_1= HD[col(g_1)]$ is isomorphic to $R_3(1)$ by a unique isomorphism, say $f_1$. If we take $f_0$ to be the empty map, then $f_1$ extends $f_0$.

We prove by induction on $j$ that $X_j$ is isomorphic to $R_3(j)=R_3[Des(g_1) \cup Des(g_2) \cup \dots \cup Des(g_j)]$ via a map $f_j$  that extends the isomorphism $f_{j-1}$.

    
  By induction hypothesis, let us assume that $X_{j-1}\cong R_3(G)[Des(g_1)\cup\dots\cup Des(g_{j-1})]$ and $f_{j-1}$ is an isomorphism between $X_{j-1}$ and $R_3(j-1)$ derived by extending $f_{j-2}$. We show that $f_j$ is an extension of $f_{j-1}$ and $f_j$ is an isomorphism between $X_j$ and $R_3(j)$. 

  However, before we go into the details of the inductive case, we address the following important issue.

  
  In the $j^{th}$ iteration of the outer while loop and just after the execution of Line 4 of \Cref{algorithm:R4 to R3}, the current graph is the disjoint union of  $X_{j-1}$ and $ Y_j$. Now to get $X_j$, some vertices of $X_{j-1}$ and $Y_j$ 
   are vertex-identified using the tuples stored in $h_{j,j-1}$ as described in Line 12 of \Cref{algorithm:R4 to R3}. Observe that two vertices in $Y_j$ cannot be identified with the same vertex in $X_{j-1}$,  because in $Y_j=HD[col(g_j)]$, no two vertices have the same color. However, there is a possibility that two or more vertices of $X_{j-1}$ are assigned to be identified with the same vertex of $Y_j$. We show that this case does not arise. To do this, we first define the following sets:
  \begin{center}
  $Y_{j,1}=\{v\in V(Y_j)\ : \ col(v)\big| col(I(1,j))\}$

$Y_{j,s}=Y_{j,s-1}\cup \{v\in V(Y_j)\ : \ col(v)\big| col(I(s,j))\},\ \ s=2,\dots,j-1$

$X_{j-1,1}=\{u\in V(X_{j-1}) \ : \ u\in Des(g_1) \textrm{ and }col(u)\big| col(I(1,j))\}$

$X_{j-1,s}= X_{j-1,s-1} \cup \{u\in V(X_{j-1})\ : \ u\in Des(g_s) \textrm{ and }col(u)\big| col(I(s,j))\}, \ \ s=2,\dots,j-1$
\end{center}

Now $h_{j,j-1}$ is updated from $h_{j,0}=\emptyset$ by the following rule:
$h_{j,s}=h_{j,s-1} \cup \{(u,v)\ | \ col(u)=col(v) \text{ where }  u\in X_{j-1,s} \textrm{ and } v\in Y_{j,s}\}$ (as described in Line 9 in \Cref{algorithm:R4 to R3})\footnote{Note that when $u\in X_{j-1,j-1}$ is identified with $v\in Y_{j,j-1}$, we color it with $col(u)$ and for simplicity we name the new vertex as $u$.}.  Since there is a unique vertex of any particular color in $Y_j$, we can see $h_{j,s}$ as a well-defined function from $X_{j-1,s}$ to $Y_{j,s}$. Now to show that $h_{j,j-1}$ gives a conflict-free vertex identification process, we show that $h_{j,s}$ is a bijection and an extension of $h_{j,s-1}$. Since $h_{j,s-1}\subseteq h_{j,s}$, it is enough to prove the following claim:
  
  \begin{claim}\label{identification is conflict-free}
     The map $h_{j,s}: X_{j-1,s}  \xrightarrow{} Y_{j,s}$ is a bijection, for all $ 1\leq s \leq j-1$. 
  \end{claim}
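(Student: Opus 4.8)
The plan is to prove \Cref{identification is conflict-free} by induction on $s$, with the crucial insight being that the vertices being identified all live inside the ``descendant'' subgraphs $R_3(G)[Des(g_i)]$, each of which is (isomorphic to) a Hasse diagram of a divisor lattice and hence has \emph{exactly one vertex of each color}. Surjectivity of $h_{j,s}\colon X_{j-1,s}\to Y_{j,s}$ is essentially immediate from the definitions: every color dividing some $col(I(t,j))$ with $t\le s$ that appears in $Y_j$ also appears as the color of a vertex in $Des(g_t)\subseteq V(X_{j-1})$ (using \Cref{Des(g_i)=HD[g_i]}(1), which says $R_3(G)[Des(g_t)]\cong HD[col(g_t)]$, so all divisors of $col(g_t)$ — in particular all divisors of $col(I(t,j))$ — occur as colors there), and such a vertex lies in $X_{j-1,t}\subseteq X_{j-1,s}$. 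So the content is injectivity.

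For injectivity, fix the base case $s=1$: here $X_{j-1,1}$ consists of vertices of $Des(g_1)$ with color dividing $col(I(1,j))$, and since $R_3(G)[Des(g_1)]$ has a unique vertex of each color, distinct vertices of $X_{j-1,1}$ have distinct colors, hence map to distinct vertices of $Y_{j,1}$ (colors being preserved by $h_{j,1}$). For the inductive step, assume $h_{j,s-1}$ is a bijection. A conflict could only be introduced by a \emph{new} vertex $u'\in Des(g_s)$ with $col(u')\mid col(I(s,j))$ whose color coincides with the color of some already-processed vertex $u\in X_{j-1,s-1}$, say $u\in Des(g_t)$ with $t\le s-1$ and $col(u)\mid col(I(t,j))$. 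I must show that in $X_{j-1}$ these are actually the \emph{same} vertex, so that the identification with the unique color-$col(u)$ vertex of $Y_j$ is consistent. Here is where \Cref{intersection} enters: $col(u)=col(u')$ divides both $col(\bar I(t,j))$ and $col(\bar I(s,j))$, so it divides $\gcd(col(\bar I(t,j)),col(\bar I(s,j)))$, which by \Cref{intersection} divides $col(\bar I(t,s))$; thus there is a vertex $w\in Des(g_t)\cap Des(g_s)$ with color $col(u)$. But $Des(g_t)$ has only one vertex of color $col(u)$, namely $u$ (by induction $u$ was correctly identified with that vertex), and likewise $Des(g_s)$ has only one such vertex, namely $u'$; since $w$ lies in both, $u=w=u'$ in $X_{j-1}$. (One should be slightly careful that ``$Des(g_t)$'' and ``$Des(g_s)$'' here refer to honest subgraphs of the already-built $X_{j-1}\cong R_3(j-1)$, so that the single-vertex-per-color property holds and the common descendant $w$ is genuinely a single vertex of $X_{j-1}$; this is exactly where the outer induction hypothesis that $X_{j-1}\cong R_3(j-1)$ is used.)

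I expect the main obstacle to be the bookkeeping at the seam between the two inductions: one is simultaneously running an outer induction on $j$ (that $X_j\cong R_3(j)$ via an isomorphism $f_j$ extending $f_{j-1}$) and an inner induction on $s$ (that $h_{j,s}$ is a well-defined bijection), and the inner argument silently relies on the outer hypothesis to guarantee that the subgraphs $Des(g_t)$ of $X_{j-1}$ behave like divisor lattices. Making this dependence explicit — i.e.\ stating precisely which facts about $X_{j-1}$ are imported from $X_{j-1}\cong R_3(j-1)$ — is the delicate part; the divisor-arithmetic via \Cref{intersection} and the uniqueness-of-color-per-descendant-set fact are then routine. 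A secondary subtlety is checking that, after the identifications of the $j$th iteration, the edge set of $X_j$ is exactly that of $R_3(j)$ (no spurious or missing edges), which follows from the edge characterization in \Cref{Des(g_i)=HD[g_i]}(2): an edge of $R_3(j)$ joins two vertices lying in a common $Des(g_i)$ with colors differing by a prime factor, and these are precisely the edges inherited from the $HD[col(g_i)]$ pieces.
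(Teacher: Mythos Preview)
Your proposal is correct and follows essentially the same approach as the paper: surjectivity from the divisor-lattice structure of $Des(g_t)$, and injectivity by induction on $s$ using \Cref{intersection} to force a would-be conflicting pair $u,u'$ into the common descendant set $Des(g_t)\cap Des(g_s)$, where uniqueness-of-color finishes the argument. The paper organizes the inductive step into two explicit cases (the target $v$ is new vs.\ already encountered) whereas you compress these into the single ``conflict'' scenario, but the substance is identical; your remarks on the outer/inner induction dependence and on edge preservation belong to the surrounding proof of \Cref{r4 to r3} rather than to this claim per se.
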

  \textbf{Proof of claim:}
    First, we show that $h_{j,s}$ is onto for all $s=1,\dots,j-1$. For this, take a vertex $v$ from $Y_{j,s}$. Then $col(v)|col(I(i,j))$ for some $i\leq s$. So,\footnote{Since $X_{j-1} \cong R_3(j-1)$, the concept of descendant reachability can also be defined in $X_{j-1}$. Therefore, it makes sense to use $Des(g)$ in $X_{j-1}$ for any vertex $g$.} there exists a vertex $u\in Des(g_i)$ in $X_{j-1,s}$ such that $col(u)=col(v)$ and $h_{j,s}(u)=v$.
    
    Now we prove that $h_{j,s}$ is one-to-one using induction on $s$. For the base case, it is easy to see that $h_{j,1}:X_{j-1,1} \xrightarrow{} Y_{j,1}$ is a bijection since $X_{j-1,1}$ and $Y_{j,1}$ contains colored vertices corresponding to each divisor of $col(I(1,j))$ and color of each vertex is distinct. By induction hypothesis we assume that $h_{j,s-1}: X_{j-1,s-1} \xrightarrow{} Y_{j,s-1}$ is a bijection. Now for the inductive case, we consider $h_{j,s}: X_{j-1,s} \xrightarrow{} Y_{j,s}$. We need to prove that $h_{j,s}$ is one-one. Suppose that $u \in X_{j-1,s}$ is paired with $v \in Y_{j,s}$ to be stored at $h_{j,s}$ in the $s^{th}$ iteration of the inner while loop (Line 9 of \Cref{algorithm:R4 to R3}). We need to argue that the pairing does not violate the one-to-one condition. We do this in two cases.
      
      Case 1: The vertex $v$ was not encountered in any of the previous iterations, i.e., $v \notin Y_{j,s-1}$.
      So by definition of $X_{j-1,s-1}$, there is no vertex of color $col(v)$ in $X_{j-1,s-1}$.
      Since $col(u)=col(v),$ we have $u\in X_{j-1,s}\setminus X_{j-1,s-1}$. So, $(u,v)$ is added to $h_{j,s}$ in the $s^{th}$ iteration only, where $v$ is in $Y_{j,s}$. Therefore, $X_{j-1,s}$ contains exactly one vertex of color $col(u)$. This implies that $v$ cannot be paired with any vertex except $u$.

      Case 2: The vertex $v$ was encountered before the $s^{th}$ iteration, and $i\leq (s-1)$ is the most recent such iteration. This means that there exists $u'$ in the old graph $(i.e., \  X_{j-1,s-1})$ such that $h_{j,s-1}(u')=v$. Since $h_{j,s-1}$ is a bijection by induction hypothesis, $u'$ is the only preimage of $v$ under $h_{j,s-1}$. We show that $u=u'$. 

      Observe that there is a vertex $w\in Des(g_s)$ in $X_{j-1}$ such that $col(w)=col(I(s,j))$. By the algorithm, $col(u)|col(I(s,j))$. So, $u \in Des(w)$.  Similarly, there is a vertex $w'\in Des(g_i)$ in $X_{j-1}$ such that $col(w')=col(I(i,j))$ and by the algorithm $ col(u')|col(I(i,j))$. So $u'\in Des(w')$. Since $col(u)=col(u')$, $col(u')|col(I(i,j))$ and $col(u)|col(I(s,j))$, we conclude that $col(u)$ divides $gcd(col(I(i,j)),col(I(s,j)))$. So, by Claim \ref{intersection}, $col(u)|col(I(i,s))$. 

    Now we consider the subgraph of $X_{j-1}$ induced by $Des(g_i)\cap Des(g_s)$. If $x \in Des(g_i)\cap Des(g_s)$ is the vertex with color $col(I(i,s))$, then this subgraph is formed by the descendants of $x$. Since the descendants of $x$ are exactly the vertices in $Des(g_i)$ and $Des(g_s)$ with colors as factors of $col(I(i,s))$, both $u$ and $u'$ are in $Des(x)$.
    Now, $Des(x)$ has a unique vertex of a particular color. Therefore, as $u$ and $u'$ have the same color, $u=u'$. 

    Hence, we have proved that $h_{j,s}$ is one-one in both the cases. Therefore, we can conclude that $h_{j,s}: X_{j-1,s}\rightarrow Y_{j,s}$ is a bijection for all $1 \leq s \leq j-1$. \hfill $\Box$
  
  From the above claim, we can conclude that in the $j^{th}$ iteration of the outer while loop, the identification process done in Line 12 in Algorithm \ref{algorithm:R4 to R3} via the mapping $h_{j,j-1}$ is correct. Next, we show that the graph $X_{j}$ (output in Line 13), derived after the identification process on $X_{j-1}$ and $Y_j$, is indeed isomorphic to $R_3(j)$. 

  For $j\geq 2$, we define $f_j:V(X_j)\xrightarrow{}V(R_3(j))$ in the following manner:
  \begin{equation}\label{defintion of f_j}
      f_j(x) = 
     \begin{cases}
        f_{j-1}(x) &\quad\text{if } x \in V(X_{j-1})\\
       y &\quad\text{otherwise}\\ \text{where } y \in V(R_3(j))\setminus V(R_3(j-1)) \text{ and } col(y)=col(x). \\ 
     \end{cases}
  \end{equation}

To show that $f_j$ is well defined, it is enough to argue that for each $x\in V(X_j)\setminus V(X_{j-1})$, there exists a unique $y\in V(R_3(j))\setminus V(R_3(j-1))$ such that $col(y)=col(x)$. Observe that $V(X_j)\setminus V(X_{j-1})$ is the set of vertices of $Y_j=HD[col(g_j)]$ that have not been identified in the $j^{th}$ iteration. So, for any vertex $x\in V(X_j)\setminus V(X_{j-1})$, $col(x)$ divides $col(g_j)$ but $col(x)$ does not divide $col(I(i,j))$ for any $i<j$. This means, for each such $x$, there exists $y$ in $V(R_3(j))\setminus V(R_3(j-1))$ with color $col(x)$ and this $y$ is unique since $V(R_3(j))\setminus V(R_3(j-1))$ contains the vertices of $Des(g_j)$ that are not descendant reachable from any $g_i$ where $i<j$. The uniqueness of colors in $Y_j=HD[col(g_j)]$ also implies that $f_j$ is a bijection.

  Now to show that $f_j$ is an isomorphism between $X_j$ and $R_3(j)$, it remains to show that $f_j$ preserves edge relations between $X_{j}$ and $R_3(j)$.

Here, we want to emphasize that it might happen that two vertices $x,x'$ in $X_{j-1}$ are not adjacent to each other, but after the vertex identification process in the $j^{th}$ iteration, there is an edge between $x$ and $x'$ in $X_j$. Moreover, through the following claim, we want to show that this incident has a correspondence in $R_3(j)$. 
  \begin{claim}\label{newly added edge}
    Let $x,x'$ be two vertices in the old graph (i.e., $X_{j-1}$) that take part in the vertex identification process in the $j^{th}$ iteration, i.e., $x, x' \in X_{j-1,j-1}$. Then, $\{x,x'\} \notin E(X_{j-1})$, but $\{x,x'\}\in E(X_j)$ if and only if $\{f_{j-1}(x),f_{j-1}(x')\} \notin E(R_3(j-1))$, but $\{f_j(x),f_j(x')\}\in E(R_3(j))$.
  \end{claim}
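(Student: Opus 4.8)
The plan is to reduce the edge relation in $X_j$ to the edge relation in the two building blocks $X_{j-1}$ and $Y_j$, and similarly to reduce the edge relation in $R_3(j)$ to that in $R_3(j-1)$ and in $R_3(G)[Des(g_j)]$, and then to match these up using the isomorphisms $f_{j-1}$ and the bijection $h_{j,j-1}$. First I would note that, since $x,x'$ both lie in $X_{j-1,j-1}$, they get identified (via $h_{j,j-1}$) with vertices $v=h_{j,j-1}(x)$ and $v'=h_{j,j-1}(x')$ in $Y_j = HD[col(g_j)]$. An edge $\{x,x'\}$ in $X_j$ that was not present in $X_{j-1}$ must therefore have arisen from the edge $\{v,v'\}$ of $Y_j$ created during the identification; conversely, an edge $\{v,v'\}$ of $Y_j$ between two identified vertices shows up as an edge $\{x,x'\}$ of $X_j$. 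So the left-hand side of the claim is equivalent to: $\{x,x'\}\notin E(X_{j-1})$ and $\{v,v'\}\in E(Y_j)$.

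Next I would unpack the right-hand side the same way. Set $y=f_j(x)$ and $y'=f_j(x')$; since $x,x'\in X_{j-1}$ we have $y=f_{j-1}(x)$, $y'=f_{j-1}(x')\in V(R_3(j-1))$. By (2) of \Cref{Des(g_i)=HD[g_i]}, an edge $\{y,y'\}$ in $R_3(j)$ exists if and only if both $y,y'\in Des(g_t)$ for some $t\le j$ and the colors differ by exactly one prime factor, i.e. $col(y)=p\cdot col(y')$ or $col(y')=p\cdot col(y)$. The color condition is intrinsic and, crucially, is preserved by all our maps: $col(x)=col(y)$, $col(x')=col(y')$, $col(v)=col(x)$, $col(v')=col(x')$, because $f_{j-1}$ is color-preserving and $h_{j,j-1}$ pairs equal colors. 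Since $\{v,v'\}\in E(Y_j)=E(HD[col(g_j)])$ is by definition exactly this same color condition together with $v,v'$ being divisors of $col(g_j)$, the color condition holds on one side iff it holds on the other. The remaining task is purely about membership in a common $Des(g_t)$.

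The heart of the argument — and the step I expect to be the main obstacle — is to show that the ``common descendant ancestor'' witness transfers correctly. Concretely: $\{y,y'\}\notin E(R_3(j-1))$ but $\{y,y'\}\in E(R_3(j))$ means there is no $t\le j-1$ with $y,y'\in Des(g_t)$, yet the color condition holds; since $x,x'\in X_{j-1,j-1}$, both $col(x)$ and $col(x')$ divide $col(g_j)$ (this is what qualifies them for identification), so $v,v'\in Des(g_j)$ inside $Y_j=HD[col(g_j)]$, giving $\{v,v'\}\in E(Y_j)$. For the reverse direction, if $\{v,v'\}\in E(Y_j)$ and $\{x,x'\}\notin E(X_{j-1})$, then by the inductive isomorphism $f_{j-1}$ we get $\{y,y'\}\notin E(R_3(j-1))$; and $y,y'\in Des(g_j)$ in $R_3(G)$ (because $f_j$ is color-preserving and $col(y),col(y')$ divide $col(g_j)$, so by \Cref{Des(g_i)=HD[g_i]} the vertices of those colors in $Des(g_j)$ are exactly $y,y'$), together with the color condition, yields $\{y,y'\}\in E(R_3(j))$ again by (2) of \Cref{Des(g_i)=HD[g_i]}. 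The delicate point to get exactly right is that identifying $x$ with $v$ and $x'$ with $v'$ does not accidentally create in $X_j$ an edge with no counterpart: this is handled because $Y_j$ is precisely $HD[col(g_j)]\cong R_3(G)[Des(g_j)]$, so every edge of $Y_j$ among identified vertices corresponds to a genuine edge of $R_3(G)[Des(g_j)]\subseteq R_3(j)$ via the color-matching, and Claim \ref{identification is conflict-free} guarantees the identification is conflict-free so no spurious merging happens. I would also remark that edges of $X_j$ with at least one endpoint outside $X_{j-1,j-1}$ are handled by the well-definedness and bijectivity argument for $f_j$ already established, so the claim isolates exactly the subtle case.
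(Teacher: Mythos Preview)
Your approach is essentially the same as the paper's: both proofs split off the non-edge condition via the inductive isomorphism $f_{j-1}$, pass to the identified vertices in $Y_j$, and use the color-ratio characterisation of edges from \Cref{Des(g_i)=HD[g_i]} together with membership in $Des(g_j)$ to transfer the edge between $Y_j$ and $R_3(G)[Des(g_j)]$.

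One step in your write-up is not justified correctly, though the conclusion is right. You assert $y,y'\in Des(g_j)$ ``because $f_j$ is color-preserving and $col(y),col(y')$ divide $col(g_j)$'': that inference is invalid, since $R_3(G)$ can have many vertices of a given colour dividing $col(g_j)$, only one of which lies in $Des(g_j)$. The correct reason is that $x\in X_{j-1,j-1}$ means $x\in Des(g_s)$ in $X_{j-1}$ for some $s<j$ with $col(x)\mid col(I(s,j))$; hence $y=f_{j-1}(x)\in Des(g_s)$ in $R_3(G)$ with $col(y)\mid col(\bar I(s,j))$, which forces $y\in Des(\bar I(s,j))\subseteq Des(g_j)$. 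The paper also merely asserts ``$f_j(x),f_j(x')\in Des(g_j)$'' at this point without spelling this out, so your level of detail matches the paper's, but the parenthetical reason you give should be replaced.
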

  \textbf{Proof of claim:}
  As $f_{j-1}$ is an isomorphism between $X_{j-1}$ and $R_3(j-1)$, we have $\{x,x'\} \notin E(X_{j-1})$ if and only if $\{f_j(x),f_j(x')\} \notin E(R_3(j-1))$.

Now, assume that $\{x,x'\}\notin E(X_{j-1})$ but $\{x,x'\}\in E(X_{j})$. Since $x,x' \in  X_{j-1,j-1}$, the vertices $x,x'$ get identified with some elements $z,z'$ respectively in $Y_j$
such that $\{z,z'\}\in E(Y_j)$. 
Also, $col(x)=col(z)=col(f_j(x))$ and $col(x')=col(z')=col(f_j(x'))$. Since $Y_j=HD[col(g_j)]$ and $\{z,z'\}\in E(Y_j)$, by definition either $col(z)=col(z')\cdot p$ or $col(z')=col(z)\cdot p$ for some prime $p$. Therefore, either $col(f_j(x))=col(f_j(x'))\cdot p$ or $col(f_j(x'))=col(f_j(x))\cdot p$ for some prime $p$. Moreover, $f_j(x),f_j(x')\in Des(g_j)$. Hence, by (2) of \Cref{Des(g_i)=HD[g_i]}, $\{f_j(x),f_j(x')\}\in E(R_3(j))$. 

Conversely, assume that $\{f_j(x),f_j(x')\}\in E(R_3(j))$. 
Since $x,x' \in X_{j-1,j-1}$, $x$ and $x'$ must have been identified with some vertices $z$ and $z'$ in $Y_j$ respectively such that $col(x)=col(z)$ and $col(x')=col(z')$. Now, because of (2) of \Cref{Des(g_i)=HD[g_i]}, $\{f_j(x),f_j(x')\} \in E(R_3(j))$ implies either $col(f_j(x))=col(f_j(x'))\cdot p$ or $col(f_j(x'))=col(f_j(x))\cdot p$ for some prime $p$. Therefore, either $col(z)=col(z')\cdot p$ or $col(z')=col(z)\cdot p$. So, $\{z,z'\}\in E(Y_j)$. Hence, after the vertex identification process, $\{x,x'\}\in E(X_3(j))$. \hfill $\Box$

    Now to show the preservation of edge relations, we consider the following cases, not necessarily disjoint:

(a) Let $x, x' \in V(X_{j-1})$, i.e., both the vertices are from the graph obtained in the previous iteration of the outer while loop. Then, by definition of $f_j$ in  \ref{defintion of f_j}, $f_j(x)=f_{j-1}(x)$ and $f_j(x')=f_{j-1}(x')$. Since by induction hypothesis $f_{j-1}$ is an isomorphism between $X_{j-1}$ and $R_3(j-1)$, $\{x,x'\} \in E(X_{j-1}) \ \iff \{f_{j-1}(x),f_{j-1}(x')\} \in E(R_3(j-1))$. The remaining case is covered by \Cref{newly added edge}. 


(b)  Let $x,x'$ be two vertices in $X_j$ that appear in the `$Y_j$-part' of $X_j$. More precisely, $x,x'$ belong to the disjoint union of 
$V(X_j)\setminus V(X_{j-1})$ ( which is the set of vertices which are newly introduced in the $j^{th}$ iteration of the outer while loop but not identified in the same ) and $X_{j-1,j-1}$ (which corresponds to the set of vertices that are the result of vertex identification of $X_{j-1,j-1}$ and $Y_{j,j-1}$ in the $j^{th}$ iteration). 
Since $Y_j=HD[col(g_j)]\cong R_3(G)[Des(g_j)]$ by \Cref{Des(g_i)=HD[g_i]}, $\{x,x'\}\in E(X_j)\iff \{f_j(x),f_j(x')\}\in E(R_3(j))$.

(c) Let $x$ be a vertex from the old graph $X_{j-1}$ which has not been identified in the $j^{th}$ iteration, i.e., $x \in V(X_{j-1}) \setminus X_{j-1,j-1}$. Let $x'$ be a newly added vertex which has not been identified in the $j^{th}$ iteration, i.e., $x' \in V(X_j)\setminus V(X_{j-1})$. It is not hard to see that $\{x,x'\}$ is not an edge of the disjoint union of $X_{j-1}$ and $Y_j$ ( before the identification process ). Since none of $x$ and $x'$ has taken part in the identification process in this iteration, we have $\{x,x'\}\notin E(X_j)$. Now as $f_j$ is a bijection, we also have the following: $f_j(x) \in V(R_3(j-1))\setminus Des(g_j)$ and $f_j(x') \in V(R_3(j))\setminus V(R_3(j-1))$. Since $f_j(x)$ and $f_j(x')$ are not in same $Des(u)$ for any vertex $u$ in $R_3(j)$, $\{f_j(x),f_j(x')\}$ is not an edge in $R_3(j)$. 
Thus, it is proved that $f_j$ is an isomorphism between $X_j$ and $R_3(j)$. So, we can conclude that $X_m \cong R_3(m)$. It is easy to see that $R_3(m)$ is $R_3(G)$. This concludes the proof of \Cref{X_j iso R3(j)}. \hfill $\Box$

Hence, the algorithm is correct and we can return an isomorphic copy of $R_3(G)$ from $R_4(G)$.
\end{proof}



\bibliography{References}
\appendix
\newpage
\section{Appendix}


\subsection{Proof of Lemma \ref{isomorphism of strong product}}\label{proof of lemma 35}
\isoStrongProduct*
\begin{proof}\label{Proof: isomorphism of strong product}
 It is enough to prove the lemma for $k=2$. Let $f_1:V(DPow(G_1))\rightarrow V(DPow(H_1))$ and $f_2:V(DPow(G_2))\rightarrow V(DPow(H_2))$ be two isomorphisms from $DPow(G_1)$ to $DPow(H_1)$ and from $DPow(G_2)$ to $DPow(H_2)$ respectively. Let us define $f:V(DPow(G))\rightarrow  V(DPow(H))$ as $f((u_1,u_2))=(f_1(u_1),f_2(u_2))$. Since $f_1$ and $f_2$ are bijections, so is $f$. We show that $f$ preserves the edge relations between $DPow(G)$ and $DPow(H)$. Let us consider an edge $((u_1,u_2),(v_1,v_2))$ from $E(DPow(G))=E(DPow(G_1)\boxtimes DPow(G_2))$ (This equality follows from \Cref{directed power graph of direct product is strong product}.). Now from \Cref{Definition:strong product} and the facts that $f_1$ and $f_2$ are isomorphisms from $DPow(G_1)$ to $ DPow(H_1)$ and from $DPow(G_2)$ to $DPow(H_2)$ respectively, we have the following three scenarios:
    \begin{enumerate}
        \item $u_1=v_1$ and $(u_2,v_2)\in E(DPow(G_2))$. In this case, $f_1(u_1)=f_1(v_1)$ and $(f_2(u_2),f_2(v_2))\in E(DPow(H_2))$.
        \item $u_2=v_2$ and $(u_1,v_1)\in E(DPow(G_1))$. In this case, $f_2(u_2)=f_2(v_2)$ and $(f_1(u_1),f_1(v_1))\in E(DPow(H_1)$. 
        \item $(u_1,v_1)\in E(DPow(G_1))$ and $(u_2,v_2) \in E(DPow(G_2))$. In this case, $(f_1(u_1),f_1(v_1))\in E(DPow(H_1))$ and $(f_2(u_2),f_2(v_2))\in E(DPow(H_2))$.
    \end{enumerate}
    In all the three scenarios, by \Cref{Definition:strong product}, we have $((f_1(u_1),f_2(u_2)),(f_1(v_1),f_2(v_2))) \in E(Dpow(H_1)\boxtimes DPow(H_2))$. Therefore by \Cref{directed power graph of direct product is strong product}, $(f((u_1,u_2)),f((v_1,v_2))) \in E(DPow(H))$. 
 
    For the other direction, let $f:V(DPow(G))\rightarrow V(DPow(H))$ be an isomorphism between $DPow(G)$ and $DPow(H)$. Consider the sets $A_i=\{ (u,v) \in V(DPow(G)): out$-$deg((u,v))$ divides $|G_i| \}$ and $B_i=\{ (u',v') \in V(DPow(H)): out$-$deg((u',v'))$ divides $|H_i| \}$  for $i=1,2$. Recall that here the out-degree of a vertex is the order of the element and $o((u,v))=o(u)\cdot o(v)$. Since $|G_1|\times |G_2|=|G|$ and $gcd(|G_1|,|G_2|)=1$, it is easy to see that $A_i$ indeed corresponds to $V(DPow(G_i))$ for $i=1,2$. 
    Also, the subgraph of $DPow(G_1\times G_2)$ induced by $A_i$ corresponds to $DPow(G_i)$ for $i=1,2$. Similarly, we can see that $B_i$ corresponds to $V(DPow(H_i))$ and the subgraph induced by $B_i$ corresponds to $DPow(H_i)$ for $i=1,2$. Now the isomorphism $f$ preserves the out-degrees of the vertices. We denote the restriction of $f$ on $A_i$ by $f_i$. Then it is easy to see that $f_i$ is a bijection from $A_i$ to $B_i$. Also, there is only one element, namely the identity element, of out-degree $1$ (self-loop) and common in both $A_1$ and $A_2$. Also, that element is unique in $DPow(G)$.  One can see that $f_i:V(DPow(G_i))\rightarrow V(DPow(H_i))$ is an isomorphism between $DPow(G_i)$ and $DPow(H_i)$, for all $ i=1,2$.
   \end{proof}


        


\subsection{Proof of Theorem \ref{complete power graph}}\label{Proof of Theorem 18}
\powCliquelemma*
\begin{proof}

    If $G=\mathbb{Z}_{p^m}$, then $\Gamma$ is complete.

    For the other direction, if $G$ is not cyclic then the cyclic cover of $G$ has at least two maximal cyclic subgroups $\langle g_1 \rangle$ and $\langle g_2 \rangle$. If $x$ and $y$ are generators of $\langle g_1 \rangle$ and $\langle g_2 \rangle$ respectively, then they are not adjacent. Therefore, we can assume that $G=\mathbb{Z}_m$ for some $m$ such that $m$ is not a prime power. Now, let $v$ be a generator of $\mathbb{Z}_m$. Let $u_1$ and $u_2$ be two non-generator elements of $G$ with different orders. Then, by \Cref{generators are the only twins}, $u_1$ and $u_2$ are not closed-twins in $\Gamma_v=\Gamma$. Therefore, $\Gamma$ is not complete.
\end{proof}

    



\end{document}